\newcommand{\pull}{$\mathcal{PULL}$}
\newcommand{\threestate}{$\textrm{U-Dynamics}$}
\newcommand{\threestateprocess}{$\textrm{U-Process}$}
\newcommand{\pruned}{$\textrm{U-Pruned-Process}$}
\newcommand*{\hmajority}[1]{#1-\textsc{Majority}\xspace}
\newcommand*{\twochoice}{2-\textsc{Choices}\xspace}
\newcommand{\Prob}[2]{\mathbf{P}_{#1} \left( #2 \right)}
\newcommand{\Expec}[2]{\mathbf{E}_{#1} \left[ #2 \right]}
\newcommand{\Ex}[2]{\mathbf{E}_{#1} \left[ #2 \right]}
\newcommand{\cond}{\vert}
\newcommand{\abs}[1]{\vert #1 \vert}
\newcommand{\card}[1]{\vert #1 \vert}
\newcommand{\bigO}{\mathcal{O}}
\newcommand{\bx}{\mathbf{x}}
\newcommand{\x}{\textbf{x}}
\newcommand{\X}{\textbf{X}}
\newcommand{\by}{\mathbf{y}}
\newcommand{\y}{\textbf{y}}
\newcommand{\Y}{\textbf{Y}}
\newcommand{\bz}{\mathbf{z}}
\newcommand{\colA}{\texttt{Alpha}}
\newcommand{\colB}{\texttt{Beta}}
\newtheorem{theorem}{Theorem}[section]
\newtheorem{corollary}{Corollary}
\newtheorem{claim}[theorem]{Claim}
\newtheorem{lemma}[theorem]{Lemma}
\renewcommand{\leq}{\leqslant}
\renewcommand{\geq}{\geqslant}
\renewcommand{\le}{\leqslant}
\renewcommand{\ge}{\geqslant}
\renewcommand{\epsilon}{\varepsilon}
\begin{document}

\title{\textbf{A Tight Analysis of the Parallel Undecided-State Dynamics with
Two Colors}}
%\titlenote{Some results of this work were presented as Brief Announcement at \textit{DISC'17}.}

\author{
  Andrea Clementi, Francesco Pasquale, Luciano Gual\`a \\
  Universit\`a di Roma ``Tor Vergata''\\
  \texttt{[clementi,pasquale,guala]@mat.uniroma2.it}
  \and
  Emanuele Natale\\
  MPII - Saarbr\"ucken, DE\\
  \texttt{enatale@mpi-inf.mpg.de}
  \and
  Mohsen Ghaffari\\
  ETH Z\"urich\\
  \texttt{ghaffari@inf.ethz.ch}
  \and
  Giacomo Scornavacca\\
  UnivAQ - L'Aquila, IT\\
  \texttt{giacomo.scornavacca@graduate.univaq.it}
}

\iffalse
\author{Andrea Clementi}
%\authornote{}
\affiliation{%
  \institution{Universit\`a di Roma ``Tor Vergata''}
  %\streetaddress{}
  \city{Roma}
  \state{Italy}
  %\postcode{}
}
\email{clementi@mat.uniroma2.it}

\author{Mohsen Ghaffari}
%\authornote{}
\affiliation{%
  \institution{ETH Z\"urich}
  %\streetaddress{}
  \city{Z\"urich}
  \state{Switzerland}
  %\postcode{}
}
\email{ghaffari@inf.ethz.ch}

\author{Luciano Gual\`a}
%\authornote{}
\affiliation{%
  \institution{Universit\`a di Roma ``Tor Vergata''}
  %\streetaddress{}
  \city{Roma}
  \state{Italy}
  %\postcode{}
}
\email{guala@mat.uniroma2.it}

\author{Emanuele Natale}
%\authornote{}
\affiliation{%
  \institution{Max Planck Institute for Informatics}
  %\streetaddress{}
  \city{Saarbr\"uken}
  \state{Germany}
  %\postcode{}
}
\email{enatale@mpi-inf.mpg.de}

\author{Francesco Pasquale}
%\authornote{}
\affiliation{%
  \institution{Universit\`a di Roma ``Tor Vergata''}
  %\streetaddress{}
  \city{Roma}
  \state{Italy}
  %\postcode{}
}
\email{pasquale@mat.uniroma2.it}

\author{Giacomo Scornavacca}
%\authornote{}
\affiliation{%
  \institution{Universit\`a dell'Aquila}
  %\streetaddress{}
  \city{L'Aquila}
  \state{Italy}
  %\postcode{}
}
\email{scornavacca@graduate.univaq.it}

% The default list of authors is too long for headers.
\renewcommand{\shortauthors}{Clementi et al.}

\fi

%\keywords{Distributed Consensus, Dynamics, Gossip model, Self-stabilization, Markov chains.}

\maketitle

\begin{abstract}
The \emph{Undecided-State Dynamics} is a well-known protocol for distributed
consensus. We analyze it in the parallel \pull\ communication model on the
complete graph for the \emph{binary} case (every node can either support one of
\emph{two} possible colors, or be in the undecided state). 

An interesting open question is whether this dynamics \emph{always} (i.e.,
starting from an arbitrary initial configuration) reaches consensus
\emph{quickly} (i.e., within a polylogarithmic number of rounds) in a complete
graph with $n$ nodes.  Previous work in this setting only considers initial
color configurations with no undecided nodes and a large \emph{bias} (i.e.,
$\Theta(n)$) towards the majority color.

In this paper we present an \textit{unconditional} analysis of the
Undecided-State Dynamics that answers to the above question in the affirmative.
We prove that, starting from \textit{any} initial configuration, the process
reaches a monochromatic configuration within $\bigO(\log n)$ rounds, with high
probability. This bound turns out to be tight. Our analysis also shows that, if the initial configuration
has bias $\Omega(\sqrt{n\log n})$, then the dynamics converges toward the
initial majority color, with high probability.

\end{abstract}

%\thispagestyle{empty}
%\newpage

\section{Introduction}
Simple local mechanisms for \emph{Consensus} problems in distributed systems
recently received a lot of attention~\cite{AAE07, AD, DGMSS11, Doty14,
MosselNT14, PVV09}. In one of the basic versions of the consensus problem the
system consists of anonymous entities (\textit{nodes}) each one initially
supporting a \emph{color} out of a finite set of colors $\Sigma$. Nodes run
elementary operations and interact by exchanging messages. A \emph{Consensus
Protocol} is a local procedure that makes the system converge to a
\emph{monochromatic} configuration, where all nodes support the same color. The
consensus has to be \emph{valid}, i.e., the ``winning'' color must be one of
those initially supported by at least one node. The consensus has to be
\textit{stable}, i.e., once the system reaches a monochromatic configuration,
it must remain in that configuration forever, unless some external event takes
place.  Stable consensus is a fundamental building-block that plays an
important role in coordination tasks and self-organizing behavior in population
systems~\cite{cardelli2012cell,C12,Doty14,NB11}.

We study the consensus problem in the \emph{\pull\ communication model}
\cite{CHKM12, DGHILSSST87, KDG03} where, at every round, each active node of a
communication network contacts one neighbor uniformly at random to pull
information. A natural consensus protocol in this model is the
\emph{Undecided-State Dynamics}\footnote{In some previous papers~\cite{PVV09}
on the binary case ($\card{\Sigma} =2$), this protocol has been also called the
\emph{Third-State Dynamics}. We here prefer the term ``undecided'' since it
also holds for the non-binary case and, moreover, the term well captures the
role of this additional state.} (for short, the \emph{\threestate}) in which
the state of a node can be either a color or the \emph{undecided state}. When a
node is activated, it pulls the state of a random neighbor and updates its
state according to the following updating rule (see
Table~\ref{tab:The-update-rule}): If a colored node pulls a different color
from its current one, then it becomes undecided, while in all other cases it
keeps its color; moreover, if the node is in the undecided state then it will
take the state of the pulled neighbor.
 
The \threestate\ has been previously studied in both
\emph{sequential}~\cite{AAE07} and \emph{parallel}~\cite{BecchettiCNPS15}
models: Informally, in the former only one random node is activated at every
round and it updates its state according to the local rule, while in the latter
all nodes are activated at every round and they update their state,
synchronously. 

As for the sequential model\footnote{\cite{AAE07} in fact considers the
\emph{Population-Protocol} model which is, in our specific context, equivalent
to the sequential \pull\ model.}, \cite{AAE07} provides an unconditional
analysis showing (among other results) that the \threestate\ solves the
\emph{binary} consensus problem (i.e., when $\card{\Sigma} = 2$) in the
complete graph with $n$ nodes within $\bigO(n\log n)$ activations (and, thus,
with an $\bigO(\log n)$ work per node), \emph{with high
probability}\footnote{As usual, we say that an event $\mathcal{E}_n$ holds
\emph{w.h.p.} if $\Prob{}{\mathcal{E}_n}\ge 1 - n^{-\Theta(1)}$.}.  We remark
that the stochastic process induced by the parallel dynamics significantly
departs from the one induced by the sequential dynamics. As a simple evidence
of such qualitative differences, observe that, starting from a configuration
with no undecided nodes, in the parallel case the system might end up in the
non-valid configuration where all nodes are undecided (this would happen if,
for example, at the first round every node pulled a node with the other color).
On the other hand, it is easy to see that in the sequential case the process
always ends up in a monochromatic configuration with no undecided nodes, unless
it starts from a configuration with all nodes undecided. The crucial difference
lies in the random number of nodes that may change color at every round: In the
sequential model, this is at most one\footnote{This number actually becomes $2$
if the sequential communication model activates a random edge per round, rather
than one single node~\cite{AAE07}.}, while in the parallel one, \emph{all}
nodes may change state in one round and, for most phases of the process, the
expected number of changes is indeed linear in $n$. The above difference is one
of the main reasons why no general techniques are currently available to extend
any quantitative analysis for the sequential process to the corresponding
parallel one (and vice versa): The parallel process turns out to be a
non-reversible Markov chain having a transition di-graph of very large degree.
The analysis in~\cite{AAE07} strongly uses the fact that only one node can
change state in one round in order to derive a suitable supermartingale
argument to bound the stopping time of the process. It thus fully covers the
case of sequential interaction models, but it is not helpful to understand the
evolution of the \threestate\ process on any interaction model in which the
number of nodes that may change state in one round is not bounded by some
absolute constant.

As for the parallel \pull\ model, while it is easy to verify that the
\threestate\ achieves consensus in the complete graph (with high probability),
the convergence time of this dynamics is still an interesting open issue, even
in the binary case. Indeed, in~\cite{BecchettiCNPS15} the authors   analyze the
\threestate\ in the parallel \pull\ model on the complete graph for any number 
$k = o(n^{1/3})$ of colors.
%The analysis in \cite{BecchettiCNPS15} considers this dynamics as a protocol
%for \emph{Plurality Consensus} \cite{AD,AAE07,MertziosNRS14}, a variant of
%Consensus, where the goal is to reach consensus on the color that was initially
%supported by the \emph{plurality} of the nodes. 
However, their analysis requires the initial configuration to have a
relatively-large \emph{bias} $s = c_1-c_2$ between the size $c_1$ of the
(unique) initial plurality and the size $c_2$ of the second-largest color. More
in details, in~\cite{BecchettiCNPS15} it is assumed that $c_1 \geq \alpha c_2$,
for some absolute constant $\alpha >1$ and, thus, this condition for the binary
case would result into requiring a very-large initial bias, i.e., $s =
\Theta(n)$. This analysis clearly does not show that the \threestate\
efficiently solves the binary consensus problem, mainly because it does not
manage \emph{balanced} initial configurations.

\subsubsection*{Our results}  
We prove that, starting from any color 
configuration\footnote{Our analysis  also considers initial configurations 
with undecided nodes.} on the complete graph, the \threestate\ 
reaches a monochromatic configuration (thus consensus) within $\bigO(\log n)$
rounds, with high probability.  This bound is tight since, for some 
(in fact, a large number of) initial configurations, the process requires 
$\Omega(\log n)$ rounds to converge.

Not assuming a large initial bias of the majority color significantly 
complicates the analysis. Indeed, the major technical issues arise from the 
analysis of \textit{balanced} initial configurations where the system ``needs''
to \emph{break symmetry} without having a strong expected drift towards any 
color. Previous analysis of this phase consider either 
\textit{sequential} processes of interacting particles that can be modeled as
\textit{birth-and-death} chains~\cite{AAE07} or parallel processes whose local
rule is fully symmetric w.r.t. the states/colors of the nodes (such as majority
rules)~\cite{becchetti2016stabilizing,DGMSS11}. The \threestate\ process falls 
neither in the former nor in the latter scenario: It works in parallel rounds 
and the role of the undecided nodes makes the local rule not symmetric.
We believe this issue has a \emph{per-se} scientific 
interest since    
  symmetry-breaking phenomena yielded 
by simple and local mechanisms plays a central role in 
 key aspects of  population systems \cite{BK08} and, more generally, in 
the emerging field of natural algorithms \cite{C12}.

Informally speaking, in Section~\ref{sec:break} we deal with almost-balanced
starting configurations. We show that the analysis of this
\textit{symmetry-breaking} phase essentially reduces to the analysis of a 
specific regime where the number $q$ of undecided nodes remains a suitable
constant fraction of $n$ \emph{until} the magnitude of the bias $s$ reaches 
$\Omega(\sqrt{n\log n})$: In other words, during this regime, with very high
probability the system never jumps to almost-balanced configurations having 
either too many or too few undecided nodes. This fact is crucial   
for two main  reasons: Along this regime, (i) the \emph{variance} of the bias $s$ is
large (i.e. $\Theta(n)$) and (ii) whenever the bias $s$ is $\Omega(\sqrt n)$,
its drift turns out to be \emph{exponential} with non-negligible, increasing 
probability (w.r.t. $s$ itself). Then, by devising a coupling to a ``simplified''
pruned process, we can apply (a suitable variant) of a general 
Lemma~\cite{DGMSS11}  that provides a
logarithmic bound on the hitting time of  Markov chains satisfying   Properties
(i) and (ii) above.
 
The symmetry-breaking phase terminates when the \threestateprocess\ reaches 
some configuration having a bias $s = \Omega( \sqrt{n \log n})$. Then (see 
Section~\ref{sec:majority}) we prove that, starting from \emph{any} 
configuration having that bias, the process reaches consensus within 
$\bigO(\log n)$ rounds, with high probability. Even though our analysis of 
this ``majority'' part of the process is based on standard concentration 
arguments, it must cope with some \emph{non-monotone} behavior of the key 
random variables (such as the bias and the number of undecided nodes at the 
next round): Again, this is due to the non-symmetric role played by the
undecided nodes. A good intuition about this ``non-monotone'' process can be 
gained by looking at the mutually-related formulas giving the expectation of 
such key random variables (see Equations~\eqref{exp:a}-\eqref{exp:s}). Our 
refined analysis shows that, during this majority phase, the winning color 
never changes and, thus, the \threestate\ also ensures Plurality Consensus in
logarithmic time whenever the initial bias is $s = \Omega(\sqrt{n \log n})$.

Interestingly enough, we also show that configurations with 
$s = \bigO(\sqrt{n})$ exist so that the system may converge toward the 
minority color with non-negligible probability.

\subsubsection*{Further motivation and related work} 

\noindent
\textbf{On the \threestate.}
The interest  in the \threestate\ arises in fields beyond the borders of 
Computer Science and it seems to have a key-role in important biological 
processes modeled as so-called chemical reaction 
networks~\cite{cardelli2012cell,Doty14}. 
For such reasons, the convergence time of this dynamics has been analyzed on 
different communication models~\cite{AD12, AAE07, BD13, BTV09, CER14, DGMSS11,
DV12, MertziosNRS14, PVV09}.  \\
As previously mentioned, the \threestate\ has been analyzed in the parallel 
\pull\ model in~\cite{BecchettiCNPS15} and their results concern the evolution
of the process for the multi-color case when there is a significant initial 
bias (as a protocol for plurality consensus).

\noindent
As for the sequential model, the \threestate\ has been introduced and analyzed
in~\cite{AAE07} on the complete graph.
They prove that this dynamics, with high probability, converges to a valid 
consensus within $\bigO(n\log n)$ activations and, moreover, it converges to 
the majority whenever the initial bias is $\omega\left(\sqrt{n \log n}\right)$. 

\noindent  
Still concerning the sequential model, \cite{MertziosNRS14} recently analyzes,
besides other protocols, the \threestate\ in arbitrary graphs where in the initial
configuration each node samples uniformly at random one out of two colors.
In   this  (average-case) setting,
 they prove that the system converges to the initial majority color with higher probability than the initial minority one.
 They also give   results  for special classes of graphs where the minority can win with large probability if the initial configuration is
 chosen in a suitable way.
 Their proof for this result relies on an exponentially-small upper bound on
 the probability that a certain minority can win in the complete graph 
 (see~\cite{MertziosNRS14} for more details).\\
  In  \cite{BD13,BTV09,DV12,PVV09}, the same dynamics for the binary case has been analyzed 
in   other  sequential communication  models. 

\smallskip
\noindent
\textbf{On some other   consensus dynamics.}
Recently, further   simple    consensus protocols  have been  deeply  analyzed in several papers, thus witnessing
the high interest of the  scientific community on  such processes \cite{AAE07,BCNPST14,BGKM16,cardelli2012cell,CER14,CERRS15,DGMSS11,PVV09}.
%Due to the lack of space, we here  mention only those  results that are   closer to ours.

The parallel \hmajority{3} is a protocol where at every round, each node picks 
the colors
of three random neighbors and updates its color according to the majority rule (taking the first one or a random
one to break   ties). The
authors of~\cite{BCNPST14} assume that the bias is $\Omega({ \min\{
\sqrt{2k}, {(n / \log n)}^{1/6} \} \cdot \sqrt{ n \log n } })$. Under this
assumption, they prove that consensus is reached with high probability in
$\bigO({\min\{ k, {(n / \log n)}^{1/3} \} \cdot\log n})$ rounds,
and that this is tight if $k \leq {(n / \log n)}^{1/4}$. The first  result without
bias~\cite{becchetti2016stabilizing} restricts the number of initial colors to $k =
\bigO({n^{1/3}})$. Under this assumption, they prove that
\hmajority{3} reaches consensus with high probability in
$\bigO({(k^2 {(\log n)}^{1/2}+ k \log n) \cdot (k + \log n)})$
rounds. Very recently, such result has been generalized to the whole range of $k$ in \cite{PetraAL17}. 

In  \cite{DGMSS11} the authors   provide an   analysis 
of the \emph{3-median} rule, in which every node   updates its  value to 
the median of its random sample.
They show that this dynamics   converges to  an almost-agreement configuration (which is 
even a good approximation of the global  median) within 
$\bigO(\log k \cdot \log\log  n + \log n)$ rounds, w.h.p.  
It turns out that, in  the binary case, the median rule is equivalent to    
the \twochoice dynamics, a  
  variant of \hmajority{3},  thus    their result implies that this   
 is   a stabilizing consensus  protocol   with $\bigO (\log n)$ convergence time. 
 As mentioned earlier, our analysis borrows a  hitting-time  bound on general Markov chains from 
 \cite{DGMSS11}.  
 
 Very recently,  \cite{GL17} provides  an optimal 
  bound $\Theta(k \log n)$ for the  \twochoice dynamics on the complete graph even under some 
  dynamic adversary.
In~\cite{CER14, CERRS15}, the authors consider  the \twochoice dynamics  for plurality consensus in the binary case (i.e. $k=2$).  
   For random $d$-regular graphs,~\cite{CER14} proves that all
nodes agree on the majority  color in $\bigO(\log n)$ rounds, provided that
the bias is $\omega( n \cdot \sqrt{1/d + d/n} )$. The same holds for
arbitrary $d$-regular graphs if the bias is $\Omega(\lambda_2 \cdot n)$,
where $\lambda_2$ is the second largest eigenvalue of the transition matrix. In
\cite{CERRS15}, these results are extended to general expander graphs.

\section{Preliminaries}\label{sec:preli}
We analyze the parallel version of the dynamics called \threestate\ in the 
(uniform) \pull\ model on the complete graph:
Starting from an initial configuration where every node supports a color, i.e. 
a value from a set $\Sigma$ of $k$ possible colors\footnote{W.l.o.g. we can define $\Sigma = [k]$ where $[k] = \{1,2,\cdots,k\}$}, at every round, each node $u$
pulls the color of a randomly-selected neighbor $v$. If the color of node $v$
differs from its own color, then node $u$ enters in an \textit{undecided}
state (an extra state with no color). 
When a node is in the undecided state and pulls a color, it gets that color. 
Finally, a node that pulls either an undecided node or a node with its 
own color remains in its current state.

\begin{table}[!ht]
\centering
\begin{tabular}{|c||c|c|c|}
\hline 
$u\big\backslash v$ & undecided & color $i$ & color $j$\tabularnewline
\hline 
\hline 
undecided & undecided & $i$ & $j$\tabularnewline
\hline 
$i$ & $i$ & $i$ & undecided\tabularnewline
\hline 
$j$ & $j$ & undecided & $j$\tabularnewline
\hline 
\end{tabular}
\caption{\label{tab:The-update-rule} The update rule of the  \threestate\
where $i,j\in\left[k\right]$ and $i\neq j$.} 
\end{table}

\noindent
In this paper we consider the case in which there are two possible colors (say
color \colA\ and color \colB). Let us name $\mathcal{C}$ the space of all 
possible configurations and observe that, since the graph is complete, 
a configuration $\x \in \mathcal{C}$ is uniquely determined by fixing the
number of \colA-colored nodes and the number of \colB-colored ones, 
say $a(\x)$ and $b(\x)$, respectively.

It is convenient to give names also to two other quantities that will appear 
often in the analysis: The number $q(\x) = n - a(\x) - b(\x)$ of undecided 
nodes and the difference $s(\x) = a(\x) - b(\x)$ called the \textit{bias} of $\x$.
Notice that any two of the quantities 
$a(\x), b(\x), q(\x)$, and $s(\x)$ uniquely determine the configuration.
When it will be clear from the context, we will omit $\x$ and write $a, b, q$, 
and $s$ instead of $a(\x), b(\x), q(\x)$, and $s(\x)$.

Observe that the \threestate\ defines a finite-state Markov chain 
$\{\X_t \}_{t \geq 0}$ with state space $\mathcal{C}$ and three absorbing states,
namely, $q = n$, $a = n$, and $b = n$. We call \emph{\threestateprocess}\ the random 
process obtained by applying the \threestate\ starting at a given state.
Once we fix the  configuration $\x$ at round
$t$ of the process, i.e. $\X_t = \x$, we use the capital letters $A,B,Q$, and $S$ to 
refer to the random variables $a(\X_{t+1}), b(\X_{t+1}), q(\X_{t+1})$, $s(\X_{t+1})$. 

From the definition of \threestate\ it is easy to calculate the following 
expected values (see also Section~3 in \cite{BecchettiCNPS15}):
\begin{align}
\Ex{}{A \, \vert \, \X_t = \x} & = a \left(\frac{a + 2q}{n}\right) \, ,
\label{exp:a} \\
\Ex{}{Q \, \vert \, \X_t = \x} & = \frac{q^2 + 2ab}{n} \, ,
\label{exp:q} \\
\Ex{}{S \, \vert \, \X_t = \x} & = \frac{a(a + 2q)}{n} - \frac{b(b + 2q)}{n} 
= s\left(1 + \frac{q}{n}\right) \, .
\label{exp:s}
\end{align}

\subsection{The expected evolution of the \threestateprocess} \label{ssec:expover}
Equations~\eqref{exp:a}-\eqref{exp:s} can be used to have a preliminary 
intuitive idea on the expected evolution of the \threestateprocess.
From~\eqref{exp:s} it follows that the bias $s$ increases exponentially, in
expectation, as long as the number $q$ of undecided nodes is a constant 
fraction of $n$ (say, $q \geqslant \delta n$, for some positive constant 
$\delta$). 
By rewriting~\eqref{exp:q} in terms of $q$ and $s$ we have that
\begin{align}
\Ex{}{Q \, \vert \, \X_t = \x} =& \frac{q^2 + 2ab}{n} 
= \frac{2q^2 + (n-q)^2 - s^2}{2n}\nonumber\\
\geqslant& \frac{n}{3} - \frac{s^2}{2n},\label{eq:explbq}
\end{align}
where in the inequality we used the fact that the minimum of $2q^2 + (n-q)^2$
is achieved at $q = \frac n3$ and its value is $ \frac 23 n^2$. From~\eqref{eq:explbq} it 
thus follows that, as long as the magnitude of the bias is smaller than 
a constant fraction of $n$ (say $s < \frac 23 n$), the expected number of undecided
nodes will be larger than a constant fraction of $n$ at the next round (say, 
$\Expec{}{Q \, \vert \, \X_t = \x} \geqslant \frac n9$). 

When the magnitude of the bias $s$ reaches $\frac 23 n$, it is easy to see that the 
expected number of nodes with the \textit{minority} color decreases 
exponentially. Indeed, suppose w.l.o.g. that \colB\ is the minority color and 
rewrite~\eqref{exp:a} for $B$ and in terms of $b$ and $s$. We get
\begin{equation}\label{eq:expdecrb}
\Expec{}{B \, \vert \, \X_t = \x} = b \left(\frac{b + 2q}{n}\right)
= b \left( 1-\frac{2s+3b-n}{n} \right) \, .
\end{equation}
Hence, when $s > \frac 23 n$ we have that 
$\Expec{}{B \, \vert \, \X_t = \x} \leqslant \frac 23 b$.

The above sketch of the analysis \textit{in expectation} would suggest that
the process should end up in a monochromatic configuration within 
$\mathcal{O}(\log n)$ rounds. Indeed, in Theorem~\ref{thm:mainmajority} we 
prove that this is what happens with high probability (w.h.p., from now on) 
when the process starts from a configuration that already has some bias, namely 
$s = \Omega(\sqrt{n \log n})$.

When the process starts from a configuration with a smaller bias, the analysis
\textit{in expectation} looses its predictive power. As an extreme example, 
observe that when $a = b = \frac n3$ the system is ``in equilibrium'' according to
\eqref{exp:a}-\eqref{exp:s}. However, the equilibrium is ``unstable'' and the
symmetry is broken by the \textit{variance} of the process (as long as 
$s = o(\sqrt n)$) and by the increasing drift towards majority (as soon as $s > \sqrt n$). 
As mentioned in the
Introduction, the analysis of this \textit{symmetry-breaking} phase is the key 
technical contribution of the paper and it will be described in 
Section~\ref{sec:break}. This analysis will show that, starting from any initial
configuration, the system reaches a configuration where the magnitude of the 
bias is $\Omega(\sqrt{n \log n})$ within $\mathcal{O}(\log n)$ rounds, w.h.p.

\section{Main results and the digraph of  the \threestateprocess'
phases}\label{sec:overview}
As informally discussed in the introduction, we prove  the two following
results characterizing the evolution of the \threestate\ on the synchronous
\pull\ model in the complete graph.

\begin{theorem}[Consensus]\label{thm:mainselfstabilizing}
Let  the \threestateprocess\ start from any configuration in $\mathcal{C}$.
Then the process converges to a (valid) monochromatic configuration   within
$\bigO(\log n)$ rounds, w.h.p.  Furthermore, if the initial configuration has
at least one colored node (i.e. $q \leq n-1$), then the process converges to a
configuration such that $\abs{s} = n$, w.h.p.
\end{theorem}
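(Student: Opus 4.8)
The plan is to prove the theorem by composing the two phase results established later in the paper: the symmetry-breaking analysis of Section~\ref{sec:break} and the majority analysis of Section~\ref{sec:majority} (Theorem~\ref{thm:mainmajority}). I first dispose of the degenerate case $q = n$: the \threestateprocess\ starts in, and never leaves, the all-undecided absorbing state, so there is nothing to prove — the second claim is vacuous since its hypothesis $q \leq n-1$ fails, and the all-undecided configuration is itself the (degenerate) consensus with no initial color to validate. Hence from now on I assume $a + b \geq 1$, i.e.\ at least one colored node is present, and I directly target the stronger second statement $\abs{s} = n$; this immediately yields the first statement, since a configuration with $\abs{s}=n$ is monochromatic and (by the validity argument below) valid.

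The core argument splits the trajectory into two consecutive epochs. In the first epoch I invoke the main result of Section~\ref{sec:break}, which guarantees that from \emph{any} configuration with at least one colored node the process reaches, within $\bigO(\log n)$ rounds and w.h.p., a configuration whose bias satisfies $\abs{s} = \Omega(\sqrt{n \log n})$. Crucially, that same analysis certifies that throughout this epoch the number $q$ of undecided nodes stays a constant fraction of $n$, bounded away from both $0$ and $n$, so the process does not collapse into the invalid absorbing state $q = n$. I let $t_1 = \bigO(\log n)$ denote the (random) round at which this target bias is first attained and $\X_{t_1}$ the configuration reached there.

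In the second epoch I apply Theorem~\ref{thm:mainmajority}, which holds for \emph{any} configuration with bias $\Omega(\sqrt{n\log n})$: conditioned on $\X_{t_1}$, within a further $\bigO(\log n)$ rounds the process reaches, w.h.p., a monochromatic configuration with $\abs{s}=n$ whose winning color is the majority color of $\X_{t_1}$ (in particular $\abs{s}=n$ is reached, not $q=n$). To chain the two w.h.p.\ statements I would invoke the strong Markov property at $t_1$: since the second result is uniform over all admissible starting configurations, conditioning on $\X_{t_1}$ and taking a union bound over the two failure probabilities (each $n^{-\Omega(1)}$) shows that the whole two-epoch trajectory succeeds w.h.p., and the total round count is $\bigO(\log n) + \bigO(\log n) = \bigO(\log n)$. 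For validity I would observe that a node is colored $c$ at round $t$ only if it, or the neighbor it pulled, was colored $c$ at round $t-1$; hence a color present at round $t$ is already present at round $t-1$, and by induction the final winning color was present in the initial configuration, so the reached consensus is valid.

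The only place where care is needed — and where I expect the main (though modest) obstacle to lie — is the interface between the two epochs: the bias threshold $\Omega(\sqrt{n\log n})$ guaranteed at the exit of the first epoch must meet the bias hypothesis required at the entrance of the second, so the hidden constants in the two $\sqrt{n\log n}$ bounds have to be reconciled. This is purely a matter of fixing constants consistently across Sections~\ref{sec:break} and \ref{sec:majority}. Everything else — the split into two phases, the Markovian chaining, the union bound, and the round accounting — is routine once those two technical results are in hand.
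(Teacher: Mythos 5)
Your proposal is correct and follows essentially the same route as the paper: dispose of the absorbing all-undecided case $q=n$, use the symmetry-breaking results of Section~\ref{sec:break} (Lemmas~\ref{lemma:starters} and~\ref{lemma:symmetrybreaking}) to reach a configuration with bias $\Omega(\sqrt{n\log n})$ in $\bigO(\log n)$ rounds w.h.p., and then conclude by applying Theorem~\ref{thm:mainmajority} from that configuration. The only additions beyond the paper's outline (the explicit validity/induction argument and the remark about reconciling the hidden constants) are harmless refinements, not a different approach.
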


\begin{theorem}[Plurality consensus]\label{thm:mainmajority}
Let $\gamma$ be any positive constant. Assume that the \threestateprocess\
starts at any biased configuration such that $\abs{s} \geq \gamma\sqrt{n \log
n}$ and assume w.l.o.g. the majority color is \colA. Then the process converges
to the monochromatic configuration with $a = n$ within $\bigO(\log n)$ rounds,
w.h.p. Furthermore, the result is almost tight in a twofold sense: (i) An
initial configuration exists, with $\abs{s} = \Omega(\sqrt{n \log n})$, such
that the process requires $\Omega(\log n)$ rounds to converge w.h.p. and (ii)
there is an initial configuration with $\abs{s} = \Theta(\sqrt{n})$ such that
the process converges to the minority color with constant probability. 
\end{theorem}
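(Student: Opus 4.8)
The plan is to prove Theorem~\ref{thm:mainmajority} in three parts: the main convergence statement, then the two tightness claims (i) and (ii).

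For the \textbf{main convergence statement}, I would follow the expected-evolution sketch in Section~\ref{ssec:expover} and upgrade each ``in expectation'' step to a ``with high probability'' one via concentration. The starting point is Equation~\eqref{exp:s}, which gives $\Ex{}{S \cond \X_t = \x} = s(1 + q/n)$, together with the lower bound~\eqref{eq:explbq} on the expected number of undecided nodes. The key observation is that $A$, $B$, and $Q$ are each sums of $n$ \emph{independent} indicator variables (one per node, since each node pulls its neighbor independently on the complete graph), so I can apply Chernoff/Hoeffding bounds. I would break the analysis into two phases matching the two regimes identified in the overview. \emph{Phase 1 (bias amplification):} as long as $|s| < \frac{2}{3}n$, inequality~\eqref{eq:explbq} guarantees $q = \Omega(n)$ w.h.p., so the multiplicative factor $(1 + q/n)$ in~\eqref{exp:s} is bounded away from $1$ by a constant; hence $|S|$ grows by a constant factor in expectation each round, and a concentration argument shows that $|s|$ crosses the $\frac{2}{3}n$ threshold within $\bigO(\log n)$ rounds w.h.p. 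Because the starting bias is already $|s| \geq \gamma\sqrt{n\log n} = \omega(\sqrt n)$, I would argue the deviations per round are of order $\sqrt{n\log n} = o(s)$, so the sign of the bias (and thus the identity of the majority color \colA) is preserved throughout---this is exactly the ``winning color never changes'' claim. \emph{Phase 2 (minority elimination):} once $|s| > \frac{2}{3}n$, Equation~\eqref{eq:expdecrb} gives $\Ex{}{B \cond \X_t = \x} \leq \frac{2}{3}b$, so the minority count contracts geometrically; concentrating $B$ around its mean drives $b \to 0$ within $\bigO(\log n)$ further rounds, after which the undecided nodes are absorbed into \colA\ (again by~\eqref{exp:a}) and $a \to n$.

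For \textbf{tightness claim (i)}, I would exhibit a configuration with $|s| = \Theta(\sqrt{n\log n})$ for which $\Omega(\log n)$ rounds are necessary. Since~\eqref{exp:s} shows the bias can grow by at most a factor $(1 + q/n) \leq 2$ per round, and consensus requires $|s| = n$, starting from $|s| = \Theta(\sqrt{n\log n})$ forces at least $\log_2(n / \sqrt{n\log n}) = \Omega(\log n)$ rounds even in the best case; I would make this rigorous by a matching upper bound on $S$ (via the same concentration, now bounding $S$ \emph{above}) to show the bias cannot reach $n$ too quickly w.h.p.

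For \textbf{tightness claim (ii)}---an initial configuration with $|s| = \Theta(\sqrt n)$ from which the process converges to the \emph{minority} color with constant probability---the idea is to choose $s = \Theta(\sqrt n)$ so that the per-round standard deviation of $S$, which is $\Theta(\sqrt n)$ by~\eqref{eq:explbq}-type variance estimates, is comparable to $s$ itself. Then in a single round there is a constant probability that $S$ flips sign, and conditioned on that flip (followed by the bias thereafter growing in the new direction) the process is driven toward the initial minority. I expect \textbf{this last step to be the main obstacle}: I must lower-bound the sign-flip probability by a constant, which requires an \emph{anti-concentration} (central-limit-type or Berry--Esseen) argument rather than the one-sided Chernoff bounds used elsewhere, and I must verify that $q$ stays $\Omega(n)$ so that the variance genuinely dominates at the chosen scale. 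The asymmetric role of the undecided state, flagged repeatedly in the introduction, means I cannot simply invoke symmetry between \colA\ and \colB, so I would carefully track the joint evolution of $s$ and $q$ through Equations~\eqref{exp:a}-\eqref{exp:s} during this critical first round.
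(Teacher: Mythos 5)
Your plan for the main convergence statement and for tightness claim (i) is essentially the paper's own argument: a bias-amplification regime driven by $\Ex{}{S \cond \X_t = \x} = s(1+q/n)$ with $q=\Omega(n)$ guaranteed by~\eqref{eq:explbq} (the paper's Claims~\ref{claim:q_lower_bound} and~\ref{claim:s_increase}, Lemma~\ref{lemma:ageofundecideds}), followed by geometric contraction of the minority via~\eqref{eq:expdecrb} (Claim~\ref{claim:b_decrease} and Lemma~\ref{lemma:convergence}), and for (i) the matching w.h.p.\ upper bound $S<2s$ iterated from a configuration with bias $n^{2/3}$ (Claim (ii) of Lemma~\ref{lemma:ageofundecideds}). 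Two small cautions there: the per-round deviation is $\Theta(\sqrt{n\log n})$, which is \emph{not} $o(s)$ at the boundary $s=\Theta(\sqrt{n\log n})$ --- you must take the Chernoff deviation to be a small constant fraction of $\gamma\sqrt{n\log n}$, as the paper does with $\lambda=\gamma\sqrt{n\log n}/72$; and your two phases silently skip the configurations with large bias but few undecided nodes (the paper's $H_5$ and $H_7$), where one extra round is needed to regenerate undecided nodes while showing the bias does not collapse (Claim~\ref{claim:s_does_not_decrease}).

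The genuine gap is in your plan for claim (ii). After the constant-probability sign flip, the new bias may be tiny (even $o(\sqrt n)$), i.e.\ deep in the symmetry-breaking regime, so you cannot argue that ``the bias thereafter grows in the new direction'' --- from such a configuration either color can still win, and conditioning on the bias growing toward \colB\ is precisely the event whose probability you have not bounded. Moreover, your stated reason for avoiding the fix is a misconception: the asymmetry flagged in the introduction is between \emph{colored} and \emph{undecided} states, but the update rule of Table~\ref{tab:The-update-rule} is exactly symmetric under swapping the two colors, so the configurations $(a,b,q)$ and $(b,a,q)$ generate mirror-image processes. This color-swap symmetry is what the paper uses to close the argument in Claim~\ref{claim:apxminbias}: once $S\le 0$ (which happens with constant probability from $q=n/3$, $a=n/3+\sqrt n$, $b=n/3-\sqrt n$, shown via reverse Chernoff bounds on the decomposition $A=A^q+A^a$, $B=B^q+B^b$), color \colB\ is the weak majority, hence wins with probability at least $1/2-o(1)$ given that some color wins w.h.p.\ (Theorem~\ref{thm:mainselfstabilizing}). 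Without invoking this symmetry your argument for (ii) does not terminate; with it, the rest of your sketch (anti-concentration via Berry--Esseen or reverse Chernoff, checking $q=\Theta(n)$) is exactly what the paper does.
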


\begin{figure}[tb]
\begin{center}
\includegraphics[scale=0.30]{./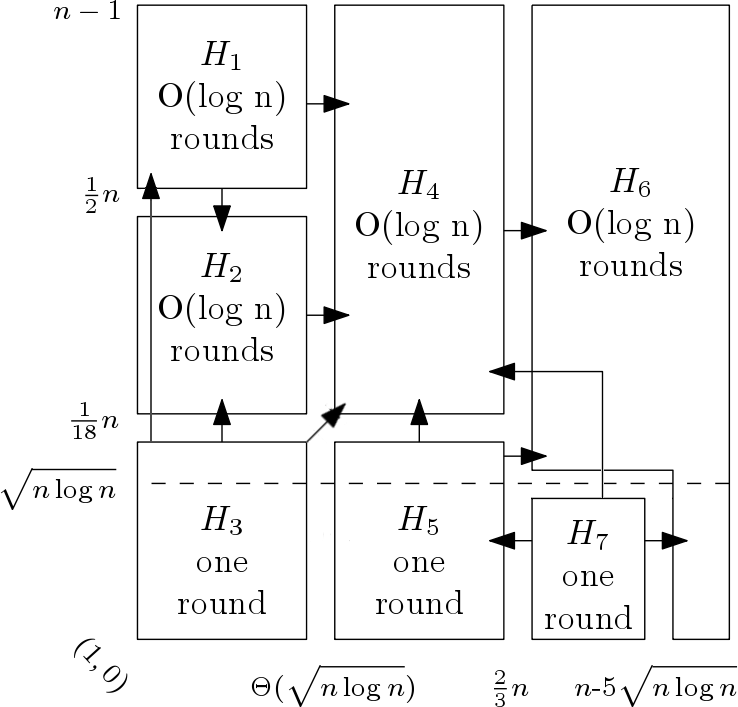}
\end{center}
\caption{$\{H_1,\dots,H_7\}$ is the considered partitioning of the
configuration space $\mathcal{C}$. On the x axis we represent the bias $s$, on
the y axis the number of undecided nodes $q$.  Missing arrows are transitions
that have negligible probabilities.} \label{fig:bigpic}
\end{figure}

\noindent
\textbf{Outline of the two proofs.} The two theorems above are consequences of
our refined  analysis\footnote{We remark that our   analysis  focuses on
asymptotic bounds and it does not definitely optimize the corresponding
constants: However, using technicalities and loosing readability, all such
constants can be largely improved.} of  the evolution of the
\threestateprocess.  The analysis   is organized into a set of possible process
phases,    each of them  is defined by specific ranges of parameters $q$ and
$s$. A high-level description of this structure  is shown  in Fig.\
\ref{fig:bigpic} where every  rectangular region represents a subset of
configurations with specific ranges of $s$ and $q$ and it is associated to a
specific phase.  In details, let $\gamma$ be any positive constant, then  the
regions are defined as follows:
%\andy METTIAMO UNA TABELLA? un itemize?
$H_1$ is  the set of configurations such that $s \leq \gamma\sqrt{n \log n}$
and $q \geq \frac{1}{2}n$;   $H_2$ is the set of configurations such that $s
\leq \gamma\sqrt{n \log n}$ and $\frac{1}{18}n \leq q \leq \frac{1}{2}n$; $H_3$
is  the set of configurations such that $s \leq \gamma\sqrt{n \log n}$ and $q
\leq \frac{1}{18}n$. $H_4$ is the set of configurations such that
$\gamma\sqrt{n \log n} \leq s \leq \frac{2}{3}n$ and $q \geq \frac{1}{18}n$; $
H_5$ is the set of configurations such that $\gamma\sqrt{n \log n} \leq s \leq
\frac{2}{3}n$ and $q \leq \frac{1}{18}n$; $H_7$ is the set of configurations
such that $ 	\frac{2}{3}n \leq s \leq n - 5\sqrt{n \log n}$ and $q \leq
\sqrt{n \log n}$.  $H_6$ is the set of configurations such that  $s \geq
\frac{2}{3}n$ minus $H_7$.
     
For each region, Fig.  \ref{fig:bigpic} specifies our  upper bound on the exit
time of the corresponding phase, while black arrows represent the phase
transitions which may happen with non-negligible probability. 
%Such results will be formally stated and proved in Sections \ref{sec:break}
%and \ref{sec:majority}.

As a first,  important remark, we point out that the scheme of Fig.\
\ref{fig:bigpic} can be seen as a directed acyclic graph $G$, having a single
sink $H_6$ and which is reachable from any other region.  We also remark that,
starting from certain configurations, a monochromatic state may be reached via
different paths in $G$.  This departs from previous analysis of consensus
processes \cite{BecchettiCNPS15,BCNPST14,DGMSS11} in which the phase transition
graph is essentially a path.
    
We now outline the proofs of the  two main results of this paper.
    
\smallskip \noindent
\underline{Outline of the Proof of Theorem \ref{thm:mainmajority}.}  Consider
an initial configuration $\x$ such that $s(\x)\ge \gamma\sqrt{n \log n}$, for
some positive constant $\gamma$, and assume w.l.o.g. that the majority color in
$\x$ is \colA. In Section \ref{sec:majority}, we first show (see Lemma
\ref{lemma:ageofundecideds}) that if the process lies in $H_4$ the bias grows
exponentially fast and thus the process enters in $H_6$ within $\bigO(\log n)$
rounds. Then we prove Lemma \ref{lemma:convergence}, which states that,
starting at any configuration in $H_6$, the process ends in the monochromatic
configuration where $a = n$ within $\bigO(\log n)$ rounds, w.h.p. Next, we show
that, starting from any configuration in $H_5$, the process falls into  $H_4$
or $H_6$ in one round (Lemma \ref{lemma:birthofundecidedsII}) and that,
starting from any configuration in $H_7$, the process falls into $H_4, H_5$ or
$H_6$ in one round (Lemma \ref{lemma:birthofundecidedsIII}).  As for the
tightness of the result stated in the second part of the theorem, we have that
the   lower bound (Claim (i)) on the convergence time is an immediate
consequence of Claim (ii) of Lemma \ref{lemma:ageofundecideds}, while the
second claim, concerning the lower bound on the initial bias, is proved in
Claim \ref{claim:apxminbias}.

\smallskip \noindent    
\underline{Outline of the Proof of Theorem \ref{thm:mainselfstabilizing}.}
We first observe  that the configuration where     all nodes are undecided
(i.e. $q = n$) is an absorbing state of the \threestateprocess\ and thus, for
this initial  configuration, Theorem  \ref{thm:mainselfstabilizing}  trivially
holds.  In Section \ref{sec:break}, we will show that, starting from any
\textit{balanced} configuration, i.e. with $s = o(\sqrt{n \log n}$), the
\threestateprocess\ ``breaks symmetry'' reaching a configuration $\y$ with
$s(\y) = \Omega(\sqrt{n \log n})$ within $\bigO(\log n)$ rounds, w.h.p. Then,
the thesis easily follows by applying Theorem \ref{thm:mainmajority} with
initial configuration $\y$.  As for the  symmetry-breaking phase, in
Lemma~\ref{lemma:starters} we prove that, if the process starts from  a
configuration in $H_1$ or $H_3$ (see Figure~\ref{fig:bigpic}), then after
$\bigO(\log n)$ rounds either the bias  between the two colors becomes
$\Omega(\sqrt{n \log n})$ or the system   reaches some    configuration in
$H_2$, w.h.p.  In Lemma~\ref{lemma:symmetrybreaking} we then prove that, if the
process is in a configuration in $H_2$, then the bias $s$ will  become
$\Omega(\sqrt{n \log n})$ within $\bigO(\log n)$ rounds, w.h.p.

\section{Symmetry breaking} \label{sec:break}

In this section we show that, starting from any  (almost-) balanced
configuration, i.e. those with $s = o(\sqrt{n \log n}$), the
\threestateprocess\ ``breaks symmetry'' reaching a configuration with $s =
\Omega(\sqrt{n \log n})$ within $\bigO(\log n)$ rounds, w.h.p.  This part of
our analysis is organized as follows.

In Lemma~\ref{lemma:starters} we prove that, if the process starts at a
configuration in $H_1$ or $H_3$ (see Figure~\ref{fig:bigpic}), i.e., when the
number of undecided nodes is either smaller than $n/18$ or larger than $n/2$,
then, after $\bigO(\log n)$ rounds, either the bias between the two colors
already gets magnitude $\Omega(\sqrt{n \log n})$ or the system reaches  some
configuration in $H_2$ (i.e., a configuration where the number of undecided
nodes is between $n/18$ and $n/2$). In Lemma~\ref{lemma:symmetrybreaking} we
then prove that, if the process is in a configuration in $H_2$, then the bias
between the two colors will get magnitude $\Omega(\sqrt{n \log n})$ within
$\bigO(\log n)$ rounds, w.h.p.

Lemma~\ref{lemma:starters} is a simple consequence of the following three
claims. Claims~\ref{claim:q_lower_bound} and~\ref{claim:a_increase} follow from
Chernoff bound applied to~\eqref{eq:explbq} and~\eqref{exp:a}, respectively. 

\begin{claim}\label{claim:q_lower_bound}
Let $\x \in \mathcal{C}$ be any configuration with $\vert s(\x) \vert \leq
(2/3)n$. Then, at the next round, the  number of undecided nodes of the
\threestateprocess\ is $Q \geq n/18$, w.h.p.
\end{claim}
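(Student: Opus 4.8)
The plan is to apply a Chernoff bound directly to the expression for $\Expec{}{Q \mid \X_t = \x}$ already derived in~\eqref{eq:explbq}. First I would recall that $Q = q(\X_{t+1})$ is a sum of independent indicator random variables: each of the $n$ nodes independently becomes undecided (or stays undecided) at the next round based solely on its single random pull, and since the graph is complete these pulls are mutually independent. Thus $Q$ is a sum of $n$ independent $\{0,1\}$ variables and is a textbook candidate for a multiplicative Chernoff bound around its mean.

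The second step is to lower-bound the mean. From~\eqref{eq:explbq} we have
\begin{equation*}
\Expec{}{Q \mid \X_t = \x} \geq \frac{n}{3} - \frac{s^2}{2n}.
\end{equation*}
Under the hypothesis $\abs{s} \leq \frac{2}{3}n$, the subtracted term satisfies $\frac{s^2}{2n} \leq \frac{(2n/3)^2}{2n} = \frac{2}{9}n$, so
\begin{equation*}
\Expec{}{Q \mid \X_t = \x} \geq \frac{n}{3} - \frac{2}{9}n = \frac{n}{9}.
\end{equation*}
Hence the expected number of undecided nodes at the next round is at least $n/9$, which is a constant fraction of $n$ and comfortably above the target threshold $n/18$.

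The final step is the concentration argument: since $\Expec{}{Q \mid \X_t = \x} \geq n/9$ and $Q$ is a sum of independent indicators, a standard multiplicative Chernoff bound (applied to the lower tail) gives that $Q$ is at least, say, half its mean, and therefore $Q \geq n/18$, except with probability $e^{-\Omega(n)}$, which is certainly $n^{-\Theta(1)}$, i.e.\ the event holds w.h.p. I would state the Chernoff deviation as $\Prob{}{Q < (1/2)\Expec{}{Q \mid \X_t}} \leq e^{-\Expec{}{Q \mid \X_t}/8} \leq e^{-n/72}$ and conclude.

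I do not expect any genuine obstacle here, as the claim is essentially a one-line application of Chernoff to an already-computed expectation; the only point requiring minor care is the justification that $Q$ is indeed a sum of \emph{independent} indicators so that Chernoff applies. This follows because, on the complete graph, each node's update depends only on its own independently chosen random pull, so the events ``node $u$ is undecided at round $t+1$'' are mutually independent conditioned on $\X_t = \x$. With that independence in hand, the bound is immediate.
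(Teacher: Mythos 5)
Your proposal is correct and follows essentially the same route as the paper: lower-bound $\Expec{}{Q \mid \X_t = \x}$ by $n/9$ using~\eqref{eq:explbq} and the hypothesis $\abs{s}\leq \frac{2}{3}n$, then apply a Chernoff bound to the sum of independent indicators to get failure probability $e^{-\Theta(n)}$. The only cosmetic difference is that the paper uses the additive form~\eqref{eq:cbafleq} with deviation $n/18$, while you use the multiplicative form with $\delta = 1/2$; both yield the same exponential concentration.
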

\begin{proof}
From \eqref{eq:explbq} we get
\[
\Expec{}{Q \,\vert\, \X_t = \x} 
\geq \frac{n}{3} - \frac{s^2}{2n} 
\geq \frac{n}{3} - \frac{2}{9}n 
= \frac{n}{9} \,.
\]
By applying the additive form of Chernoff Bound (see~\eqref{eq:cbafleq} in
Appendix~\ref{apx:cbaf}) to the random variable $Q$ we easily get the claim,
i.e., 
\begin{align*} 
\Prob{}{Q \leq \frac{n}{18}} &= \Prob{}{Q \leq \frac{n}{9} - \frac{n}{18}} \\ 
& \leq \Prob{}{Q \leq \Expec{}{Q} - \frac{1}{18}n} \\ 
& \leq e^{-2n^2/18^2n} \\
& = e^{-\Theta(n)}\,.
\end{align*} 
\end{proof}

\begin{claim}\label{claim:a_increase} 
Let $\x \in \mathcal{C}$ be any configuration with $q(\x) \geq n/2$ and $a(\x) 
\geq \log n$. Then, at the next round , the  number of $\colA$-colored nodes of the 
\threestateprocess\ is $A \geq (1 + 1/8) a(\x)$, w.h.p.
\end{claim}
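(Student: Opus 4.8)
The plan is to realize $A$ as a sum of independent indicator variables and apply a Chernoff bound around the mean given by~\eqref{exp:a}. First I would decompose the event that a node is \colA-colored at the next round: by the update rule (Table~\ref{tab:The-update-rule}), this happens exactly when the node is currently \colA\ and pulls a non-\colB\ node (probability $(a+q)/n$), or it is currently undecided and pulls a \colA\ node (probability $a/n$); a currently-\colB\ node cannot become \colA\ in a single round. Since every node pulls an independently and uniformly chosen neighbor, $A$ is a sum of $a+q$ mutually independent Bernoulli variables, and its expectation is $a\cdot\frac{a+q}{n}+q\cdot\frac{a}{n}=a\cdot\frac{a+2q}{n}$, in agreement with~\eqref{exp:a}.

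The crux is to lower bound this expectation by a constant strictly above $9/8$. Writing $a+2q=2n-a-2b$ and using $q\ge n/2$ (so that $a+b=n-q\le n/2$), it remains to upper bound $a+2b$. Here I would use that the claim is invoked on (near-)balanced configurations of $H_1$, where $\card{s}=\abs{a-b}=o(n)$, so that both $a$ and $b$ are at most $\frac{n}{4}+o(n)$; hence $a+2b\le\frac34 n+o(n)$, which gives $a+2q\ge\frac54 n-o(n)$ and therefore $\Ex{}{A\cond\X_t=\x}\ge(\frac54-o(1))\,a$.

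Finally I would invoke concentration. Because $\Ex{}{A\cond\X_t=\x}\ge\frac54 a\ge\frac54\log n=\Omega(\log n)$ and the target value $\frac98 a$ is at most $\frac{9}{10}\Ex{}{A\cond\X_t=\x}$ (note $\frac98=\frac{9}{10}\cdot\frac54$), the lower-tail Chernoff bound applied to the independent indicator sum $A$ yields $\Prob{}{A<\frac98 a}\le\Prob{}{A<(1-\frac{1}{10})\Ex{}{A\cond\X_t=\x}}\le e^{-\Theta(\Ex{}{A\cond\X_t=\x})}=e^{-\Theta(\log n)}=n^{-\Theta(1)}$, i.e.\ the claimed w.h.p.\ bound; the constant in the exponent sharpens automatically whenever $a$ is a larger multiple of $\log n$.

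The step I expect to be the real obstacle is the expectation lower bound, not the concentration. The hypothesis $q\ge n/2$ alone does \emph{not} force a multiplicative drift: if \colA\ is a vanishing minority (say $a=\log n$, $b\approx n/2$, $q=n/2$) then $\frac{a+2q}{n}\to 1$, so $A$ concentrates around $a$ and $A\ge\frac98 a$ simply fails. Thus the argument genuinely needs that \colA\ is not dominated by \colB\ (equivalently, the near-balance of the region in which the claim is applied, or that \colA\ is the weak majority $a\ge b$). Once that structural fact is in place, the independence of the pulls makes the remaining estimate entirely routine.
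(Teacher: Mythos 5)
Your proposal is correct and, in its mechanics, coincides with the paper's proof: both lower-bound $\Ex{}{A \cond \X_t = \x}$ by $\tfrac54 a$ starting from~\eqref{exp:a} and then apply the multiplicative lower-tail Chernoff bound~\eqref{eq:cbmfleq} with $\delta = 1/10$, using $a \geq \log n$ to make the exponent $\Theta(\log n)$. The only real divergence is in how the expectation bound is justified. The paper writes $a + 2q \geq \frac{n-q}{2} + 2q$ via the identity $a = \frac{n-q+s}{2}$, i.e.\ it silently assumes $s \geq 0$ (\colA\ is the weak majority); you instead invoke the near-balance $\abs{s} = o(n)$ of the region $H_1$ in which the claim is applied, getting $a + 2q = 2n - a - 2b \geq \tfrac54 n - o(n)$. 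Both routes import a hypothesis that is not in the literal statement of Claim~\ref{claim:a_increase}, and your closing observation is the valuable part of the proposal: with only $q \geq n/2$ and $a \geq \log n$ the claim is false (e.g.\ $a = \log n$, $b = n/2 - \log n$, $q = n/2$ gives $\Ex{}{A \cond \X_t = \x} = (1+o(1))\,a$, so $A \geq \tfrac98 a$ would require an upward deviation of order $a$ and fails w.h.p.). You have therefore correctly diagnosed an implicit assumption in the paper's own argument; in context the fix is harmless, since in $H_1$ one may take \colA\ to be the (weak) majority color or use $\abs{s} \leq \gamma\sqrt{n \log n}$ as you do, but the claim as stated should carry that extra hypothesis.
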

\begin{proof}
Observe that from~\eqref{exp:a} we get 
\begin{align}
\Expec{}{A \,\vert\, \X_t = \x} &= \left( \frac{a + 2q}{n}\right) a \nonumber\\
& \geq \left( \frac{\frac{n - q}{2} + 2q}{n} \right) a\label{align5}\\
& = \left( \frac{1}{2} + \frac{3q}{2n} \right) a \nonumber 
\geq \left( \frac{1}{2} + \frac{3}{4} \right) a \nonumber 
= \frac{5}{4} a,\nonumber
\end{align}
where in~\eqref{align5} we used that $a = \frac{n - q + s}{2} \geq \frac{n -
q}{2}$. By applying the multiplicative form of Chernoff Bound (see
(\ref{eq:cbmfleq}) in Appendix~\ref{apx:cbmf}) with $\delta = 1/10$ we obtain
\begin{align*}
\Prob{}{A \leq \frac{9}{8}} 
& =\Prob{}{A \leq \frac{5}{4} a \, \left( 1-\frac{1}{10} \right)}\\ 
& \leq e^{- \frac{5}{4} a \, \frac{1}{100}/2} \\ 
& \leq e^{- \frac{5}{4} (\log n) \frac{1}{200}} = \frac{1}{n^{\Theta(1)}}.
\end{align*}
where in the second inequality we used hypothesis $a \geqslant \log n$.
Thus, we have that $A > (9/8) a$ w.h.p. 
\end{proof}

The next claim is a consequence of fact that, when the number of colored nodes
is very small, the \threestateprocess\ behaves essentially like a \textit{pull}
process.

\begin{claim}\label{claim:information_spreading} 
Starting at any configuration $\x \in \mathcal{C}$ with $1 \leq a(\x) + b(\x)
< 2 \log n$, the \threestateprocess\ reaches a configuration $\X'$ with $a(\X')
+ b(\X') \geq 2 \log n$ within $\bigO(\log n)$ rounds, w.h.p.
\end{claim}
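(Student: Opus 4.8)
The plan is to cash in the hint preceding the claim: when the number $c := a(\x)+b(\x)$ of colored nodes is at most $2\log n$, almost every node is undecided, so the \threestateprocess\ behaves like a monotone \pull-based spreading process. A colored node can lose its color only through the rare event of pulling one of the at most $2\log n$ nodes of the opposite color, while every undecided node acquires a color with probability $c/(n-1)$. Quantitatively, rewriting the exact expectation~\eqref{exp:a} for both colors gives $\Expec{}{A+B \cond \X_t = \x} = 2c - \frac{c^2 + 2ab}{n}$, and since $c < 2\log n$ the correction term is $O(\log^2 n/n)=o(1)$. Thus the colored count roughly \emph{doubles} in expectation every round, which is the quantitative form of the ``$\pull$-like'' behaviour I want to exploit.

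First I would control the losses. The expected number of colored nodes that turn undecided in one round is $2ab/(n-1) \le c^2/(2(n-1)) = O(\log^2 n/n)$, so by Markov's inequality the probability that \emph{any} colored node loses its color in a given round is $O(\log^2 n/n)$. A union bound over the $\bigO(\log n)$ rounds of the phase then shows that, w.h.p., no colored node ever becomes undecided; call this event $E$. Because in each round the pull choices of the currently-colored nodes (which determine $E$) are independent of those of the currently-undecided nodes (which determine the newly-colored ones), working on $E$ does not alter the conditional law of the new colorings. Under $E$ the sequence $c_t := a(\X_t)+b(\X_t)$ is non-decreasing, and in each round the number $\Delta_t$ of undecided nodes that become colored is distributed as $\mathrm{Binom}(n-c_t,\, c_t/(n-1))$.

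The main obstacle is that, while $c_t$ is still very small, $\Delta_t$ has mean $\Theta(c_t)=\Theta(1)$ and hence does \emph{not} concentrate, so a single-step multiplicative Chernoff bound cannot certify growth w.h.p. I would instead reason over the whole phase. As long as $1 \le c_t < 2\log n$ we have $\Prob{}{\Delta_t \ge 1} = 1 - (1 - c_t/(n-1))^{\,n-c_t} \ge 1 - e^{-c_t/2} \ge 1 - e^{-1/2} =: p_0 > 0$, so before the target is reached the colored count increases by at least one with probability at least the absolute constant $p_0$, conditionally on the past. Consequently the number of such ``progress'' rounds within $K = \lceil 4 p_0^{-1}\log n\rceil$ rounds stochastically dominates a $\mathrm{Binom}(K,p_0)$ variable, and a standard Chernoff bound yields that, w.h.p., at least $2\log n$ progress rounds occur within $K=\bigO(\log n)$ rounds. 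Since on $E$ each progress round raises $c_t$ by at least one and $c_t$ never decreases, after $\bigO(\log n)$ rounds we obtain $a(\X')+b(\X') \ge 2\log n$; intersecting this event with $E$, both of which hold w.h.p., completes the proof.
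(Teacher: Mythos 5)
Your proposal is correct, and its first half coincides with the paper's proof: both arguments bound the expected number of colored nodes that turn undecided in one round by $O(\log^2 n/n)$ (since both color classes have size below $2\log n$), apply Markov and a union bound over the $\bigO(\log n)$ rounds of the phase to rule out any loss of color w.h.p., and then observe that conditioned on this event the colored count is non-decreasing and each undecided node gets colored independently with probability proportional to the current number of colored nodes. Where you diverge is in the second half. The paper, having reduced the dynamics to a pure \pull{} rumor-spreading process on the undecided nodes, simply invokes the known fact that such a process informs at least $2\log n$ nodes within $\bigO(\log n)$ rounds w.h.p.\ (citing \cite{KSSV00}). You instead prove the required growth from scratch: you correctly note that a per-round multiplicative Chernoff bound fails while $c_t = \Theta(1)$, and you sidestep this by only demanding that each round be a ``progress'' round (at least one new colored node) with probability at least the absolute constant $p_0 = 1 - e^{-1/2}$, so that $\bigO(\log n)$ rounds yield $2\log n$ unit increments w.h.p.\ by Chernoff on the dominating binomial. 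This weaker per-round guarantee suffices precisely because the target is only $2\log n$. The trade-off is the usual one: the paper's route is shorter but imports an external theorem; yours is fully self-contained and elementary, at the cost of a slightly longer argument and a success probability of the form $1-n^{-\Theta(1)}$ whose exponent you would tune by enlarging the constant in $K$. One presentational remark: your appeal to the independence of the colored and undecided nodes' pull choices holds within a single round, but the event $E$ spans all rounds; to be fully rigorous you should, as the paper does, condition round by round on the good event having held so far, which your union bound already supports.
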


\begin{proof}
Starting from configuration $\x$, we consider, at every   round $t \geq 1$,   the random 
variable counting the number of \textit{colored 
nodes} $a(\X_t )+ b(\X_t)$. 
Observe that, as long as  $1 \leq a(\X_t )+ b(\X_t) < 2 \log n$, the probability that in one 
round an \colA-colored node picks a \colB-colored node (or vice versa) is 
less than $\frac{(2 \log n)^2}{n}$. Hence, by applying the union bound for 
$\bigO(\log n)$ rounds, we get that the probability that this ``bad'' event 
happens in one of such rounds  is negligible.
Then, we can proceed the analysis  on $a(\X_t )+ b(\X_t)$  by induction and, at every round $t$,
  we assume   the  bad event is not happened so far (so every  colored node  keeps its color).
Now, at   round $t+1$, we only consider those nodes that were undecided at round $t$: each of them becomes colored
 iff it picks a colored node. So, discarding the difference between colors, the process over the 
undecided nodes turns out to be a standard \emph{rumor-spreading} process via the well-known
\pull\ mechanism (a colored node is in fact an informed node).
The claim then follows by observing that this spreading process is known to 
inform at least $2 \log n$ nodes within $\bigO(\log n)$ rounds, w.h.p. (see 
for instance~\cite{KSSV00}).
\end{proof}

\begin{lemma}[Phases $H_1$ and $H_3$: Starters]\label{lemma:starters}
~\begin{itemize}
\item Starting from any configuration $\x \in H_3$, the \threestateprocess\ reaches
a configuration $\X' \in (H_1 \cup H_2 \cup H_4)$ in one round, w.h.p.
\item Starting from any configuration $\x \in H_1$, the \threestateprocess\ reaches 
a configuration $\X' \in (H_2 \cup H_4)$ within $\bigO(\log n)$ rounds, w.h.p.
\end{itemize}
\end{lemma}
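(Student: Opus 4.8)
The plan is to first notice that the three target regions collapse into a single simple description: reading each threshold with $\abs{s}$ in place of $s$ (the colors being symmetric), we have $H_1\cup H_2\cup H_4=\{\x:\ q(\x)\ge n/18\ \text{and}\ \abs{s(\x)}\le (2/3)n\}$, and $H_2\cup H_4$ is this set minus $H_1$. So for the first bullet it suffices to prove that, in one round from $\x\in H_3$, the successor $\X'$ satisfies $Q\ge n/18$ and $\abs{S}\le(2/3)n$ w.h.p.; for the second bullet it suffices to prove that the process leaves $H_1$ within $\bigO(\log n)$ rounds while keeping $Q\ge n/18$ and $\abs{S}\le(2/3)n$ at every step, since the first configuration outside $H_1$ then lies automatically in $H_2\cup H_4$.

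For the first bullet, any $\x\in H_3$ has $\abs{s}\le\gamma\sqrt{n\log n}\le(2/3)n$, so Claim~\ref{claim:q_lower_bound} directly yields $Q\ge n/18$ w.h.p. For the bias I would use \eqref{exp:s}, which gives $\abs{\Expec{}{S\cond\X_t=\x}}=\abs{s}(1+q/n)=\bigO(\sqrt{n\log n})$, together with the Chernoff bound applied to $A$ and $B$ around their means \eqref{exp:a} (exactly as Claims~\ref{claim:q_lower_bound} and~\ref{claim:a_increase} are obtained): this shows $\abs{S}=\abs{A-B}=\bigO(\sqrt{n\log n})\ll(2/3)n$ w.h.p. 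Hence $\X'\in H_1\cup H_2\cup H_4$.

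For the second bullet I would chain the three claims. Assume $\x$ has at least one colored node (the all-undecided configuration, which has $q=n$ and lies in $H_1$, is absorbing and cannot break symmetry, so it is excluded). If $a+b<2\log n$, Claim~\ref{claim:information_spreading} brings the process, within $\bigO(\log n)$ rounds, to a configuration with $a+b\ge 2\log n$; throughout this sub-phase the colored population stays $\bigO(\log n)\ll n/2$, so $q\ge n/2$ and $\abs{s}=\bigO(\log n)$ keep us inside $H_1$. Once $a+b\ge2\log n$ we have $\max\{a,b\}\ge\log n$; call that color \colA. As long as the process remains in $H_1$ we have $q\ge n/2$, so Claim~\ref{claim:a_increase} applies and forces $A\ge(9/8)a$ at every round, i.e. $a$ grows by a constant factor. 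Since in $H_1$ one always has $a\le n-q\le n/2$, the value $a$ cannot keep multiplying indefinitely inside $H_1$, so after at most $\log_{9/8}(n/(2\log n))=\bigO(\log n)$ rounds the process must leave $H_1$. Meanwhile, inside $H_1$ the bias is $\abs{s}\le\gamma\sqrt{n\log n}\le(2/3)n$, so Claim~\ref{claim:q_lower_bound} keeps $Q\ge n/18$ each round, and the same bias concentration as in the first bullet shows that a single step out of $H_1$ cannot lift $\abs{S}$ above $(2/3)n$. Thus at the first exit time $\tau$ we have $\X_\tau\in H_1\cup H_2\cup H_4$ while $\X_\tau\notin H_1$, i.e. $\X_\tau\in H_2\cup H_4$.

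The step I expect to require the most care is not any individual estimate but the composition over the $\bigO(\log n)$ rounds of the growth sub-phase: each invocation of Claims~\ref{claim:q_lower_bound} and~\ref{claim:a_increase} fails only with probability $e^{-\Theta(n)}$ or $n^{-\Theta(1)}$, so a union bound over $\bigO(\log n)$ rounds still gives overall failure probability $n^{-\Theta(1)}$; but one must verify that no such round can overshoot, driving $Q$ below $n/18$ or $\abs{S}$ above $(2/3)n$. This control of $Q$ (via Claim~\ref{claim:q_lower_bound}) and of $\abs{S}$ (via the concentration around $s(1+q/n)$) is precisely what guarantees the process exits $H_1$ straight into $H_2\cup H_4$, rather than slipping into the ``bad'' low-undecided regions $H_3$ or $H_5$ or jumping past bias $(2/3)n$.
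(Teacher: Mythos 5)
Your proof is correct and follows essentially the same route as the paper's: Claim~\ref{claim:q_lower_bound} for the undecided count, then Claim~\ref{claim:information_spreading} followed by iterated applications of Claim~\ref{claim:a_increase} for the $H_1$ phase, with a union bound over the $\bigO(\log n)$ rounds. The only point of divergence is in the first bullet, where the paper rules out a one-round jump from $H_3$ past $H_4$ into $H_6$ by the deterministic observation that (for $a\ge b$) the next-round bias can never exceed $a+q$, which is below $\frac{2}{3}n$ for any $\x\in H_3$, whereas you reach the same conclusion probabilistically from the concentration of $S$ around $s\left(1+\frac{q}{n}\right)=\bigO(\sqrt{n\log n})$; both are valid, and your explicit treatment of the all-undecided absorbing state and of why $a\le n/2$ forces an exit from $H_1$ is if anything more careful than the paper's.
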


\begin{proof}
As for the first statement, from Claim~\ref{claim:q_lower_bound} it follows that the number of undecided nodes at the next round is above $\frac{n}{18}$, hence $\X' \in (H_1 \cup H_2 \cup H_4 \cup H_6)$, w.h.p. However, observe that the process cannot jump directly from $H_3$ to $H_6$ because of a combinatorial reason: from any configuration $(a,b,q)$ with $a\geq b$ the bias at the next round cannot exceed $a+q$ (notice that in a single round a colored node can only keep its color or became undecided).
As for the second statement, let $\bx$ be a configuration in $H_1$. 
If $a(\bx) \geqslant \log n$ then, by repeatedly applying Claim~\ref{claim:a_increase}
we have that either at some point the number of undecided nodes decreases below 
$n/2$ and thus the system reaches a configuration in $H_2$ or, since the number of
$\colA$-colored nodes increases exponentially w.h.p., the bias $s$ increases up to 
$\sqrt{n \log n}$ within $\mathcal{O}(\log n)$ rounds w.h.p., and thus the system 
reaches a configuration in $H_4$.
If $a(\bx) \leqslant \log n$ then, from Claim~\ref{claim:information_spreading} it
follows that within $\mathcal{O}(\log n)$ rounds the system is in a configuration $\mathbf{X}'$ 
with $a(\mathbf{X}') \geqslant \log n$, w.h.p. At this point either we are in the previous case 
or the number of undecided nodes is already $q(\mathbf{X}') \leqslant n/2$.
Notice that Claim~\ref{claim:q_lower_bound} guarantees that during this phase 
the process does not jump into $H_3$ or $H_5$, w.h.p.
\end{proof}
\smallskip

If the system lies in a configuration of $H_2$, we need  more complex probabilistic arguments
 to  prove that the bias between the two colors reaches $\Omega(\sqrt{n \log n})$
within $\bigO(\log n)$ rounds w.h.p.  

We will make use of the following  bound on the hitting time of any Markov chain having suitable
drift properties. This result is a variant of Claim 2.9 in \cite{DGMSS11}.
We provide here an independent proof for our variant since we were not able to find a published proof for the previous one.

\begin{lemma}\label{lemma:symmetrygeneric}
Let $\{X_{t}\}_{t\in \mathbb{N}}$ be a Markov Chain with finite state space $\Omega$ and let 
$f:\Omega\mapsto[0,n]$ be a function that maps states to integer values. 
Let $c_3$ be any positive constant
and let $m = c_3\sqrt{n}\log n$ be a target value.
Assume the following properties hold:
\begin{enumerate}
\item For any positive constant $h$, a positive constant $c_1 < 1$ exists such that for any $x \in \Omega : f(x) < m$,
\[
\Prob{}{f(X_{t+1}) < h\sqrt{n} \cond X_{t} = x} < c_1\,,
\]

\item Two positive constants $\epsilon, c_2$ exist such that for any $x \in \Omega: h\sqrt{n} \leq f(x) < m$,
\[
\Prob{}{f(X_{t+1}) < (1+\epsilon)f(X_{t})\cond X_{t} = x} < e^{-c_2f(x)^2/n}\,.
\]
\end{enumerate}
Then the process reaches a state $x$ such that $f(x) \geq m$ within 
$\bigO(\log n)$ rounds, w.h.p.
\end{lemma}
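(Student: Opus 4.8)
The plan is to establish the logarithmic hitting time by splitting the evolution of $f(X_t)$ into two regimes and showing that the process spends only $\bigO(\log n)$ rounds in each. The first regime is the ``escape from the bottom'' phase where $f(X_t) < h\sqrt n$ for a suitable constant $h$; the second is the ``exponential climb'' phase where $h\sqrt n \le f(X_t) < m$. Property (1) governs the first regime and Property (2) governs the second. The target $m = c_3\sqrt n \log n$ is only a $\Theta(\log n)$ factor above the threshold $h\sqrt n$, so once the exponential drift of Property (2) kicks in, a constant number of successful doublings per $\bigO(1)$ rounds should suffice to cross the gap.

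\medskip
\noindent
\textbf{Escape phase.} First I would use Property (1) to argue that, from any state with $f < m$, within $\bigO(\log n)$ rounds the process reaches a state with $f \ge h\sqrt n$, w.h.p. The idea is that Property (1) gives, at every round in which $f(X_t) < m$, a probability at least $1 - c_1 > 0$ that $f(X_{t+1}) \ge h\sqrt n$. These are not independent trials, but I can handle the dependence by conditioning: define the stopping event as ``$f$ first reaches $h\sqrt n$'' and observe that, as long as this has not happened, each round independently succeeds with probability $\ge 1-c_1$ \emph{regardless of the current state} (since the bound in Property (1) is uniform over all $x$ with $f(x) < m$). Hence the number of rounds before the first success is stochastically dominated by a geometric random variable with success probability $1-c_1$, and over $\Theta(\log n)$ rounds the probability of never succeeding is at most $c_1^{\Theta(\log n)} = n^{-\Theta(1)}$. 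So w.h.p.\ we reach a state with $f \ge h\sqrt n$ quickly. I would fix $h$ at the start (Property (1) supplies a valid $c_1$ for whatever $h$ I choose), and then carry this $h$ into the climb phase.

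\medskip
\noindent
\textbf{Climb phase.} Once $f(X_t) \ge h\sqrt n$, Property (2) says that $f$ grows by a factor $(1+\epsilon)$ with probability at least $1 - e^{-c_2 f^2/n}$. Starting from $f \ge h\sqrt n$, the failure probability is at most $e^{-c_2 h^2}$, a constant strictly below $1$; and crucially, once $f$ grows it can only make this failure probability smaller (the bound improves as $f$ increases). The obstacle here is that a single bad round could in principle drop $f$ back below $h\sqrt n$, re-entering the escape regime. I would deal with this by defining a single ``phase'' as the concatenation of escape plus climb and re-applying the escape argument each time $f$ falls back, but a cleaner route is to track the number of \emph{successful} $(1+\epsilon)$-multiplications. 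Since $f$ starts at $\ge h\sqrt n$ and the target is $c_3\sqrt n\log n$, we need only $k = \log_{1+\epsilon}(c_3\log n / h) = \Theta(\log\log n)$ consecutive successes to cross $m$, and each succeeds with probability $\ge 1 - e^{-c_2 h^2}$. Over a window of $\bigO(\log n)$ rounds, a standard argument (e.g.\ a coupling to a biased random walk on $\log f$, or a union bound over the at most $\bigO(\log n)$ rounds where a backward step could occur) shows that the $\Theta(\log\log n)$ needed successes accumulate w.h.p.

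\medskip
\noindent
\textbf{Where the difficulty lies.} The main obstacle is the bookkeeping that prevents the process from oscillating indefinitely near the $h\sqrt n$ threshold: the two properties are stated for single-step transitions, so I must rule out long stretches in which $f$ repeatedly crosses $h\sqrt n$ upward and then drops back down before the exponential drift carries it to $m$. The clean way to close this is a \emph{potential/supermartingale} argument on $\log f(X_t)$: in the escape regime $\log f$ has a positive expected increment per round (from Property (1)), and in the climb regime $\mathbf{E}[\log f(X_{t+1}) - \log f(X_t)] \ge \log(1+\epsilon)\cdot(1 - e^{-c_2 h^2}) - \bigO(1)\cdot e^{-c_2 h^2} > 0$ once $h$ is chosen large enough that the constant failure probability $e^{-c_2 h^2}$ is small. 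With a uniform positive drift on $\log f$ bounded away from zero throughout $[h\sqrt n, m)$, and bounded step sizes, an Azuma/Hoeffding-type concentration gives that $\log f$ rises by $\Theta(\log\log n)$ (and hence $f$ crosses $m$) within $\bigO(\log n)$ rounds w.h.p. I expect that verifying the drift lower bound uniformly and controlling the lower tail of $\log f$ (so that downward excursions do not derail the hitting-time estimate) will be the most delicate part, but it is exactly the scenario the lemma from \cite{DGMSS11} was designed to handle, so the variant proof should follow the same template.
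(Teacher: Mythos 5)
Your escape phase is fine and matches what the paper does with Property (1): it counts, via a Chernoff bound on $c_4\log n$ Bernoulli trials of success probability $1-c_1$, the number of rounds in which $f(X_t)\geq h\sqrt{n}$. The gap is in the climb phase. Property (2) gives \emph{no lower bound} on $f(X_{t+1})$ on the failure event: a single bad round can send $f$ from anywhere in $[h\sqrt{n},m)$ down to $0$. Consequently the increments of $\log f(X_t)$ are unbounded below, the claimed drift estimate $\log(1+\epsilon)(1-e^{-c_2h^2})-\bigO(1)\cdot e^{-c_2h^2}$ has no justification for its second term, and an Azuma-type bound with ``bounded step sizes'' does not apply. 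Patching this with Property (1) (which restores $f\geq h\sqrt{n}$ in $\bigO(1)$ rounds) does not suffice either, because a failure wipes out \emph{all} accumulated multiplicative progress; if, as in your sketch, one only uses the constant lower bound $1-e^{-c_2h^2}$ on the per-step success probability, then each climb attempt fails with constant probability, $\Theta(\log n)$ restarts are needed to reach failure probability $n^{-\Theta(1)}$, and your union bound over $\bigO(\log n)$ rounds of a constant-probability bad event gives a bound of $\omega(1)$, i.e.\ nothing. The essential ingredient you never use is that the failure probability in Property (2) decays as $e^{-c_2 f(x)^2/n}$, i.e.\ super-exponentially in $z=f(x)/\sqrt{n}$.

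The paper closes exactly this hole as follows. It passes to the subsampled chain $R_i=X_{\tau(i)}$ of the rounds at which $f\geq h\sqrt{n}$ (this is where Property (1) is consumed, both to make the $\tau(i)$ finite and, at the end, to show $\tau_{(c_3+1)\log n}\leq c_4\log n$ w.h.p.), and runs a supermartingale argument on the potential $Y_i=\exp\bigl(m/\sqrt{n}-f(R_i)/\sqrt{n}\bigr)$ --- exponential in $f/\sqrt{n}$, not in $\log f$. On failure the potential can blow up to its maximum $e^{m/\sqrt{n}}$, a relative factor $e^{z}$, but this is paid for with probability at most $e^{-c_2z^2}$, and $e^{z-c_2z^2}+e^{-\epsilon z}\leq 2e^{-2}$ once $h$ is large enough; hence $\Ex{}{Y_{i+1}\mid R_i=x}\leq Y_i/e$ uniformly, and $\Theta(\log n)$ subsampled steps drive $Y$ below $1$ starting from $Y_0\leq n^{c_3}$. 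This cancellation between the $e^{z}$ worst-case loss and the $e^{-c_2z^2}$ failure probability is the crux of the lemma, and it is the step your proposal is missing.
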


\begin{proof}

We first define a set of hitting times $T = \{\tau(i)\}_{i \in \mathbb{N}}$ where $$\tau(i) = \inf_{t \in \mathbb{N}} \{ t: t > \tau(i-1) , f(X_t) \geq h\sqrt{n} \}$$
setting $\tau(0) = 0$. By Hypothesis (1), for every $i \in \mathbb{N}$, the expectation of $\tau(i)$ is finite.

Then we define the following stochastic process which is a subsequence of $\{X_{t}\}_{t\in \mathbb{N}}$:
$$\{R_i\}_{i \in \mathbb{N}} = \{X_{\tau(i)} \}_{i \in \mathbb{N}}.$$

Observe that $\{R_i\}_{i \in \mathbb{N}}$ is still a Markov Chain. Indeed, let $\{x_1, \dots, x_{i-1}\}$ a set of states in $\Omega$:

\begin{align*}
    & \Prob{}{R_i = x \vert R_{i-1} = x_{i-1} \wedge \dots \wedge R_{1} = x_1}\\
    &=  \Prob{}{X_{\tau(i)} = x \vert X_{\tau(i-1)} = x_{i-1} \wedge \dots \wedge X_{\tau(1)} = x_1}\\
    &= \sum_{t(i) \wedge \dots \wedge t(0) \in \mathbb{N}} \Prob{}{X_{t(i)} = x \vert X_{t(i-1)} = x_{i-1} \wedge \dots \wedge X_{t(1)} = x_1}\\ 
    &\cdot \Prob{}{\tau(i) = t(i) \wedge \tau(i-1) = t(i-1) \wedge \dots \wedge \tau(1) = t(1)}\\
    &= \sum_{t(i) \wedge \dots \wedge t(0) \in \mathbb{N}} \Prob{}{X_{t(i)} = x \vert X_{t(i-1)} = x_{i-1}}\\ 
    &\cdot \Prob{}{\tau(i) = t(i) \wedge \tau(i-1) = t(i-1) \wedge \dots \wedge \tau(1) = t(1)}\\
    &= \Prob{}{X_{\tau(i)} = x \vert X_{\tau(i-1)} = x_{i-1}}\\
    &= \Prob{}{R_i = x \vert R_{i-1} = x_{i-1}}.
\end{align*}

By definition the state space of $R$ is $\{x \in \Omega: f(x) \geq h\sqrt{n}\}$. Moreover Hypothesis (2) still holds for this new Markov Chain. Indeed:

\begin{align*}
    &\Prob{}{f(R_{i+1}) < (1+\epsilon) f(R_{i}) \vert R_i = x)}\\
    &= 1 - \Prob{}{f(R_{i+1}) \geq (1+\epsilon) f(R_{i}) \vert R_i = x)}\\
    &= 1 - \Prob{}{f(X_{\tau(i+1)}) \geq (1+\epsilon) f(X_{\tau(i)}) \vert X_{\tau(i)} = x)}\\
    &\leq 1 - \Prob{}{f(X_{\tau(i+1)}) \geq (1+\epsilon) f(X_{\tau(i)}) \wedge \tau(i+1) = \tau(i)+1 \vert X_{\tau(i)} = x)}\\
    &= 1 - \Prob{}{f(X_{\tau(i)+1}) \geq (1+\epsilon) f(X_{\tau(i)}) \vert X_{\tau(i)} = x)}\\
    &= 1 - \Prob{}{f(X_{t+1}\geq (1+\epsilon) f(X_{t}) \vert X_{t} = x)}\\
    &< e^{-c_2f(x)^2/n}.
\end{align*}

These two properties are sufficient to study the number of rounds required by the new Markov Chain $\{R_i\}_{i \in \mathbb{N}}$ to reach the target value $m$.
%Formally, let $$\tau(R) = \inf_{i in \mathbb{N}} \{i : f(R_i) \geq \} $$
Indeed, by defining the random variable $Z_i = \frac{f(R_i)}{\sqrt{n}}$ and considering the following 
  ``potential'' function, $Y_i = \exp(\frac{m}{\sqrt{n}} - Z_i)$ we can compute its expectation at the next round as follows. Let us fix any state $x \in \Omega$ such that $h\sqrt{n} \leq f(x) < m$ and define $z = \frac{f(x)}{\sqrt{n}}$ and $y = \exp(\frac{m}{\sqrt{n}} - z)$. We get:
\begin{align}
\Ex{}{Y_{i+1} \vert R_i = x} &\leq \Prob{}{f(R_{i+1}) < (1+\epsilon)f(x)} e^{m/\sqrt{n}}\nonumber\\
&+ \Prob{}{f(R_{i+1}) \geq (1+\epsilon)f(x)} e^{m/\sqrt{n} - (1 + \epsilon)z}\nonumber\\
\mbox{ (from Hypothesis  (2)) } &\leq e^{-c_2 z^2} \cdot  e^{m/\sqrt{n}} + 1 \cdot e^{m/\sqrt{n} - (1 + \epsilon)z}\nonumber\\
&= e^{m/\sqrt{n} - c_2 z^2} + e^{m/\sqrt{n} - z - \epsilon z}\nonumber\\
&= e^{m/\sqrt{n} - z} (e^{z - c_2 z^2} + e^{-\epsilon z})\nonumber\\
%&\leq e^{m/\sqrt{n} - z} (e^{h - c_2 h^2} + e^{-\epsilon h})\nonumber\\
&\leq e^{m/\sqrt{n} - z}(e^{-2} + e^{-2})\label{eq:smallh}\\
&<\frac{e^{m/\sqrt{n} - z}}{e}\nonumber\\
&<\frac{y}{e}\nonumber,
\end{align}
where in~\eqref{eq:smallh} we used that $z$ is always at least $h$ and thanks to Hypothesis (1) we can choose a sufficiently large $h$.

By applying the Markov inequality and iterating the above bound, we get: 
$$\Prob{}{Y_{i} > 1} \leq \frac{\Ex{}{Y_{i}}}{1} \leq \frac{\Ex{}{Y_{{i}-1}}}{e}\leq \cdots \leq \frac{\Ex{}{Y_0}}{e^{\tau_R}} \leq \frac{e^{m/\sqrt{n}}}{e^{i}}.$$
We observe that if $Y_{i} \leq 1$ then $R_{i} \geq m$, thus by setting ${i} = m/\sqrt{n} + \log n = (c_3 + 1)\log n$, we get:

\begin{equation}
\Prob{}{R_{(c_3 + 1)\log n} < m} = \Prob{}{Y_{(c_3 + 1)\log n} > 1} < \frac{1}{n}.\label{eq:markovR}
\end{equation}

Our next goal is to give an upper bound on the hitting time $\tau_{(c_3 + 1)\log n}$. 
Note that the event "$\tau_{(c_3 + 1)\log n} > c_4\log n$" holds if and only if the number of rounds such that $f(X_t) \geq h\sqrt{n}$ (before round $c_4\log n$) is less than $(c_3 + 1)\log n$.
Thanks to Hypothesis (1), at each round $t$ there is at least probability $1-c_1$ that $f(X_{t}) \geq h\sqrt{n}$. This implies that, for any positive constant $c_4$, the probability $\Prob{}{\tau_{(c_3 + 1)\log n} > c_4\log n}$ is bounded by the probability that, within $c_4\log n$ independent Bernoulli trials, we get less then $(c_3 + 1)\log n$ successes, where the success probability is at least $1-c_1$. We can thus choose a sufficiently large $c_4$ and apply the multiplicative form of the Chernoff bound (see (\ref{eq:cbmfgeq}) in Appendix \ref{apx:cbmf}) and obtain:

\begin{equation}
\Prob{}{\tau_{(c_3 + 1)\log n} > c_4\log n} < \frac{1}{n}.\label{eq:tauR}
\end{equation}

We are now ready to prove the Lemma using Inequalities \eqref{eq:markovR} and \eqref{eq:tauR}, indeed:

\begin{align*}
    \Prob{}{X_{c_4 \log n} \geq m} 
    &> \Prob{}{R_{(c_3 + 1)\log n} \geq m \wedge \tau_{(c_3 + 1)\log n} \leq c_4 \log n}\\
    &= 1 - \Prob{}{R_{(c_3 + 1)\log n} < m  \vee \tau_{(c_3 + 1)\log n} > c_4 \log n}\\
    &\geq 1 - \Prob{}{R_{(c_3 + 1)\log n} < m} + \Prob{}{\tau_{(c_3 + 1)\log n} > c_4 \log n}\\
    &> 1 - \frac{2}{n}.
\end{align*}

Hence, choosing a suitable big $c_4$, we have shown that in $c_4 \log n$ rounds the process reaches the target value $m$, w.h.p.

\end{proof}

\smallskip

The basic idea would be to apply the above lemma to the \threestateprocess\ with $f(X_t) = s(X_t)$ in order to get an upper bound on the number of rounds needed to reach a configuration having bias $\Omega(\sqrt{n \log n})$. To this aim, we first show that, for any configuration in $H_2$, Properties 1 and 2 in Lemma \ref{lemma:symmetrygeneric} are satisfied.

\begin{claim}\label{unstable+goodbias} Let $\x \in \mathcal{C}$ be any configuration such that $\frac{n}{18}\leq q(\x) \leq \frac{n}{2}$ and $\vert s(\x) \vert < c_4 \sqrt{n} \log n$ for any positive constant $c_4$, then it holds: 

\begin{enumerate}
\item for any constant $h > 0$ a constant $c_1 < 1$ exists such that
$$\Prob{}{\vert S \vert < h\sqrt{n}\,\vert\, \X_t = \x} < c_1,$$

\item two positive constants $c_2,\epsilon$ exist such that 
$$\Prob{}{\vert S \vert \geq (1+\epsilon)s\,\vert\, \X_t = \x} 
\geq 1 - e^{-c_2s^2/n}.$$
\end{enumerate}
\end{claim}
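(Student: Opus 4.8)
The plan is to realize the next-round bias as a sum of independent contributions and read off both properties from its first two moments. Conditioned on $\X_t = \x$, each node updates by pulling one independent uniform neighbor, so we may write $S = \sum_{u} X_u$ where $X_u \in \{-1,0,+1\}$ is the signed change node $u$ contributes to the bias and the $X_u$ are mutually independent. From Table~\ref{tab:The-update-rule}: an \colA-node contributes $+1$ (if it keeps its color, probability $\approx (a+q)/n$) or $0$; a \colB-node contributes $-1$ or $0$; an undecided node contributes $+1,-1,0$ with probabilities $\approx a/n, b/n, q/n$. Summing the means recovers $\Expec{}{S\cond\X_t=\x}=s(1+q/n)$, and summing the per-node variances gives $\mathrm{Var}(S) = \frac{ab(n+q)}{n^2} + \frac{q(n-q)}{n} - \frac{qs^2}{n^2}$. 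First I would prove the one structural fact that drives everything: in the regime $n/18 \le q \le n/2$, where $a,b = \frac{n-q\pm s}{2} = \Theta(n)$ and $s^2 < c_4^2 n\log^2 n = o(n^2)$, this expression satisfies $\mathrm{Var}(S) = \Theta(n)$. I will need both the lower bound $\mathrm{Var}(S) \ge c_{\min} n$ and the matching upper bound $\mathrm{Var}(S) \le C n$.

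Property 2 is then the easy direction. Assume w.l.o.g. $s \ge 0$. Since $q \ge n/18$ we have $\Expec{}{S\cond\X_t=\x} = s(1 + q/n) \ge (1 + \tfrac{1}{18})s$, so fixing any $\epsilon < \tfrac{1}{18}$ makes $(1+\epsilon)s$ lie below the mean by at least $\Expec{}{S}-(1+\epsilon)s = s(q/n-\epsilon) \ge s/36$. Because $S$ is a sum of independent variables each ranging in an interval of width at most $2$, I would apply the additive Chernoff/Hoeffding lower-tail bound to obtain $\Prob{}{S < (1+\epsilon)s} \le \exp\!\big(-\Theta(s^2/n)\big) = e^{-c_2 s^2/n}$; since $S \ge (1+\epsilon)s > 0$ forces $\abs{S} \ge (1+\epsilon)\abs{s}$, this yields the claimed $\Prob{}{\abs{S} \ge (1+\epsilon)s \cond \X_t=\x} \ge 1 - e^{-c_2 s^2/n}$, with the case $s<0$ symmetric.

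Property 1 is the delicate part and the main obstacle. It is an anti-concentration statement: I must show $S$ cannot place more than a constant $c_1<1$ of its mass on the window $(-h\sqrt n, h\sqrt n)$, uniformly over the region and for every fixed constant $h$ — and the regime of small $h$ is genuinely hard, since there concentration inequalities are useless and one must instead exploit that $S$ is spread over a $\Theta(\sqrt n)$-wide range. The plan is to invoke a Berry--Esseen estimate: the third absolute central moments satisfy $\sum_u \Expec{}{\abs{X_u - \Expec{}{X_u}}^3} = \bigO(n)$ while $\mathrm{Var}(S)^{3/2} = \Theta(n^{3/2})$, so the law of $S$ is within $\bigO(n^{-1/2}) = o(1)$ of the Gaussian with variance $\sigma^2 = \mathrm{Var}(S)$. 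The mass a Gaussian assigns to an interval is maximized when the interval is centered at the mean and is largest when $\sigma$ is smallest, so using $\sigma^2 \ge c_{\min} n$ I obtain $\Prob{}{\abs{S} < h\sqrt n} \le 2\Phi\!\big(h/\sqrt{c_{\min}}\big) - 1 + o(1)$, a constant strictly below $1$ for every fixed $h$, which serves as $c_1$. A more elementary route, avoiding Berry--Esseen, is a Paley--Zygmund argument on $(S-\Expec{}{S})^2$ using the fourth-moment bound $\Expec{}{(S-\Expec{}{S})^4} = \bigO(n^2)$ to show $\abs{S-\Expec{}{S}} = \Omega(\sqrt n)$ with constant probability; this handles small $h$ but must be combined with a concentration bound for the large-mean regime, so I would prefer the uniform Berry--Esseen statement. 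With both properties verified, Claim~\ref{unstable+goodbias} follows and Lemma~\ref{lemma:symmetrygeneric} applies with $f = \abs{s}$.
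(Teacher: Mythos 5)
Your proof is correct, and for Property~1 it takes a cleaner route than the paper's. Both arguments ultimately rest on the same two facts --- the conditional variance of $S$ is $\Theta(n)$ in the regime $\frac{n}{18}\le q\le\frac n2$, and a Berry--Esseen normal approximation then gives anti-concentration --- but the decompositions differ. The paper first invokes a domination argument to reduce to the worst case $s=0$, then splits $A=A^q+A^a$ and $B=B^q+B^b$, conditions on $Q^q$ to view $A^q-B^q$ as a sum of Rademacher variables with $\Theta(n)$ surviving terms w.h.p., and applies Berry--Esseen separately to three pieces before multiplying the resulting constant probabilities. You instead write $S=\sum_u X_u$ with independent per-node contributions in $\{-1,0,+1\}$, compute $\mathrm{Var}(S)=\frac{ab(n+q)}{n^2}+\frac{q(n-q)}{n}-\frac{qs^2}{n^2}=\Theta(n)$ exactly (using $a,b=\Theta(n)$ and $s^2=o(n^2)$), and apply Berry--Esseen once to the whole sum; the bound $2\Phi(h/\sqrt{c_{\min}})-1+o(1)$ is then uniform over the mean, so you never need the domination step (which the paper asserts but does not really justify) nor the conditioning on $Q^q$. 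This buys a single, uniform anti-concentration estimate at the price of one explicit variance computation. For Property~2 the two proofs are essentially identical --- a Chernoff/Hoeffding lower tail exploiting $\Ex{}{S}\ge(1+\frac1{18})s$ --- the only cosmetic difference being that you apply Hoeffding to $S$ directly (which requires the bounded-difference form for width-$2$ summands rather than the $0$--$1$ form stated in the paper's appendix) whereas the paper bounds $A$ and $B$ separately; both yield $c_2$ of order $2/72^2$ and $\epsilon=\frac1{36}$.
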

\begin{proof}
As for the first item, let $\x$ and $\x_0$ two states such that $\vert s(\x) \vert <  h\sqrt{n}$, $\vert s(\x_0) \vert = 0$ and $q(\x) = q(\x_0)$. By a simple domination argument we get that: 
$$\Prob{}{\vert S \vert < h\sqrt{n} \,\vert\, \X_t = \x} \leq \Prob{}{\vert S \vert < h\sqrt{n} \,\vert\, \X_t = \x_0}.$$
Thus we can consider only the case where the bias is zero and this implies that $a = b$.

We define $A^q,B^q,Q^q$ the random variables counting the nodes that were undecided in the configuration $\x_0$ and that, in the next round, 
get    colored with  \colA, \colB, or undecided, respectively. Similarly $A^a$ ($B^b$) counts the nodes that support
   color \colA\ (\colB) in the configuration $\x_0$ and that, in the next round, still support the same color.

Since it is impossible that a node supporting a color  gets the other color in the next round, it holds that $A = A^q + A^a$ and $B = B^q + B^b$. Moreover, observe  that, among these random variables, only $A^q$ and $B^q$ are mutually dependent. Thus it holds  
\begin{align*}
&\Prob{}{\vert S \vert \geq h\sqrt{n}} > \Prob{}{A \geq B + h\sqrt{n}}\\
&= \Prob{}{A^q + A^a \geq B^q + B^b + h\sqrt{n}}\\
&\geq \Prob{}{A^q \geq B^q + h\sqrt{n} \, , \, A^a \geq aq/n \, , \, B^b \leq aq/n}\\
&= \Prob{}{A^q - B^q \geq h\sqrt{n}} \cdot \Prob{}{A^a - aq/n \geq 0} \cdot \Prob{}{B^b - aq/n \leq 0}
\end{align*}

Note that both the random variables $A^a - aq/n$ and $B^b - aq/n$ are binomial distribution with expectation $0$ (recall that $a=b$). Thus with a simple application of the Berry-Esseen Theorem we can approximate, up to a arbitrary small constant $\epsilon$, both the random variables with a normal distribution with mean zero. Then we get that $\Prob{}{A^a - aq/n \geq 0} \cdot \Prob{}{B^b - aq/n \leq 0} \geq (1/2 - \epsilon)^2$.

As for the random variable $A^q - B^q$, note that conditioned to the event $Q^q = k$ it is a sum of $q-k$ Rademacher random variables. Note that $\Ex{}{Q^q} = q^2/n$, then by using that $q \leq \frac{n}{2}$ we get that $\Ex{}{Q^q} \leq q/2$. By an application of the multiplicative Chernoff Bound and by the fact that $q \geq \frac{n}{18}$  it holds that $q-Q^q = \Theta(n)$ w.h.p. This implies that the variance of $A^q - B^q$ is $\Theta(n)$ w.h.p.

Thus with a simple application of the Berry-Esseen Theorem we can approximate $A^q - B^q$ up to a arbitrarily small constant, with a normal distribution with mean zero and variance $\Theta(n)$ w.h.p. We can conclude:

\begin{align*}
&\Prob{}{A^q - B^q \geq h\sqrt{n}} \cdot \Prob{}{A^a - aq/n \geq 0} \cdot \Prob{}{B^b - aq/n \leq 0}\\
&\geq\Prob{}{A^q - B^q \geq h\sqrt{n}\vert q-Q^q = \Theta(n)} \cdot \Prob{}{q-Q^q = \Theta(n)}\\
&\cdot \Prob{}{A^a - aq/n \geq 0} \cdot \Prob{}{B^b - aq/n \leq 0} \geq c_1
\end{align*}

for a suitable small constant $c_1$.

As for the second item, w.l.o.g. we assume that $a > b$. From the additive form of the Chernoff bound (see Appendix~\ref{apx:cbaf}) 
it follows that
\[
\Prob{}{A < \Ex{}{A} - \frac{1}{72}s} < e^{-2s^2/72^2n}\,
\]
and
\[
\Prob{}{B > \Ex{}{B} + \frac{1}{72}s} < e^{-2s^2/72^2n}\,.
\]
Thus:
\begin{align*}
S &= A - B > \Ex{}{A} - \frac{1}{72}s - \Ex{}{B} - \frac{1}{72}s\\&= \Ex{}{A - B} - \frac{1}{36}s = \Ex{}{S} - \frac{1}{18}s\\ 
&= \left(1 + \frac{q}{n}\right)s - \frac{1}{36}s 
= \left(1 + \frac{1}{18} - \frac{1}{36}\right)s\\
&= \left( 1 + \frac{1}{36} \right)s\,.
\end{align*}
Hence, the second item is obtained setting $\epsilon = \frac{1}{36}$ and a
$c_2 < \frac{2}{72^2}$. 
\end{proof}

\smallskip
 It is important to observe that the above claim ensures Properties 1 and 2 of 
Lemma~\ref{lemma:symmetrygeneric} whenever $\frac{1}{18}n\leq q\leq \frac{1}{2}n$.
Unfortunately, Lemma~\ref{lemma:symmetrygeneric} requires such properties to hold 
for \emph{any} \mbox{(almost-)}balanced configuration: If $q = n-o(n)$, Property 1 does 
not hold, while Property 2 is not satisfied if $q=o(n)$. In order to manage 
this issue, in Subsection~\ref{ssec:pruned}, we define a \emph{pruned} process,
a variant of \threestateprocess\, where it is possible to apply 
Lemma~\ref{lemma:symmetrygeneric}. Then, in Subsection~\ref{ssec:back} we show 
a coupling between the \threestateprocess\ and the pruned one.

\subsection{The pruned process}\label{ssec:pruned}
The helpful, key point is that, starting from any configuration $\x \in H_2$, the probability that the process goes in one of those ``bad'' configurations  with $q<\frac{1}{18}n$ or $q \geq \frac{1}{2}n$ is negligible (see 
Claim~\ref{claim:q_bounded}). Thus, intuitively speaking, all the configurations 
\emph{actually visited} by the process before leaving $H_2$ do satisfy
 Lemma~\ref{lemma:symmetrygeneric}. In order to make
this intuitive argument rigorous, in what follows,  we   define a suitably
\textit{pruned} process by removing from $H_2$ all the \textit{unwanted} 
transitions. 

%%%
\iffalse
Finally,    
    using a coupling argument between the two processes, we show that
the time  the \threestateprocess\ needs to reach a configuration with a 
sufficiently large bias is $\bigO(\log n)$,  w.h.p.
(see Lemma~\ref{lemma:symmetrybreaking}).
\fi
%%%

Let $\bar{s} \in [n]$ and $\bz(\bar{s})$ be the configuration such that 
$s(\bz(\bar{s}))=\bar{s}$ and $q(\bz(\bar{s})) = \frac{1}{2}n$. Let 
$p_{\bx,\by}$ be the probability of a transition from the configuration $\bx$ to the configuration $\by$ in the \threestateprocess. We define a new stochastic process: The \pruned. The \pruned\ behaves exactly like the original process but every transition from a configuration $\bx \in H_2$ to a configuration $\by$ such that $q(\by) < \frac{1}{18}n$ or $q(\by) > \frac{1}{2}n$ now have probability $p_{\bx,\by}' = 0$. Moreover, for any $\bar{s} \in [n]$, starting from any configuration $\bx \in H_2$, the probability of reaching the configuration $\bz(\bar{s})$ is: 
\begin{equation*} p_{\bx,\bz(\bar{s})}' = p_{\bx,\bz(\bar{s})} + \sum_{\by:\left(q(\by) < \frac{1}{18}n \vee q(\by) > \frac{1}{2}n\right)\bigwedge s(\by) = \bar{s}}p_{\bx,\by.}\end{equation*}

\noindent Finally, all the other transition probabilities remain the same.

Observe that, since the \pruned\ is defined in such a way it has exactly the 
same marginal probability of the original process with respect to the random 
variable $s$, then Claim \ref{unstable+goodbias} holds for the \pruned\ as well.
Thus, we can choose constants $h,c_1,c_2,\epsilon$ such that the two properties 
of Lemma~\ref{lemma:symmetrygeneric} are satisfied.  

\begin{corollary} \label{prun:ubound}
 Starting from any configuration  $\x \in H_2$, the \\  \pruned\   reaches 
a configuration $\X' \in H_4$ within $\bigO(\log n)$ rounds, w.h.p.  
\end{corollary}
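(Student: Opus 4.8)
The plan is to apply Lemma~\ref{lemma:symmetrygeneric} directly to the \pruned\ with the choice $f(X_t) = |s(X_t)|$, i.e.\ the magnitude of the bias. As observed immediately before the statement, Claim~\ref{unstable+goodbias} together with the construction of the pruned process guarantees that Properties~(1) and~(2) of Lemma~\ref{lemma:symmetrygeneric} hold for the pruned process on every configuration it can visit. The crucial point enabling this is that the pruning was designed to keep the marginal distribution of $s$ identical to that of the original process, so Claim~\ref{unstable+goodbias} transfers verbatim; moreover the pruned process, by construction, never leaves the range $\frac{1}{18}n \le q \le \frac12 n$ in which both properties are valid. Thus I would first fix constants $h, c_1, c_2, \epsilon$ (as already noted in the paragraph preceding the corollary) so that the two hypotheses of Lemma~\ref{lemma:symmetrygeneric} are satisfied for the pruned process with target $m = c_3\sqrt{n}\log n$, choosing $c_3$ large enough that $m \geq \gamma\sqrt{n\log n}$ for whatever constant $\gamma$ defines the boundary of $H_4$.

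The second step is the direct invocation: Lemma~\ref{lemma:symmetrygeneric} then yields that the pruned process reaches a state with $|s| \ge m = \Omega(\sqrt{n}\log n)$ within $\bigO(\log n)$ rounds, w.h.p. I would then check that such a state is genuinely in $H_4$. Since the pruned process always keeps $q \ge \frac{1}{18}n$, and since $|s| \ge m \ge \gamma\sqrt{n\log n}$ with $m = c_3\sqrt{n}\log n \le \frac23 n$ for all large $n$, the resulting configuration satisfies both defining inequalities of $H_4$ (namely $\gamma\sqrt{n\log n} \le s \le \frac23 n$ and $q \ge \frac{1}{18}n$), so the hitting state indeed lies in $H_4$. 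This confirms that reaching the target bias value is the same as entering $H_4$.

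The one genuine subtlety — and the step I would be most careful about — is reconciling the scalar function $f$ in Lemma~\ref{lemma:symmetrygeneric} with the fact that the pruned process is a Markov chain on the full configuration space $\mathcal{C}$, not on $\mathbb{R}$. Lemma~\ref{lemma:symmetrygeneric} is stated for an arbitrary finite-state Markov chain $\{X_t\}$ with an integer-valued $f$, so the pruned process fits its template with $\Omega$ being the relevant subset of $\mathcal{C}$ and $f(\x) = |s(\x)| \in [0,n]$; the hypotheses are precisely the per-state drift and anti-concentration conditions furnished by Claim~\ref{unstable+goodbias}. The only thing to verify is that these per-state conditions hold uniformly over \emph{all} states the pruned chain can occupy, which is exactly what the pruning buys us. I therefore expect no real obstacle here beyond stating the correspondence cleanly; the heavy lifting has already been done in Lemma~\ref{lemma:symmetrygeneric} and Claim~\ref{unstable+goodbias}, and the corollary is essentially their formal composition.
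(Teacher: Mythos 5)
Your proposal is correct and follows exactly the route the paper intends: the corollary is presented there without a separate proof, as the immediate composition of Lemma~\ref{lemma:symmetrygeneric} (applied to the pruned chain with $f=|s|$) and Claim~\ref{unstable+goodbias}, whose hypotheses transfer to the pruned process precisely because the pruning preserves the marginal law of $s$ and confines $q$ to $[\frac{n}{18},\frac{n}{2}]$. Your additional check that the hitting state actually lands in $H_4$ is a sensible (and slightly more careful) spelling-out of what the paper leaves implicit.
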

  
\subsection{Back to the original process.}\label{ssec:back}
 The  definition of the \pruned\ suggests a natural  coupling between the original process and the pruned one: If the two processes are in different states then they act independently, while, if they are in the same configuration $\bx$, they move together unless the \threestateprocess\ goes in a configuration $\by$ such that $q(\by) < \frac{1}{18}n$ or $q(\by) > \frac{1}{2}n$. In that case the \pruned\ goes in $\bz(s(\by))$. 
 
Using  this   coupling, we first show that, if the two processes are in the same configuration, the probability that they get separated is negligible. Then, we show that the $H_2$ exit time of the pruned procedure stochastically dominates 
the $H_2$ exit time of the original process.
 
\begin{claim}\label{claim:q_bounded} 
For every configuration $\x \in H_2$, the probability that the number of 
undecided nodes in the next round of the \threestateprocess\ is not between 
$n/18$ and $n/2$ is
$$\Prob{}{q(\X_{t+1}) \notin \left[\frac{n}{18},\, \frac{n}{2}\right] \,\vert\, \X_t = \x} 
\leqslant e^{-\Theta(n)}.$$
\end{claim}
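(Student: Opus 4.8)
The plan is to split the bad event $\{q(\X_{t+1}) \notin [n/18,\, n/2]\}$ into its two one-sided parts, bound each by $e^{-\Theta(n)}$, and finish with a union bound. The structural fact that makes a Chernoff argument applicable is that, conditioned on $\X_t = \x$, the number $Q = q(\X_{t+1})$ of undecided nodes at the next round is a sum of $n$ \emph{independent} indicator variables: each node pulls a uniformly random neighbor independently of all others, and whether a given node is undecided at round $t+1$ is a deterministic function of its own (fixed) state in $\x$ and the (fixed) state of the single node it pulls. Hence $Q$ concentrates around its mean, which is given in closed form by~\eqref{exp:q}.

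For the lower tail I would simply invoke Claim~\ref{claim:q_lower_bound}: any $\x \in H_2$ has $\abs{s(\x)} \le \gamma\sqrt{n\log n} \le (2/3)n$ for $n$ large enough, so that claim already yields $\Prob{}{Q \le n/18} \le e^{-\Theta(n)}$. For the upper tail the key point is that, throughout $H_2$, the conditional expectation of $Q$ stays below $n/2$ by a constant fraction of $n$. Starting from the exact identity in~\eqref{eq:explbq},
\[
\Ex{}{Q \cond \X_t = \x} = \frac{2q^2 + (n-q)^2 - s^2}{2n} \le \frac{3q^2 - 2nq + n^2}{2n},
\]
where I discard the nonpositive $-s^2$ term (which is in any case only $O(\log n)$ on $H_2$), I would maximize the convex quadratic $g(q) = 3q^2 - 2nq + n^2$ over $q \in [n/18,\, n/2]$. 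Since $g$ is minimized at $q = n/3$, its maximum on the interval is attained at an endpoint; evaluating gives $g(n/18) = \tfrac{97}{108}n^2$ and $g(n/2) = \tfrac{3}{4}n^2$, so $\Ex{}{Q \cond \X_t = \x} \le \tfrac{97}{216}n$. Setting $\beta = \tfrac12 - \tfrac{97}{216} = \tfrac{11}{216} > 0$, the additive form of the Chernoff bound (Appendix~\ref{apx:cbaf}) applied to the sum $Q$ of $n$ independent indicators gives
\[
\Prob{}{Q > \tfrac{n}{2}} \le \Prob{}{Q > \Ex{}{Q} + \beta n} \le e^{-2\beta^2 n} = e^{-\Theta(n)}.
\]

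A union bound over the two tails then yields $\Prob{}{q(\X_{t+1}) \notin [n/18,\, n/2] \cond \X_t = \x} \le e^{-\Theta(n)}$, as claimed. I do not expect a genuine obstacle here: the only points needing care are the independence justification that legitimizes Chernoff for $Q$ and the elementary optimization of $g(q)$ over the admissible range of undecided nodes. The argument also makes transparent \emph{why} the thresholds $n/18$ and $n/2$ defining $H_2$ were chosen as they are, namely so that $\Ex{}{Q \cond \X_t = \x}$ remains bounded away from both endpoints of the interval uniformly over $H_2$.
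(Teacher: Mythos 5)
Your proposal is correct and follows essentially the same route as the paper's proof: the lower tail is delegated to Claim~\ref{claim:q_lower_bound}, and the upper tail comes from bounding $\Ex{}{Q \cond \X_t = \x}$ via~\eqref{eq:explbq} by maximizing $2q^2+(n-q)^2$ over $q \in [n/18, n/2]$ (attained at $q=n/18$, giving $\tfrac{97}{216}n$, which matches the paper's $(\tfrac12 - c)n$) and then applying the additive Chernoff bound. The only difference is that you make the constants and the independence justification explicit, which the paper leaves implicit.
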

\begin{proof}
The lower bound directly follows from Claim~\ref{claim:q_lower_bound}. 
In order to show that $q(\X_{t+1}) \leq n/2$ with probability 
exponentially close to $1$, observe that from \eqref{eq:explbq} we have
\begin{align*} 
\Expec{}{q(\X_{t+1}) \, \vert \, \X_t = \x} 
& = \frac{2q^2 + (n-q)^2 - s^2}{2n} \\
& \leq \frac{2q^2 + (n-q)^2}{2n}\,,
\end{align*}
and for $n/18 \leq q \leq n/2$ the maximum of $2q^2 + (n-q)^2$ is obtained at 
$q = n/18$. Hence,
\begin{align*}
\frac{2q^2 + (n-q)^2}{2n} 
& \leq \frac{\frac{2}{18^2}n^2 + \left( n-\frac{1}{18}n \right)^2}{2n} \\ 
&= \left( \frac{1}{2} - c \right)n\,.
\end{align*}
for a constant $c > 0$. By using the additive form of the Chernoff bound (see (\ref{eq:cbafgeq}) in 
Appendix~\ref{apx:cbaf}) with $\lambda = c \cdot n$ and $\mu = 
\Expec{}{q(\X_{t+1}) \, \vert \, \X_t = \x} \leqslant n/2 - \lambda$, we obtain
\begin{align*}
\Prob{}{q(\X_{t+1}) \geq \frac{1}{2}n \,\vert\, \X_t = \x} 
& \leq \Prob{}{q(\X_{t+1}) \geq \mu + \lambda \,\vert\, \X_t = \x}\\
& \leq e^{-2 \cdot c^2 n^2/n} = e^{-\Theta(n)}.
\end{align*}
\end{proof}

%%%OldProof with explicit costants
\iffalse
\begin{proof}
The lower bound directly follows from Claim~\ref{claim:q_lower_bound}. 
In order to show that $q(\X_{t+1}) \leq n/2$ with probability 
exponentially close to $1$, observe that from \eqref{eq:explbq} we have
\begin{align*} 
\Expec{}{q(\X_{t+1}) \, \vert \, \X_t = \x} 
& = \frac{2q^2 + (n-q)^2 - s^2}{2n} \\
& \leq \frac{2q^2 + (n-q)^2}{2n}\,,
\end{align*}
and for $n/18 \leq q \leq n/2$ the maximum of $2q^2 + (n-q)^2$ is obtained at 
$q = n/18$. Hence,
%
\begin{align*}
\frac{2q^2 + (n-q)^2}{2n} 
& \leq \frac{\frac{2}{18^2}n^2 + \left( n-\frac{1}{18}n \right)^2}{2n} \\ 
& = \frac{291}{648}n = \left( \frac{1}{2} - \frac{33}{648} \right)n\,.
\end{align*}
%
By using the additive form of the Chernoff bound (see (\ref{eq:cbafgeq}) in 
Appendix~\ref{apx:cbaf}) with $\lambda = \frac{33}{648}n$ and $\mu = 
\Expec{}{q(\X_{t+1}) \, \vert \, \X_t = \x} \leqslant n/2 - \lambda$, we obtain
\begin{align*}
\Prob{}{q(\X_{t+1}) \geq \frac{1}{2}n \,\vert\, \X_t = \x} 
& \leq \Prob{}{q(\X_{t+1}) \geq \mu + \lambda \,\vert\, \X_t = \x}\\
& \leq e^{-2\cdot 33^2n^2/648^2n} = e^{-\Theta(n)}.
\end{align*}
\end{proof}
\fi
%%%%%%%%%%

\begin{lemma}[Phase $H_2$]\label{lemma:symmetrybreaking}
Starting from any configuration $\x \in H_2$, the \threestateprocess\ reaches 
a configuration $\X' \in H_4$ within $\bigO(\log n)$ rounds, w.h.p.
\end{lemma}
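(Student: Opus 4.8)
The plan is to transfer the logarithmic
hitting-time bound already established for the \pruned\ (namely
Corollary~\ref{prun:ubound}) back to the \threestateprocess\ by means of
the coupling described in Subsection~\ref{ssec:back}. First I would set up the
coupling formally: run the two chains together from the common starting
configuration $\x \in H_2$, letting them make identical moves as long as the
\threestateprocess\ would land in a configuration $\by$ with
$q(\by) \in [\frac{1}{18}n, \frac{1}{2}n]$, and declaring a \emph{separation}
event exactly when the \threestateprocess\ attempts a transition to some $\by$
with $q(\by) < \frac{1}{18}n$ or $q(\by) > \frac{1}{2}n$ (in which case the
\pruned\ is redirected to $\bz(s(\by))$). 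By the construction of the pruned
transition probabilities, this coupling is valid and the two processes share
the same marginal law of the bias $s$ up to the first separation time.

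The key quantitative input is Claim~\ref{claim:q_bounded}, which states that
from any $\x \in H_2$ the probability of a single-round excursion of $q$ outside
$[\frac{1}{18}n, \frac{1}{2}n]$ is $e^{-\Theta(n)}$. I would use this to bound
the separation probability: as long as both processes remain in $H_2$, each
round incurs separation with probability at most $e^{-\Theta(n)}$, so by a union
bound over the $\bigO(\log n)$ rounds guaranteed by
Corollary~\ref{prun:ubound}, the probability that the two processes ever
separate before the pruned one exits into $H_4$ is at most
$\bigO(\log n) \cdot e^{-\Theta(n)} = e^{-\Theta(n)}$, hence negligible. On the
complementary (overwhelmingly likely) event, the two processes are identical up
to the exit time, so the \threestateprocess\ reaches a configuration in $H_4$
within $\bigO(\log n)$ rounds as well.

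More carefully, I would phrase the conclusion through stochastic dominance of
the $H_2$-exit times: conditioned on no separation, the \threestateprocess\ and
the \pruned\ follow the same trajectory, so their exit times from $H_2$
coincide; accounting for the negligible separation probability only degrades the
success probability by an additive $e^{-\Theta(n)}$ term, which is absorbed into
the ``w.h.p.'' guarantee. Since Corollary~\ref{prun:ubound} places the pruned
process in $H_4$ within $\bigO(\log n)$ rounds w.h.p., the same bound then holds
for the original process. The only subtlety to handle with care is that the
redirected transitions of the \pruned\ preserve the marginal distribution of
$s$ (so that Claim~\ref{unstable+goodbias} and hence
Lemma~\ref{lemma:symmetrygeneric} genuinely apply to the pruned chain), which
has already been observed; thus the main obstacle is purely bookkeeping, namely
ensuring that the coupling is well-defined at every step and that the union
bound over rounds is applied only while the processes are still inside $H_2$.
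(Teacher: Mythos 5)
Your proposal is correct and follows essentially the same route as the paper: couple the \threestateprocess\ with the \pruned, invoke Corollary~\ref{prun:ubound} for the pruned chain, and use Claim~\ref{claim:q_bounded} plus a union bound over the $\bigO(\log n)$ rounds to show the separation probability is negligible. The paper's version additionally union-bounds over all (at most $n^2$) configurations in $H_2$ at each round, but since Claim~\ref{claim:q_bounded} holds uniformly over $H_2$ your per-round bound suffices and the argument is the same.
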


\begin{proof} Let $\{\X_t\}$ and $\{\Y_t\}$ be the original process and the pruned one, respectively. Let $\x 
\in H_2$, note that if $\X_t = \Y_t = \x$ then
\[
\Y_{t+1} = \left\{
\begin{array}{cl}
\X_{t+1} & \mbox{ if } \X_{t+1} \in H_2 \\[2mm]
\bz(s(\mathbf{X}_{t+1})) & \mbox{ otherwise }
\end{array}
\right.
\]
Let  $\tau = \inf \{t \in \mathbb{N} \,:\, |s(\X_t)| 
\geqslant \sqrt{n \log n} \}$ and let 
$
 \tau^{*}  = \inf \{t \in \mathbb{N} \,:\, |s(\Y_t)| 
\geqslant \sqrt{n \log n} \}.
$
\iffalse
\begin{align*}
\tau & = \inf \{t \in \mathbb{N} \,:\, |s(\X_t)| 
\geqslant \sqrt{n \log n} \} \\ 
\tau^{*} & = \inf \{t \in \mathbb{N} \,:\, |s(\Y_t)| 
\geqslant \sqrt{n \log n} \} 
= \inf \{t \in \mathbb{N} \,:\, \Y_t \in \mathcal{A}\setminus H_2 \}
\end{align*} \fi
For any $\x \in H_2$ and any round $t$ we define $\rho_x^t$ the event $\{\X_t\}$ and $\{\Y_t\}$ \textit{separated} at round $t+1$, i.e. $\rho_x^t = (\X_t = \Y_t = \x) \bigwedge (\X_{t+1} \neq \Y_{t+1})$. Observe that, if the two coupled processes start in the same configuration 
$\x_0 \in H_2$ and $\tau > c \log n$, then either $\tau^* > c \log n$ 
as well, or a round $t \leqslant c \log n$ exists such that, for some $\x \in H_2$ the event $\rho_x^t$ occurred. Hence,

\vspace{-3mm}
\begin{small}
\begin{align}\label{eq:couplingbound}
&\Prob{\x_0,\x_0}{\tau > c \log n} \leqslant \nonumber\\
&\leq \Prob{\x_0,\x_0}{\{\tau^* > c \log n\} \cup 
\left\lbrace 
\begin{array}{l} 
\exists t \leq c\log n \\ 
\exists \x \in H_2
\end{array}:
\rho_x^t \right\rbrace }
\nonumber\\
&\leq \Prob{\x_0,\x_0}{\tau^* > c \log n}+ \Prob{\x_0,\x_0}{
\begin{array}{l} \exists t \leq c\log n \\ 
\exists \x \in H_2
\end{array}:\rho_x^t}.
\end{align}
\end{small}

\noindent
As for the first term in~\eqref{eq:couplingbound}, from the analysis of the
pruned process (Corollary~\ref{prun:ubound}) we have that it is upper
bounded by $1/n$. As for the second term, we get that
\begin{align}
\Prob{\x_0,\x_0}{
\begin{array}{l} \exists t \leq c\log n \\ 
\exists \x \in H_2
\end{array}: 
\rho_x^t} &\leq \sum_{t=1}^{c \log n}\Prob{\x_0,\x_0}{ \exists \x \in H_2: 
\rho_x^t}\nonumber\\
&= \sum_{t=1}^{c \log n}\sum_{\x \in H_2}\Prob{\x_0,\x_0}{\rho_x^t}\nonumber\\
&\leq \sum_{t=1}^{c \log n}\frac{n^2}{e^{-\Theta(n)}}\label{cardinality}\\
&\leq \frac{1}{n}\nonumber\,,
\end{align}
where in \eqref{cardinality} we used Claim~\ref{claim:q_bounded} and the fact 
that $\vert H_2 \vert$ is at most all the possible combinations of   parameters
$q$ and $s$.
\end{proof}

\section{Convergence to the majority}\label{sec:majority}

%\medskip \noindent
%\textbf{Outline of the proof.} 

In this section we provide the arguments needed to prove our second main 
result, namely Theorem~\ref{thm:mainmajority}, which essentially states that, 
starting from any sufficiently-biased configuration, the \threestateprocess\ 
converges to the monochromatic configuration where all nodes support the initial majority 
color. We recall that the outline of the proof is given in Section~\ref{sec:overview}. 
Here, we formalize the arguments of the provided high-level description. To increase readability, the  
   proofs of the   technical claims are moved to the appendix.

\paragraph*{Phase $H_4$ (the age of the undecideds)} 

We first show, that under some parameter ranges including $H_4$ (and hence when
the number of the undecideds are large enough), the growth of the bias is 
exponential.

\begin{claim}\label{claim:s_increase}
Let $\gamma$ be any positive constant and $\x \in \mathcal{C}$ be any 
configuration such that $s \geq \gamma\sqrt{n \log n}$ and $q \geq 
\frac{1}{18}n$. Then, it holds that $s(1 + \frac{1}{36}) < S < 2s $, w.h.p.
\end{claim}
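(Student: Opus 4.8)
The plan is to read both inequalities as a two-sided concentration statement for the next-round bias $S = A - B$ around its mean $\Ex{}{S \cond \X_t = \x} = s(1 + q/n)$ (Equation~\eqref{exp:s}), and to prove them exactly as in the second item of Claim~\ref{unstable+goodbias}: bound $A$ and $B$ separately by the additive Chernoff bound and combine the two estimates through a union bound. The structural fact I would rely on is that both $A$ and $B$ are sums of at most $n$ independent Bernoulli indicators, one per node, since in the \pull\ model each node chooses the neighbour it pulls independently; hence the additive Chernoff bound (Appendix~\ref{apx:cbaf}) applies to each of them with at most $n$ trials. Before invoking it I would locate the target interval relative to the mean: from $q \geq \frac{1}{18}n$ we get $\Ex{}{S} \geq s(1 + \frac{1}{18})$, so the mean lies a margin at least $\frac{1}{36}s$ above the lower target $s(1 + \frac{1}{36})$; and since $s \geq \gamma\sqrt{n\log n}$ forces $a \geq s$ and hence $q \leq n - s$, the mean lies a margin $2s - \Ex{}{S} = s(n-q)/n$ below the upper target $2s$.

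For the lower bound I would mirror the computation already carried out in Claim~\ref{unstable+goodbias}. Applying the additive Chernoff bound with deviation $\frac{1}{72}s$ gives $A \geq \Ex{}{A} - \frac{1}{72}s$ and $B \leq \Ex{}{B} + \frac{1}{72}s$, each failing with probability at most $e^{-2s^2/(72^2 n)} \leq n^{-2\gamma^2/72^2} = n^{-\Theta(1)}$, where the last step uses $s^2/n \geq \gamma^2\log n$. A union bound then yields $S = A - B \geq \Ex{}{S} - \frac{1}{36}s \geq s(1 + \frac{1}{18}) - \frac{1}{36}s = s(1 + \frac{1}{36})$ w.h.p., which is the left inequality.

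For the upper bound I would split the available budget and ask for $A \leq \Ex{}{A} + \lambda_1$ and $B \geq \Ex{}{B} - \lambda_2$ with $\lambda_1 + \lambda_2 = s(n-q)/n$, which forces $S \leq \Ex{}{S} + s(n-q)/n = 2s$. Taking $\lambda_1 = \lambda_2 = \frac{1}{2}s(n-q)/n$, the additive Chernoff bound over at most $n$ trials makes each tail fail with probability at most $e^{-\Theta((s(n-q)/n)^2/n)}$. Whenever $n - q = \Omega(n)$ this margin is $\Omega(\sqrt{n\log n})$ and the exponent is $\Omega(\log n)$, so the bound holds w.h.p.; this is exactly the regime in which the claim is actually applied, since every configuration through which the process enters $H_4$ (from the symmetry-breaking phase $H_2$, or from $H_5$ and $H_7$) has $q \leq \frac{1}{2}n$, and one checks via \eqref{exp:q} that $q$ then stays bounded below $n/2$ throughout $H_4$.

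I expect the only genuine obstacle to be the upper tail when $q$ approaches $n$. There $b = (n-q-s)/2$ becomes small and $n-q$ can be as small as $s$, so the deviation budget $s(n-q)/n$ degrades to $s^2/n = \Theta(\log n)$; with fluctuations of $A$ at scale $\sqrt{s}$ this budget is too small for any Chernoff-type estimate to force $S < 2s$ with high probability (indeed the extreme configuration $b \approx 0$, $q \approx n - s$, $s = \Theta(\sqrt{n\log n})$ is a genuine boundary case). I would therefore handle the upper bound under the standing restriction $n - q = \Omega(n)$ supplied by the phase analysis, where the argument above is routine, rather than for literally every configuration; everywhere else the whole proof reduces to the pair of additive Chernoff estimates combined with \eqref{exp:a} and \eqref{exp:s}.
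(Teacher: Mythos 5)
Your argument is the same as the paper's: for each direction you bound $A$ and $B$ separately with the additive Chernoff bound around the means given by \eqref{exp:a} and \eqref{exp:s} and combine by a union bound, and your lower-bound computation (deviation of order $s/72$ on each variable, leaving a margin $s/36$ against $\Ex{}{S \cond \X_t=\x} = s(1+q/n) \geq s(1+\tfrac{1}{18})$) is essentially identical to the one in the paper, which takes $\lambda = \gamma\sqrt{n\log n}/72$ and then uses $\gamma\sqrt{n\log n}\leq s$.

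Where you diverge is the upper bound, and there you have put your finger on a real defect. The paper's proof passes from $s(1+\tfrac{q}{n})+\tfrac{\gamma}{2}\sqrt{n\log n}$ to $s(1+\tfrac12)+\tfrac12 s$, i.e.\ it silently uses $q\leq n/2$, which is not among the hypotheses: $q\geq n/18$ and $s\geq\gamma\sqrt{n\log n}$ still permit $q$ as large as $n-s$. Your boundary example ($b\approx 0$, $q\approx n-s$, $s=\Theta(\sqrt{n\log n})$) is sound: there $\Ex{}{S\cond \X_t=\x}=2s-s^2/n$, so the gap to $2s$ is only $\Theta(\log n)$, while $A$ fluctuates at scale $\sqrt{s}=\Theta((n\log n)^{1/4})$, hence $S\geq 2s$ with probability tending to $1/2$ and the upper inequality is false as literally stated. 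Your repair --- proving $S<2s$ only under the standing restriction $n-q=\Omega(n)$ (say $q\leq n/2$) --- is exactly the hypothesis the paper's computation implicitly assumes, and it suffices where the upper bound is actually invoked, namely part (ii) of Lemma~\ref{lemma:ageofundecideds}, where the initial configuration is chosen by the prover and a Claim~\ref{claim:q_bounded}-type argument keeps $q$ below $n/2$. One small caveat: your parenthetical assertion that \emph{every} entry into $H_4$ has $q\leq n/2$ is not quite right (the process can enter $H_4$ from $H_1$, where $q\geq n/2$, via Lemma~\ref{lemma:starters}), but that path only ever uses the lower bound of the claim, which you establish for all admissible $q$, so your overall argument is unaffected.
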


\begin{proof}
Recall that $S = A - B$. In order to show that $S>s(1 + \frac{1}{36})$ w.h.p., we provide two independent bounds to the values of $A$ and $B$, respectively. We use the additive form of the Chernoff bound ((\ref{eq:cbafleq}) and (\ref{eq:cbafgeq}) in Appendix \ref{apx:cbaf}) with $\lambda = \frac{\gamma\sqrt{n \log n}}{72}$. Hence, we have
\begin{align*}
\Prob{}{A \leq  \Expec{}{A \vert X_t = \x} - \lambda} &\leq e^{-2\lambda^2/n}\\ 
&= e^{-2\gamma^2 \log n/72^2}\\
&= \frac{1}{n^{\Theta(1)}},
\end{align*} 
\noindent and
\begin{align*}
\Prob{}{B \geq  \Expec{}{B \vert X_t = \x} + \lambda} &\leq e^{-2\lambda^2/n}\\ &= e^{-2\gamma^2 \log n/72^2}\\ &= \frac{1}{n^{\Theta(1)}}.
\end{align*} 
\noindent Then w.h.p.
\begin{align*}
S &>  \Expec{}{A \vert X_t = \x} - \lambda - [B \vert X_t = x] - \lambda\\
&= \Expec{}{A - B \vert X_t = \x} - 2 \lambda\\
&= \Expec{}{S \vert X_t = \x} - 2 \lambda\\
&= s(1 + \frac{q}{n}) - \frac{\gamma\sqrt{n \log n}}{36}\\
&\geq s(1 + \frac{q}{n}) - s/36\\
&\geq s(1 + \frac{1}{18} - \frac{1}{36})\\
&=s(1 + \frac{1}{36}).
\end{align*} 
We now show that $S<2s$ w.h.p. using similar arguments as above. Once again, we use the additive form of the Chernoff bound with $\lambda = \frac{\gamma\sqrt{n \log n}}{4}$. We have
\begin{align*}
\Prob{}{A \geq  \Expec{}{A \vert X_t = \x} + \lambda}& \leq e^{-2\lambda^2/n}\\ &= e^{-2\gamma^2 \log n/16}\\ &= \frac{1}{n^{\Theta(1)}},
\end{align*} 
\noindent and
\begin{align*}
\Prob{}{B \leq  \Expec{}{B \vert X_t = \x} - \lambda} &\leq e^{-2\lambda^2/n}\\ &= e^{-2\gamma^2 \log n/16}\\ &= \frac{1}{n^{\Theta(1)}}.
\end{align*} 
\noindent As a consequence, we have that w.h.p.
\begin{align*}
S &<  \Expec{}{A \vert X_t = \x} + \lambda - [B \vert X_t = x] + \lambda\\
&= \Expec{}{A - B \vert X_t = \x} + 2 \lambda\\
&= \Expec{}{S \vert X_t = \x} + 2 \lambda\\
&= s(1 + \frac{q}{n}) + \frac{\gamma}{2}\sqrt{n \log n}\\
&< s(1 + \frac{1}{2}) + \frac{1}{2}s\\
&=2s.
\end{align*}
\end{proof}

\begin{lemma}[Phase $H_4$]\label{lemma:ageofundecideds}
Let $\x \in H_4$ be a configuration with $a>b$. Then, (i) starting from $\x$, 
the \threestateprocess\ reaches a configuration $\X' \in H_6$ with $a>b$ 
within $\bigO(\log n)$ rounds, w.h.p. Moreover, (ii) an initial configuration
$\y \in H_4$ exists such that the \threestateprocess\ stays in $H_4$ for 
$\Omega(\log n)$ rounds, w.h.p.
\end{lemma}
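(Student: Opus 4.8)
The plan is to derive both statements from the one-round multiplicative drift of Claim~\ref{claim:s_increase} together with the lower bound on the undecided nodes from Claim~\ref{claim:q_lower_bound}, chained over $\bigO(\log n)$ rounds by a union bound. Call a round \emph{good} if the conclusions $s(1+\frac{1}{36}) < S < 2s$ of Claim~\ref{claim:s_increase} and $Q \geq n/18$ of Claim~\ref{claim:q_lower_bound} both hold at the state actually visited. Each of these fails with probability only $n^{-\Theta(1)}$, so by the Markov property and a union bound over the first $c\log n$ transitions, \emph{all} of them are good with probability $1 - n^{-\Theta(1)}$; the entire argument then runs on this good event.

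For part (i), let $\tau = \inf\{t : s(\X_t) \geq \frac{2}{3}n\}$. As long as $t < \tau$, the process sits where both claims apply: $H_4$ supplies $s \geq \gamma\sqrt{n\log n}$, and $|s| \leq \frac{2}{3}n$ lets Claim~\ref{claim:q_lower_bound} keep $q \geq n/18$ (so the process cannot leak into $H_5$ through the $q$-boundary), while the lower drift keeps $s$ above $\gamma\sqrt{n\log n}$ (so it cannot exit through the bottom $s$-boundary). On the good event the bias obeys $s_t \geq \gamma\sqrt{n\log n}\,(1+\frac{1}{36})^t$, which passes $\frac{2}{3}n$ after $t = \bigO(\log n)$ rounds, giving $\tau = \bigO(\log n)$ w.h.p. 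To see that $\X_\tau \in H_6$ rather than $H_7$, note that the state at time $\tau-1$ has $s < \frac{2}{3}n$, so Claim~\ref{claim:q_lower_bound} forces $q(\X_\tau) \geq n/18 \gg \sqrt{n\log n}$, placing $\X_\tau$ in $H_6$ (recall $H_7$ requires $q \leq \sqrt{n\log n}$). Finally $S > s(1+\frac{1}{36}) > 0$ at every good round keeps the bias strictly positive, so $a > b$ is preserved throughout and in particular at $\X_\tau$.

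For part (ii), the matching $\Omega(\log n)$ lower bound comes from the \emph{upper} half $S < 2s$ of Claim~\ref{claim:s_increase}. I would take $\y$ with $s(\y) = \lceil \gamma\sqrt{n\log n}\rceil$ and $q(\y) = n/2$, which lies in $H_4$. Running the same good-round analysis, the bias at most doubles each round, so $s_t < \gamma\sqrt{n\log n}\,2^t$ stays below $\frac{2}{3}n$ for $t = \Omega(\log(n/\log n)) = \Omega(\log n)$ rounds; meanwhile the lower drift keeps $s \geq \gamma\sqrt{n\log n}$ and Claim~\ref{claim:q_lower_bound} keeps $q \geq n/18$. Hence the process stays inside $H_4$ for $\Omega(\log n)$ rounds w.h.p.

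The routine-but-delicate point is the chaining itself: the two claims are one-round statements conditioned on the current configuration, so iterating them requires the Markov property plus a $\bigO(\log n)$-fold union bound, and one must verify inductively that every configuration visited before $\tau$ genuinely satisfies the hypotheses $s \geq \gamma\sqrt{n\log n}$ and $q \geq n/18$ of \emph{both} claims. This is precisely what the two drift bounds certify — the lower bound prevents escape through the $s$- and $q$-floors and the upper bound controls the speed — so no new estimate is needed beyond Claims~\ref{claim:q_lower_bound} and~\ref{claim:s_increase}; the lemma merely packages them into a two-sided exponential-growth sandwich.
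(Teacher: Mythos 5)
Your proof is correct and follows essentially the same route as the paper's: both parts iterate Claim~\ref{claim:s_increase} together with Claim~\ref{claim:q_lower_bound} over $\bigO(\log n)$ rounds with a union bound, using the lower drift $(1+\frac{1}{36})s$ to bound the exit time from above and the upper drift $S<2s$ to get the $\Omega(\log n)$ lower bound. The only cosmetic differences are the witness for part (ii) (you start from $s=\Theta(\sqrt{n\log n})$ while the paper uses $s(\y)=n^{2/3}$) and your explicit check that the exit configuration lands in $H_6$ rather than $H_7$, which is the same observation the paper makes via $q(\X')\geq n/18$.
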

\begin{proof} We iteratively apply Claim \ref{claim:s_increase} and Claim \ref{claim:q_lower_bound} and after $t=\Theta(\log n)$ rounds we have that either there is a round $t'<t$ such that $s(\X_{t'})>\frac{2}{3}n$ or $s(\X_{t}) > (1+1/36)^t s(\x) \geq (1+1/36)^t$. In both cases, the process has reached a configuration $\X'$ such that $s(\X') \geq \frac{2}{3}n$ and $q(\X') \geq \frac{n}{18}$: So $\X'$ belongs to $H_6$. Since each step of the iteration holds w.h.p. and the number of steps is $\bigO(\log n)$, we easily obtain that the result holds w.h.p. by a simple application of the Union Bound.

Concerning the second part of the lemma, consider an initial configuration $\y$ such that $s(\y) = n^{2/3}$. By iteratively applying (the upper bound of) Claim \ref{claim:s_increase} and Claim \ref{claim:q_lower_bound} for $t= \frac{1}{4} \log n$ rounds, we have that $s(X_t) < 2^t s(\y) = 2^t n^{2/3}=n^{1/4}n^{2/3}=o(n)$.
\end{proof}

%H6
\paragraph*{Phase $H_6$ (the victory of the majority)}

This is the phase in which a large bias let the nodes converge to the majority color within a logarithmic number of rounds. We first prove that the number of nodes that support the minority color decreases exponentially fast (Claim \ref{claim:b_decrease}) and that the bias is preserved round by round (Claim \ref{claim:s_preserved} and Claim \ref{claim:q_preserved}). Then, when $b \leq 2\sqrt{n \log n}$, the number of undecided nodes starts to decrease exponentially fast as well (Claim \ref{claim:q_decrease}). At the very end, when there are only few nodes (i.e., $\bigO(\sqrt{n \log n})$) that do not still support the majority color, the minority color disappears in few steps and thus the \threestateprocess\ converges to majority within $\bigO(\log n)$ rounds (Claim \ref{claim:a_wins}). 

\begin{claim}\label{claim:b_decrease}
Let $\x \in \mathcal{C}$ be any configuration such that $\vert s \vert \geq \frac{2}{3}n$ and $b \geq \log n$ then it holds that $B \leq b(1 - \frac{1}{9})$, w.h.p.
\end{claim}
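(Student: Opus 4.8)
The plan is to show the claimed multiplicative decrease $B \le b\left(1 - \frac{1}{9}\right)$ by first computing $\Expec{}{B \cond \X_t = \x}$ exactly and then concentrating $B$ around this expectation via a Chernoff bound, exactly in the style of Claims~\ref{claim:a_increase} and~\ref{claim:s_increase}. First I would recall the expected-value formula~\eqref{eq:expdecrb} already derived in the overview, namely
\[
\Expec{}{B \cond \X_t = \x} = b\left(\frac{b+2q}{n}\right) = b\left(1 - \frac{2s + 3b - n}{n}\right),
\]
where I use $s = |s|$ assuming w.l.o.g.\ that \colB\ is the minority color. The key deterministic step is to bound this expectation using the hypothesis $|s| \ge \frac{2}{3}n$: since $q = n - a - b$ and $s = a - b$, one has $b + 2q = n - 2b - 2s + b\cdot(\text{correction})$; more directly, substituting $s \ge \frac{2}{3}n$ into $\frac{b+2q}{n}$ and using $b \ge 0$ gives a factor bounded away from $1$. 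In the overview it is already observed that $s > \frac{2}{3}n$ forces $\Expec{}{B \cond \X_t = \x} \le \frac{2}{3}b$, so the expectation is comfortably below the target $\frac{8}{9}b$, leaving a gap to absorb the fluctuations.

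Next I would apply the multiplicative form of the Chernoff bound (equation~\eqref{eq:cbmfgeq} in Appendix~\ref{apx:cbmf}) to the random variable $B$, which counts nodes ending up \colB-colored and is a sum of independent indicators over the nodes (each node's pull is independent). With $\mu = \Expec{}{B \cond \X_t = \x} \le \frac{2}{3}b$ and target threshold $\frac{8}{9}b$, the required deviation is a constant multiplicative factor above the mean, so choosing an appropriate constant $\delta$ (of order $\frac{1}{6}$, since $\frac{8}{9} = \frac{2}{3}\cdot\frac{4}{3}$) yields a failure probability of the form $e^{-\Theta(\mu)} = e^{-\Theta(b)}$. Here the hypothesis $b \ge \log n$ is exactly what converts this into a high-probability statement: $e^{-\Theta(b)} \le e^{-\Theta(\log n)} = n^{-\Theta(1)}$.

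The main obstacle I anticipate is not the probabilistic step but the bookkeeping to guarantee that the gap between $\Expec{}{B \cond \X_t = \x}$ and $\frac{8}{9}b$ is genuinely a constant multiplicative gap, uniformly over all admissible $(a,b,q)$ with $|s| \ge \frac{2}{3}n$. One must verify that $\frac{b+2q}{n}$ stays bounded by a constant strictly less than $\frac{8}{9}$ in the worst case --- the tightest regime being when $q$ is as large as possible subject to $s \ge \frac{2}{3}n$, i.e.\ when $b$ is small and most non-majority mass is undecided. A careful substitution shows $b + 2q = n + q - a - b + q = \ldots$ collapses to a clean bound once $s \ge \frac{2}{3}n$ is imposed, confirming $\frac{b+2q}{n} \le \frac{2}{3} < \frac{8}{9}$, which is the crux. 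Once that constant gap is pinned down, the Chernoff estimate is routine and mirrors the earlier claims verbatim.
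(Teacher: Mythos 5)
Your proposal is correct and follows essentially the same route as the paper's proof: bound $\Expec{}{B \cond \X_t = \x} = b\left(\frac{b+2q}{n}\right) \le \frac{2}{3}b$ using $|s| \ge \frac{2}{3}n$ (via $b+2q = 2n-2s-3b$), then apply the multiplicative Chernoff bound~\eqref{eq:cbmfgeq}, with $b \ge \log n$ supplying the $n^{-\Theta(1)}$ failure probability. The only nit is that the deviation parameter is $\delta = \frac{1}{3}$ (since $(1+\delta)\cdot\frac{2}{3} = \frac{8}{9}$), not ``of order $\frac{1}{6}$'' as you wrote, but this does not affect the argument.
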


\begin{proof}
W.l.o.g. we assume that $a>b$. From (\ref{eq:expdecrb}), since $s\geq \frac{2}{3}n$, we have that
\begin{align*} \Expec{}{B \, \vert \, X_t = \x} &= b \left( 1-\frac{2s+3b-n}{n} \right)\\
&\leq b \left( 1-\frac{2s-n}{n} \right)\\
&\leq b \left( 1-\frac{\frac{4}{3}n-n}{n} \right)\\
&= b \left( 1- \frac{1}{3}\right).
\end{align*}
\noindent
Then we apply the multiplicative form of the Chernoff Bound ((\ref{eq:cbmfgeq}) in Appendix \ref{apx:cbmf}) with $\delta = \frac{1}{3}$, and we obtain
\begin{align*}
\Prob{}{B \geq (1+\delta)\left( 1- \frac{1}{3}\right)b} &\leq e^{-b(1-\frac{1}{3})\delta^2/3}\\  
&\leq e^{-\log n(1-\frac{1}{3})\delta^2/3}\\ &= \frac{1}{n^{\Theta(1)}}.
\end{align*}
\noindent
As a consequence, we have that w.h.p.
\begin{align*}
B &\leq b(1+\delta)\left(1-\frac{1}{3}\right)\\ &= b\left(1+\frac{1}{3}\right)\left(1-\frac{1}{3}\right)\\
&= b \left(1-\frac{1}{9}\right).
\end{align*}
\end{proof}

In order to iteratively apply the above claim we now show that, if there are enough undecided nodes, the bias is preserved round by round until the number of \colB-colored nodes decreases below $2\sqrt{n \log n}$.

\begin{claim}\label{claim:s_preserved}
Let $\x \in \mathcal{C}$ be any configuration such that $\vert s \vert \ge \frac{2}{3}n$ and $q \geq \sqrt{n \log n}$. Then it holds that $S \ge \frac{2}{3}n$, w.h.p.
\end{claim}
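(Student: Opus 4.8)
The plan is to show that the bias $S$ at the next round stays above $\frac{2}{3}n$ with high probability, under the hypotheses $s \geq \frac{2}{3}n$ and $q \geq \sqrt{n\log n}$. Recall from~\eqref{exp:s} that $\Expec{}{S \,\vert\, \X_t = \x} = s(1 + q/n)$, so the expected bias is at least $s$ itself; the point is that the bias drifts upward (or at worst stays put), and the goal is to show it does not drop below $\frac{2}{3}n$ due to fluctuations. The natural approach mirrors Claim~\ref{claim:s_increase}: write $S = A - B$, control $A$ from below and $B$ from above via the additive Chernoff bound, and combine the two estimates.

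First I would apply the additive Chernoff bound to $A$ and $B$ separately with a deviation parameter $\lambda$. Concretely, from~\eqref{eq:cbafleq} and~\eqref{eq:cbafgeq} in Appendix~\ref{apx:cbaf}, with a suitable $\lambda = \Theta(\sqrt{n\log n})$, one gets $\Prob{}{A \leq \Expec{}{A} - \lambda} \leq e^{-2\lambda^2/n} = n^{-\Theta(1)}$ and similarly $\Prob{}{B \geq \Expec{}{B} + \lambda} \leq n^{-\Theta(1)}$. Taking a union bound, with high probability both deviations fail, so $S = A - B > \Expec{}{S} - 2\lambda = s(1 + q/n) - 2\lambda$. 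Since $\Expec{}{S} \geq s \geq \frac{2}{3}n$, it suffices to absorb the $2\lambda$ slack: because $s \geq \frac{2}{3}n = \Theta(n)$ while $2\lambda = \Theta(\sqrt{n\log n}) = o(n)$, the additive loss is lower-order and we retain $S \geq \frac{2}{3}n$ with high probability. The hypothesis $q \geq \sqrt{n\log n}$ provides the extra cushion $sq/n \geq \frac{2}{3}\sqrt{n\log n}$ in the expected drift, which comfortably dominates the Chernoff slack $2\lambda$ when $\lambda$ is chosen as a small enough constant times $\sqrt{n\log n}$.

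The main thing to be careful about, rather than a genuine obstacle, is the choice of $\lambda$: it must be large enough that the failure probabilities $e^{-2\lambda^2/n}$ are polynomially small in $n$ (forcing $\lambda = \Omega(\sqrt{n\log n})$), yet small enough that the positive drift contribution $sq/n$ strictly exceeds $2\lambda$ so the net bound stays at or above $\frac{2}{3}n$. Picking $\lambda = \frac{1}{4}\sqrt{n\log n}$ (or any sufficiently small constant multiple) makes $e^{-2\lambda^2/n} = e^{-\Theta(\log n)} = n^{-\Theta(1)}$, while $sq/n \geq \frac{2}{3}\sqrt{n\log n}$ exceeds $2\lambda$. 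I would then conclude, exactly as in Claim~\ref{claim:s_increase}, that w.h.p. $S \geq \Expec{}{S} - 2\lambda \geq s + sq/n - 2\lambda \geq \frac{2}{3}n$, finishing the proof by the union bound over the two Chernoff events.
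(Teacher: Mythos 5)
Your proposal is correct and follows essentially the same route as the paper's proof: decompose $S = A - B$, apply the additive Chernoff bound to each of $A$ and $B$ with $\lambda$ a small constant multiple of $\sqrt{n\log n}$, and observe that the drift term $sq/n \geq \frac{2}{3}\sqrt{n\log n}$ (guaranteed by the hypothesis $q \geq \sqrt{n\log n}$) absorbs the $2\lambda$ slack, so that $S \geq s \geq \frac{2}{3}n$ w.h.p. You also correctly identify that it is this cushion, and not merely the fact that $2\lambda = o(n)$, that closes the argument, which is exactly the point of the paper's final chain of inequalities.
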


\begin{proof}
W.l.o.g. we assume that $a>b$. We recall that $S = A - B$, thus we provide two independent bounds to the values of $A$ and $B$ respectively. We use the additive form of the Chernoff bound (see (\ref{eq:cbafleq}) and (\ref{eq:cbafgeq}) in Appendix \ref{apx:cbaf}) with $\lambda = \epsilon\frac{\sqrt{n \log n}}{2}$. We have
\begin{align*}\Prob{}{A \leq  \Expec{}{A \vert X_t = x} - \lambda}& \leq e^{-2\lambda^2/n}\\ &= e^{-\epsilon^2\log n/2}\\ &= \frac{1}{n^{\Theta(1)}},
\end{align*}
\noindent and
\begin{align*}\Prob{}{B \geq  \Expec{}{B \vert X_t = x} + \lambda}& \leq e^{-2\lambda^2/n}\\ &= e^{-\epsilon^2\log n/2}\\ &= \frac{1}{n^{\Theta(1)}}.
\end{align*}
\noindent Then it holds that, w.h.p.
\begin{align*}
S &\geq  \Expec{}{A \vert X_t = x} - \lambda - \Expec{}{B \vert X_t = x} - \lambda\\
&= \Expec{}{A - B \vert X_t = x} - 2 \lambda\\
&= \Expec{}{S \vert X_t = x} - 2 \lambda\\
&= s(1 + \frac{q}{n}) - \epsilon\sqrt{n \log n}\\
&\geq s + \frac{2\sqrt{n \log n}}{3} - \epsilon\sqrt{n \log n}\\
&>s
\end{align*}
\end{proof}

\begin{claim}\label{claim:q_preserved}
Let $\x \in \mathcal{C}$ be any configuration such that $\vert s \vert \ge \frac{2}{3}n$ and $b \ge 2\sqrt{n \log n}$. Then it holds that $Q > \sqrt{n \log n}$, w.h.p.
\end{claim}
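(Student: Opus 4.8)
\textbf{Proof proposal for Claim~\ref{claim:q_preserved}.}
The plan is to show that the number of undecided nodes at the next round stays above $\sqrt{n\log n}$ w.h.p.\ by first lower-bounding $\Expec{}{Q \cond \X_t = \x}$ and then applying a concentration bound. First I would use Equation~\eqref{eq:explbq}, which gives
\[
\Expec{}{Q \cond \X_t = \x} \geq \frac{n}{3} - \frac{s^2}{2n}.
\]
When $s$ is very close to $n$ (which is allowed here since we only assume $\abs{s}\ge \frac23 n$), this lower bound degrades and may even become negative, so \eqref{eq:explbq} alone is too weak. The key observation is that the hypothesis $b \geq 2\sqrt{n\log n}$, combined with $\abs{s}\ge \frac23 n$, forces a genuinely positive contribution to $Q$ coming from the $2ab/n$ term in~\eqref{exp:q}. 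Concretely, I would go back to the exact expectation $\Expec{}{Q \cond \X_t = \x} = (q^2 + 2ab)/n \geq 2ab/n$, and since $a = b + s \geq b + \frac23 n \geq \frac23 n$ under the assumption $a>b$ (taken w.l.o.g.), this yields
\[
\Expec{}{Q \cond \X_t = \x} \geq \frac{2ab}{n} \geq \frac{2 \cdot \frac23 n \cdot 2\sqrt{n\log n}}{n} = \frac{8}{3}\sqrt{n\log n}.
\]

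Given this lower bound on the expectation of $Q$, the second step is a routine concentration argument. I would apply the additive form of the Chernoff bound (Equation~\eqref{eq:cbafleq} in Appendix~\ref{apx:cbaf}) to the random variable $Q$ with deviation parameter $\lambda = \Theta(\sqrt{n\log n})$, chosen so that $\Expec{}{Q} - \lambda > \sqrt{n\log n}$. Since $\Expec{}{Q} \geq \frac83\sqrt{n\log n}$, taking for instance $\lambda = \sqrt{n\log n}$ leaves a comfortable gap of $\frac53\sqrt{n\log n}$, and the additive Chernoff bound gives a failure probability of order $e^{-2\lambda^2/n} = e^{-2\log n} = n^{-\Theta(1)}$, establishing the claim w.h.p.

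The main obstacle, and the reason the proof cannot simply reuse~\eqref{eq:explbq} as in Claim~\ref{claim:q_lower_bound}, is precisely the regime $s \to n$: one must recognize that the undecided mass is now produced by collisions between the two color classes (the $2ab/n$ term) rather than surviving from a bulk of undecideds, so the hypothesis $b \geq 2\sqrt{n\log n}$ is exactly what keeps $Q$ above the target threshold. Once this is identified the concentration step is entirely standard, and care is only needed to check that the chosen $\lambda$ is compatible with both the expectation lower bound and the additive Chernoff bound's range, which it is.
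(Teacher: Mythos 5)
Your proposal is correct and follows essentially the same route as the paper: both lower-bound $\Expec{}{Q \cond \X_t = \x}$ via the color-collision term (the paper uses only the $ab/n$ contribution from \colB-nodes pulling \colA-nodes, giving $\frac{4}{3}\sqrt{n\log n}$, while you use the full $2ab/n$ term for $\frac{8}{3}\sqrt{n\log n}$) and then conclude with the additive Chernoff bound. The extra care you take in noting why \eqref{eq:explbq} is insufficient when $s$ is close to $n$ is a sensible remark but does not change the argument.
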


\begin{proof}
W.l.o.g. we assume that $a>b$. The number of \colB-colored nodes is at least $2\sqrt{n \log n}$ and each node has probability at least $2/3$ to pick an \colA-colored node. Thus $\Expec{}{Q} > \frac{4}{3}\sqrt{n \log n}$ and we get the claim by a simple application of the additive form of the Chernoff bound.
\end{proof}

The three above claims imply that, after $\bigO(\log n)$ rounds, the process reaches a configuration such that $s \geq \frac{2}{3}n$, $q \geq \sqrt{n \log n}$ and $b \le 2 \sqrt{n \log n}$. The next claim shows that starting from any such configuration the number of undecided nodes decreases exponentially fast. Next, we show that if the process reaches a configuration such that $q \leq 12 \sqrt{n \log n}$ and $b \leq 2\sqrt{n \log n}$ then within few rounds the \threestateprocess\ converges to the configuration where all nodes support \colA.

\begin{claim}\label{claim:q_decrease}
Let $\x \in \mathcal{C}$ be any configuration such that $12 \sqrt{n \log n} \leq q \leq \frac{1}{3}n$  and $b \leq 2\sqrt{n \log n}$. Then it holds that $Q \leq q(1 - \frac{1}{9})$, w.h.p.
\end{claim}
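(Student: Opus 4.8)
The plan is to follow the same expectation-then-concentration template used in Claims~\ref{claim:q_lower_bound} and~\ref{claim:b_decrease}. First I would compute the expected number of undecided nodes at the next round using~\eqref{exp:q}, namely $\Expec{}{Q \,\vert\, \X_t = \x} = \frac{q^2 + 2ab}{n}$, and show that the two hypotheses force this expectation comfortably below the target $q(1 - \frac{1}{9}) = \frac{8}{9}q$. For the first summand, the hypothesis $q \leq \frac{1}{3}n$ gives $\frac{q^2}{n} \leq \frac{q}{3}$. For the cross term, I would bound $a \leq n$ and use $b \leq 2\sqrt{n \log n}$ to get $\frac{2ab}{n} \leq 2b \leq 4\sqrt{n \log n}$; then the lower bound $q \geq 12\sqrt{n \log n}$ yields $\sqrt{n \log n} \leq \frac{q}{12}$ and hence $\frac{2ab}{n} \leq \frac{q}{3}$. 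Combining the two estimates gives $\Expec{}{Q \,\vert\, \X_t = \x} \leq \frac{2}{3}q$.

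Next I would observe that $Q$ is a sum of $n$ independent indicator random variables: each node becomes undecided at round $t+1$ depending only on its own (independent) pull together with its own current state (a coloured node goes undecided iff it pulls the opposite colour, an undecided node stays undecided iff it pulls another undecided node). This independence is exactly what licenses the additive form of the Chernoff bound. Since the gap between the target and the expectation is $\frac{8}{9}q - \frac{2}{3}q = \frac{2}{9}q$, I would apply~\eqref{eq:cbafgeq} with $\lambda = \frac{2}{9}q$:
\[
\Prob{}{Q > \tfrac{8}{9}q} \leq \Prob{}{Q > \Expec{}{Q} + \tfrac{2}{9}q} \leq e^{-2(2q/9)^2/n} = e^{-8q^2/(81 n)}.
\]
Finally I would plug in $q \geq 12\sqrt{n \log n}$, so that $q^2 \geq 144\, n \log n$, to conclude that the right-hand side is at most $e^{-(1152/81)\log n} = n^{-\Theta(1)}$, which proves $Q \leq q(1 - \frac{1}{9})$ w.h.p.

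There is no genuine obstacle here; the only thing to get right is the bookkeeping of the constants, i.e.\ making sure the hypotheses $q \leq \frac{1}{3}n$ and $q \geq 12\sqrt{n \log n}$ (together with $b \leq 2\sqrt{n \log n}$) push $\Expec{}{Q}$ far enough below $\frac{8}{9}q$ that a Chernoff deviation of order $q$ yields a failure probability of the form $e^{-\Theta(q^2/n)} = n^{-\Theta(1)}$. It is worth noting that the hypothesis $q \geq 12\sqrt{n \log n}$ plays a double role in the argument: it both makes the cross term $\frac{2ab}{n}$ negligible relative to $q$ and guarantees that the squared-deviation exponent $q^2/n$ is $\Omega(\log n)$, which is precisely what is needed for the high-probability conclusion.
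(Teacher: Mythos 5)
Your proposal is correct and follows essentially the same route as the paper: the expectation bound $\Expec{}{Q \mid \X_t = \x} \leq \frac{q}{3} + \frac{q}{3} = \frac{2}{3}q$ is derived identically from the two hypotheses, and the only difference is that you close with the additive Chernoff bound (giving failure probability $e^{-\Theta(q^2/n)} = n^{-\Theta(1)}$) where the paper uses the multiplicative form with $\delta = \frac{1}{3}$ (giving the stronger $e^{-\Theta(q)}$); both suffice for the w.h.p.\ conclusion.
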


\begin{proof}
From (\ref{exp:q}), we have:
\begin{align*}
\Expec{}{Q\vert X_t = \x} &= \frac{q^2 + 2ab}{n} \leq \frac{q^2 + 4 n \sqrt{n \log n}}{n}\\ &= q( \frac{q}{n} + \frac{4 \sqrt{n \log n}}{q})\\ 
&\leq q(\frac{1}{3} + \frac{1}{3})\\ &= q(1 - \frac{1}{3}).
\end{align*}
\noindent Thus we apply the multiplicative form of the Chernoff Bound (\ref{eq:cbmfgeq} in Appendix \ref{apx:cbmf}) with $\delta = \frac{1}{3}$
\begin{align*}
\Prob{}{Q \geq (1+\delta)\left( 1- \frac{1}{3}\right)q} &\leq e^{-q(1-\frac{1}{3})\delta^2/3}\\
&\leq e^{-\log n(1-\frac{1}{3})\delta^2/3}\\
&= \frac{1}{n^{\Theta(1)}},
\end{align*}
\noindent and thus we get that, w.h.p.,

\[Q \leq \left(1 + \frac{1}{3}\right)\left( 1- \frac{1}{3}\right)q = q\left( 1- \frac{1}{9}\right). \] 
\end{proof}

\begin{claim}\label{claim:a_wins}
Let $\gamma$ be any positive constant and let $\x \in \mathcal{C}$ be any configuration such that $q \leq \gamma \sqrt{n \log n}$ and $b \leq 2\sqrt{n \log n}$ then the \threestateprocess\ reaches a configuration $\X'$ with $a(\X')=n$ within $\bigO(\log n)$ rounds, w.h.p.
\end{claim}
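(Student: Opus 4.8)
The plan is to show that, from this endgame regime---where all but $\bigO(\sqrt{n\log n})$ nodes already support \colA---the non-\colA nodes are annihilated within a constant number of rounds (hence within $\bigO(\log n)$), w.h.p. The natural idea of tracking the single quantity $q+b$ of non-\colA nodes fails, because this quantity is \emph{not} monotone: by the update table a \colB node that pulls \colA turns undecided rather than \colA, and at the same time every \colA node that pulls \colB manufactures a fresh undecided node, so from~\eqref{exp:q} one gets $\Expec{}{Q+B}\approx 2b$, which can momentarily exceed $q+b$. I would therefore not use a single potential but argue in three explicit stages, each holding w.h.p., that first annihilate the minority color and then the undecided population. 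Throughout I use that, conditioned on $\X_t=\x$, both $B$ and $Q$ are sums of \emph{independent} indicator variables---the new state of each node depends only on its own independent pull---so Chernoff bounds apply.

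In the first stage, from $q\le\gamma\sqrt{n\log n}$ and $b\le 2\sqrt{n\log n}$, the \colB-analogue of~\eqref{exp:a} gives $\Expec{}{B}=b(b+2q)/n=\bigO(\log n)$, so the multiplicative Chernoff bound yields $B=\bigO(\log n)$ w.h.p., while~\eqref{exp:q} gives $\Expec{}{Q}=(q^2+2ab)/n=\bigO(\sqrt{n\log n})$ (using $ab/n\le b$), so the additive Chernoff bound with deviation $\sqrt{n\log n}$ keeps $Q=\bigO(\sqrt{n\log n})$ w.h.p. In the second stage we are at $b=\bigO(\log n)$ and $q=\bigO(\sqrt{n\log n})$; now the feeding term $2ab/n$ is only $\bigO(\log n)$, so $\Expec{}{Q}=\bigO(\log n)$ and again $Q=\bigO(\log n)$ w.h.p., whereas $\Expec{}{B}=b(b+2q)/n=\bigO(\log^{3/2}n/\sqrt n)=n^{-\Theta(1)}$. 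Since $B$ is integer-valued, Markov's inequality now forces $B=0$ w.h.p.; moreover, once $b=0$ it stays $0$ forever, because a \colB node can be created only by an undecided node pulling a \colB node, of which there are none.

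In the third stage we have $b=0$ and $q=\bigO(\log n)$, so each undecided node becomes \colA unless it pulls another undecided node; hence $\Expec{}{Q}=q^2/n=\bigO(\log^2 n/n)=n^{-\Theta(1)}$, and Markov's inequality gives $Q=0$ w.h.p., i.e. $a(\X')=n$. A union bound over the three stages shows that the \threestateprocess\ reaches the monochromatic \colA\ configuration within a constant number of rounds, and hence within $\bigO(\log n)$ rounds, w.h.p.

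The main obstacle is precisely the non-monotonicity noted above: because a minority node must pass through the undecided state on its way to \colA, and each minority node pulled by a majority node spawns an extra undecided node, the count of non-\colA nodes can grow by a constant factor in the very first round. The observation that makes the staged argument go through is that this growth is self-limiting---the minority collapses super-geometrically in a single round, which switches off the feeding term, after which the undecided count decays. The second delicate point is passing from a polylogarithmic count to \emph{exactly} zero: this is not a concentration statement but an integrality one, obtained by driving the relevant expectation below $n^{-\Theta(1)}$ and invoking Markov's inequality rather than a tail bound.
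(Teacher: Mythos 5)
Your proof is correct and follows essentially the same route as the paper's: one round of Chernoff concentration to drive $b$ down to $\bigO(\log n)$ while keeping $q = \bigO(\sqrt{n\log n})$, then a second round in which the minority color vanishes entirely because the expected number of undecided--\colB\ interactions is $o(1)$. The only real divergence is the endgame: the paper, once $b=0$, hands the remaining $\bigO(\sqrt{n\log n})$ undecided nodes to the standard \pull\ rumor-spreading bound and charges $\bigO(\log n)$ further rounds, whereas you track $Q$ through the second round (getting $Q=\bigO(\log n)$) and then kill it with one more application of $\Expec{}{Q}=q^2/n=n^{-\Theta(1)}$ plus Markov. Your version is more self-contained (no citation to rumor spreading) and actually sharper, finishing this phase in $\bigO(1)$ rounds rather than $\bigO(\log n)$; since the claim only asks for $\bigO(\log n)$, both suffice, and your explicit remarks on the non-monotonicity of $q+b$ and on the integrality step ($B=0$ via Markov rather than a tail bound) match the points the paper is implicitly relying on.
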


\begin{proof}
We first show that in one round the number of nodes that support   color \colB\ becomes logarithmic 
and the number of undecided nodes does not increase.
\begin{align*}
\Expec{}{B \, \vert \, X_t = \x}& = b\left(\frac{b + 2 q}{n} \right)\\ &\leq 2\sqrt{n \log n}\left(\frac{2\sqrt{n \log n} + 2 \gamma \sqrt{n \log n}}{n}\right)\\ &= 4(\gamma +1) \log n.
\end{align*}
Using the multiplicative form of the Chernoff bound we  get that $B < 8(\gamma +1) \log n$, w.h.p. We now show that the number of the undecided nodes is still $\bigO(\sqrt{n \log n})$. Indeed
\begin{align*}
\Expec{}{Q \, \vert \, X_t = \x}& = \frac{q^2}{n} + \frac{2ab}{n}\\ &\leq \gamma^2 \log n + 4 \sqrt{n \log n}.
\end{align*}
Then using the additive form of the Chernoff bound we get that $Q \leq 5 \sqrt{n \log n} $ ,w.h.p. In the next round, w.h.p., no undecided node picks a node colored of \colB\ or vice versa, so we can conclude that there are no nodes that still support \colB\   (and it easy to show that there is at least one supporter of \colA\, w.h.p.). From now on, the stochastic process is equivalent to the  classic spreading process via \pull\ operations, and thus, within $\bigO(\log n)$ rounds, all the nodes will support \colA\ w.h.p.
\end{proof}

We are now ready to show the following

\begin{lemma}[Phase $H_6$]\label{lemma:convergence}
Starting from any configuration $\x \in H_6$ with $a>b$, the \threestateprocess\ ends in the monochromatic configuration where $a = n$ within $\bigO(\log n)$ rounds, w.h.p.
\end{lemma}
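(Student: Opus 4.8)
The plan is to follow the trajectory of the triple $(s,q,b)$ and split it into three consecutive sub-phases, each driven by one of the claims established above, finally gluing them by a union bound over the $\bigO(\log n)$ rounds involved. Throughout, observe that $s \geq \frac{2}{3}n$ forces $q = n-a-b \leq \frac{n}{3}$ (since $a+b \geq a-b = s$) and $b \leq \frac{n}{6}$, so the upper bounds on $q$ and $b$ needed by the claims come for free from the bias being large. I would first treat the regime $b \geq 2\sqrt{n\log n}$: here Claim~\ref{claim:b_decrease} shrinks the minority color by a constant factor each round (its hypothesis $b \geq \log n$ holds since $2\sqrt{n\log n}\gg\log n$), while Claim~\ref{claim:s_preserved} keeps $s \geq \frac{2}{3}n$ and Claim~\ref{claim:q_preserved} keeps $q \geq \sqrt{n\log n}$. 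The point to verify is that these three invariants are mutually self-consistent: Claim~\ref{claim:b_decrease} needs $s\geq\frac23 n$, which Claim~\ref{claim:s_preserved} supplies provided $q\geq\sqrt{n\log n}$, which Claim~\ref{claim:q_preserved} in turn supplies provided $b\geq 2\sqrt{n\log n}$ --- exactly the defining condition of this sub-phase. Hence the geometric decay of $b$ persists, and after $\bigO(\log n)$ rounds we reach $b \leq 2\sqrt{n\log n}$ while still $s \geq \frac{2}{3}n$ and $q \geq \sqrt{n\log n}$.

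Next, in the regime $b \leq 2\sqrt{n\log n}$ with $q \geq 12\sqrt{n\log n}$, I would invoke Claim~\ref{claim:q_decrease} (its hypothesis $q \leq \frac{n}{3}$ being automatic) to make $q$ decay geometrically. Claim~\ref{claim:s_preserved} again preserves $s \geq \frac{2}{3}n$ (valid since $q \geq 12\sqrt{n\log n} \geq \sqrt{n\log n}$), and $b$ stays below $2\sqrt{n\log n}$ by a short direct computation: since $\Ex{}{B \cond \X_t = \x} = b(b+2q)/n \leq b\,(2\sqrt{n\log n} + \frac{2}{3}n)/n < b$, an additive Chernoff bound with deviation $\Theta(\sqrt{n\log n})$ gives $B \leq 2\sqrt{n\log n}$ w.h.p. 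After $\bigO(\log n)$ rounds this drives $q$ down to $q \leq 12\sqrt{n\log n}$.

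At that point both $b \leq 2\sqrt{n\log n}$ and $q \leq 12\sqrt{n\log n}$, so Claim~\ref{claim:a_wins} (with $\gamma = 12$) concludes, reaching $a=n$ within $\bigO(\log n)$ further rounds. Summing the three sub-phases gives $\bigO(\log n)$ rounds in total; since each round-by-round claim holds w.h.p. and there are only $\bigO(\log n)$ rounds, a union bound over all rounds yields the statement w.h.p. The fact that $\X'$ has $a>b$ (indeed $a=n$) is preserved throughout, since the bias never drops below $\frac{2}{3}n$ by Claim~\ref{claim:s_preserved}.

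The main obstacle I anticipate is precisely the bookkeeping of these coupled invariants: no single claim controls $s$, $q$, and $b$ simultaneously, so one must carefully check that the hypotheses of each successive claim remain satisfied along the entire trajectory, exploiting the cyclic dependency $b\to s\to q\to b$ described above. Two boundary cases also deserve attention. If the process starts in the near-maximal-bias corner of $H_6$ (where $s > n - 5\sqrt{n\log n}$), then both $q$ and $b$ are already $\bigO(\sqrt{n\log n})$, so the first two sub-phases are essentially vacuous; should $b$ sit a hair above $2\sqrt{n\log n}$ there, a constant number of extra applications of Claim~\ref{claim:b_decrease} (whose hypothesis $s\geq\frac23 n$ holds trivially, as $s$ is within $\bigO(\sqrt{n\log n})$ of $n$) push it below the threshold. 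Symmetrically, if the initial $q$ already lies below $12\sqrt{n\log n}$, the second sub-phase is simply skipped and Claim~\ref{claim:a_wins} is applied directly after the minority color has been thinned.
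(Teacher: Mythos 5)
Your proposal is correct and follows essentially the same route as the paper's proof: the same three-stage decomposition (shrink $b$ via Claim~\ref{claim:b_decrease} while Claims~\ref{claim:s_preserved} and~\ref{claim:q_preserved} maintain the invariants, then shrink $q$ via Claim~\ref{claim:q_decrease}, then finish with Claim~\ref{claim:a_wins}), the same separate treatment of the near-maximal-bias corner of $H_6$, and the same final union bound. The only cosmetic difference is that you keep $b$ below $2\sqrt{n\log n}$ in the second stage by a direct expectation-plus-Chernoff computation, where the paper reuses Claim~\ref{claim:b_decrease} together with a Markov-inequality remark for the case $b<\log n$; both are fine.
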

\begin{proof}
Let us first assume that $s(\x) \geq n - 5\sqrt{n \log n}$ and $q(\x) \leq \sqrt{n \log n}$. This implies that $b(\x) \leq 2 \sqrt{n \log n}$ and thanks to Claim \ref{claim:a_wins} we get that the process ends in the configuration such that $a = n$ within $\bigO(\log n)$ rounds. Otherwise $s(\x) \geq \frac{2}{3}n$ and $q \geq \sqrt{n \log n}$. Then, starting from $\x$, we iteratively apply Claim \ref{claim:b_decrease} together with Claim \ref{claim:s_preserved} and Claim \ref{claim:q_preserved}:   we thus get that the process reaches a configuration $\X'$ such that $s(\X') \geq \frac{2}{3}n$, $q(\X') \geq \sqrt{n \log n}$ and $b(\X') \le 2 \sqrt{n \log n}$ within $\bigO(\log n)$ rounds. Then we iteratively apply Claim \ref{claim:q_decrease} together with Claim  \ref{claim:b_decrease}\footnote{If $b < \log n$ we cannot apply Claim \ref{claim:b_decrease} in order to show that $B$ does not overtake $2 \sqrt{n \log n}$  but we can get the claim with a simple application of Markov inequality.} and Claim \ref{claim:s_preserved} in order to ensure the  process reaches a configuration $\X''$ such that $q(\X'') \leq 12 \sqrt{n \log n}$ and $b(\X'') \leq 2 \sqrt{n \log n}$ within $\bigO(\log n)$ rounds.     We can now apply Claim \ref{claim:a_wins}  and get the  process reaches the monochromatic configuration, w.h.p. Since every step of the iterations holds w.h.p. and the number of steps is $\bigO(\log n)$, we easily obtain the thesis by a simple application of the Union Bound.
\end{proof}

\paragraph*{Phases $H_5$ and $H_7$ (starters)}
We show that if the process is in a configuration where the number of the 
undecided nodes is relatively small with respect to the bias, then in the 
next round the number of the undecided nodes becomes large while the bias 
does not decrease too much, w.h.p. This essentially implies that if the 
process starts in $H_5$ then in the next round the process moves to a 
configuration belonging to $H_4$ or $H_6$ 
(Lemma~\ref{lemma:birthofundecidedsII}), while if it starts in $H_7$ then 
in the next round it moves to $H_4$ or $H_5$ or $H_6$ 
(Lemma~\ref{lemma:birthofundecidedsIII}).

\begin{claim}\label{claim:s_does_not_decrease}
Let $\gamma,\epsilon$ be any two positive constants and $\x \in \mathcal{C}$ any configuration such that $s \geq \gamma \sqrt{n \log n}$. Then it holds that $S \geq (\gamma - \epsilon)\sqrt{n \log n}$, w.h.p.
\end{claim}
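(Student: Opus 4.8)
The plan is to reuse the concentration strategy of Claims~\ref{claim:s_preserved} and~\ref{claim:s_increase}, exploiting the fact that the expected bias never shrinks. Assume w.l.o.g.\ that $a > b$. By~\eqref{exp:s} we have $\Ex{}{S \, \vert \, \X_t = \x} = s\left(1 + \frac{q}{n}\right) \geq s \geq \gamma\sqrt{n\log n}$, so in expectation $S$ already sits at or above the target value; the only thing to rule out is a large downward fluctuation of $S$ below its mean.

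The key structural point that makes Chernoff applicable is that, once we condition on $\X_t = \x$, the state of each node at round $t+1$ is a deterministic function of the single (independently chosen) neighbor it pulls. Hence both $A$ and $B$ are sums of $n$ mutually independent indicator variables, so I can apply the additive (Hoeffding) form of the Chernoff bound to each of them separately. Note that I only need a one-sided tail on each variable --- a lower bound for $A$ and an upper bound for $B$ --- so the joint dependence between $A$ and $B$ is irrelevant. Concretely, setting $\lambda = \frac{\epsilon}{2}\sqrt{n\log n}$ yields
\[
\Prob{}{A \leq \Ex{}{A} - \lambda} \leq e^{-2\lambda^2/n} = e^{-\epsilon^2(\log n)/2} = n^{-\Theta(1)},
\]
and, by the symmetric estimate, $\Prob{}{B \geq \Ex{}{B} + \lambda} \leq n^{-\Theta(1)}$.

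A union bound then shows that, w.h.p., neither bad event occurs, and on this event
\[
S = A - B > \Ex{}{A} - \Ex{}{B} - 2\lambda = \Ex{}{S \, \vert \, \X_t = \x} - \epsilon\sqrt{n\log n} \geq s - \epsilon\sqrt{n\log n} \geq (\gamma - \epsilon)\sqrt{n\log n},
\]
which is exactly the claim. I do not expect any real obstacle here: the whole argument is a routine two-sided Chernoff estimate, structurally identical to the proof of Claim~\ref{claim:s_preserved}. The only points deserving a line of care are the independence justification needed to invoke Chernoff separately on $A$ and $B$, and the calibration of $\lambda$ so that $2\lambda = \epsilon\sqrt{n\log n}$ precisely absorbs the gap between $\gamma\sqrt{n\log n}$ and $(\gamma-\epsilon)\sqrt{n\log n}$ while keeping each failure probability polynomially small (hence w.h.p.\ in the sense of the paper).
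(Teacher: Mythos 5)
Your proof is correct and follows essentially the same route as the paper's: the additive Chernoff bound applied separately to $A$ and $B$ with $\lambda = \frac{\epsilon}{2}\sqrt{n\log n}$, combined via a union bound and the identity $\Ex{}{S \,\vert\, \X_t = \x} = s\left(1+\frac{q}{n}\right) \geq s$. Your explicit justification that $A$ and $B$ are each sums of $n$ independent indicators (one per node, determined by that node's independent pull) is a small but welcome addition that the paper leaves implicit.
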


\begin{proof}
Since $S = A - B$,   we need to  provide two   bounds to the values of $A$ and $B$, respectively. We use the additive form of the Chernoff bound ((\ref{eq:cbafleq}) and (\ref{eq:cbafgeq}) in Appendix \ref{apx:cbaf}) with $\lambda = \epsilon\frac{\sqrt{n \log n}}{2})$. We have

\[\Prob{}{A \leq  \Expec{}{A \vert X_t = x} - \lambda} \leq e^{-2\lambda^2/n} = e^{-\epsilon^2\log n/2} = \frac{1}{n^{\Theta(1)}},\]

\noindent and

\[\Prob{}{B \geq  \Expec{}{B \vert X_t = x} + \lambda} \leq e^{-2\lambda^2/n} = e^{-\epsilon^2\log n/2} = \frac{1}{n^{\Theta(1)}}.\]

\noindent Then it holds that, w.h.p.,
\begin{align*}
S &\geq  \Expec{}{A \vert X_t = x} - \lambda - \Expec{}{B \vert X_t = x} - \lambda\\
&= \Expec{}{A - B \vert X_t = x} - 2 \lambda\\
&= \Expec{}{S \vert X_t = x} - 2 \lambda\\
&= s(1 + \frac{q}{n}) - \epsilon\sqrt{n \log n}\\
&\geq s - \epsilon\sqrt{n \log n}\\
&\geq\gamma\sqrt{n \log n} - \epsilon\sqrt{n \log n}\\
&=(\gamma-\epsilon)\sqrt{n \log n}.
\end{align*}
\end{proof}

The above claim and Claim \ref{claim:q_lower_bound} immediately imply the following

\begin{lemma}[Phase $H_5$]\label{lemma:birthofundecidedsII}
Starting from any configuration $\x \in H_5$ with $a>b$, the \threestateprocess\ reaches a configuration $\X' \in (H_4 \cup H_6)$ with $a>b$ in one round, w.h.p.
\end{lemma}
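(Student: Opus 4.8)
The plan is to derive the statement by combining the two preceding claims, which between them control the two coordinates $q$ and $s$ that determine membership in the target regions. Fix a configuration $\x \in H_5$, so that $\gamma\sqrt{n \log n} \leq s(\x) \leq \frac{2}{3}n$, $q(\x) \leq \frac{1}{18}n$, and $a(\x) > b(\x)$. First I would invoke Claim~\ref{claim:q_lower_bound}: since $|s(\x)| \leq \frac{2}{3}n$, at the next round the number of undecided nodes satisfies $Q \geq \frac{1}{18}n$, w.h.p. Independently, since $s(\x) \geq \gamma\sqrt{n \log n}$, Claim~\ref{claim:s_does_not_decrease} (applied with a suitably small constant $\epsilon$) gives $S \geq (\gamma - \epsilon)\sqrt{n \log n} > 0$, w.h.p.; in particular the ordering $a(\X') > b(\X')$ is preserved. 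A union bound over these two high-probability events shows that, w.h.p., the configuration $\X'$ reached in one round has simultaneously a large number of undecided nodes and a positive, essentially non-shrinking bias.

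It then remains only to read off the region of $\X'$ through a case distinction on the magnitude of $S$. If $S \leq \frac{2}{3}n$, then $\X'$ satisfies $S \geq (\gamma-\epsilon)\sqrt{n \log n}$, $S \leq \frac{2}{3}n$, and $Q \geq \frac{1}{18}n$, which is precisely the defining range of $H_4$. If instead $S > \frac{2}{3}n$, then the bias of $\X'$ already exceeds $\frac{2}{3}n$; since $Q \geq \frac{1}{18}n \gg \sqrt{n \log n}$, the configuration violates the requirement $q \leq \sqrt{n \log n}$ of $H_7$, and therefore $\X' \in H_6$. In either case $\X' \in (H_4 \cup H_6)$ with $a > b$, as claimed.

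The delicate point, and the main obstacle, is the lower threshold on the bias: Claim~\ref{claim:s_does_not_decrease} only certifies $S \geq (\gamma - \epsilon)\sqrt{n \log n}$, which sits marginally below the $H_4$ boundary $\gamma\sqrt{n \log n}$. For a borderline configuration of $H_5$ with $q(\x)$ close to $0$ and $s(\x)$ close to $\gamma\sqrt{n \log n}$ one has $\Ex{}{S \,\vert\, \X_t = \x} = s(1 + q/n) \approx s$, so the $\Theta(\sqrt n)$ fluctuations of $S$ could in principle drag it just under $\gamma\sqrt{n \log n}$. I would absorb this into the slack of the generic constant defining the regions: because $\Ex{}{S \,\vert\, \X_t = \x} = s(1 + q/n) \geq s$ the bias never decreases in expectation, so using the constant $\gamma - \epsilon$ in place of $\gamma$ in the threshold is harmless for the overall $\bigO(\log n)$ bound. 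Equivalently, any such borderline outcome lands in $H_2$, from which Lemma~\ref{lemma:symmetrybreaking} returns the process to $H_4$ within $\bigO(\log n)$ further rounds. Away from this narrow boundary the argument is immediate, as the two cited claims already fix both coordinates with high probability.
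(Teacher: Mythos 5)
Your proof is correct and follows essentially the same route as the paper, which derives this lemma directly from Claim~\ref{claim:s_does_not_decrease} and Claim~\ref{claim:q_lower_bound} (the paper gives no further detail, stating that the two claims ``immediately imply'' the lemma). Your additional discussion of the $(\gamma-\epsilon)$ borderline is a genuine subtlety the paper glosses over, and your resolution --- absorbing the $\epsilon$ into the generic constant defining the region boundaries, which is legitimate since both the claims and Theorem~\ref{thm:mainmajority} hold for arbitrary positive constants --- is the right fix.
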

%\proof Thanks to Claim \ref{lemma:birthofundecidedsII} and Claim \ref{claim:q_lower_bound} we get that the %process in one round reaches a configuration $\X'$ such that $s(\X') \geq \gamma'\sqrt{n \log n}$ for any $%\gamma' < \gamma$ and $q(\X') \geq \frac{1}{18}n$.

Concerning phase $H_7$, we have
\begin{lemma}[Phase $H_7$]\label{lemma:birthofundecidedsIII}
Starting from any configuration $\x \in H_7$ with $a>b$, the 
\threestateprocess\ reaches a configuration $\X' \in (H_4 \cup H_5 \cup H_6)$ with $a>b$ in one round, w.h.p.
\end{lemma}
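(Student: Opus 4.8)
The plan is to reduce the statement to two one-round concentration facts: that the number of undecided nodes $Q$ exceeds $\sqrt{n \log n}$, and that the bias $S$ stays above $\gamma\sqrt{n \log n}$ (in particular positive). The first fact forces the next configuration out of $H_7$, the second keeps it in the high-bias region. To make this precise, I would first record the inclusion $\{s \ge \gamma\sqrt{n \log n}\} \setminus H_7 \subseteq H_4 \cup H_5 \cup H_6$: if $s(\X') \le \frac{2}{3}n$ then $\X'$ lies in the strip $H_4 \cup H_5$ (the split between $H_4$ and $H_5$ only records whether $q \ge n/18$ or $q \le n/18$, which is immaterial here), whereas if $s(\X') \ge \frac{2}{3}n$ then, being outside $H_7$, it lies in $H_6$ by definition. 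Hence it suffices to prove that, w.h.p., $s(\X') \ge \gamma\sqrt{n \log n}$ and $\X' \notin H_7$.

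The escape from $H_7$ is where the precise right endpoint of $H_7$ is used. Since $\x \in H_7$ satisfies $s \le n - 5\sqrt{n \log n}$ and $q \le \sqrt{n \log n}$, the minority population obeys
\[
b = \frac{n - q - s}{2} \ge \frac{n - \sqrt{n \log n} - \left(n - 5\sqrt{n \log n}\right)}{2} = 2\sqrt{n \log n}.
\]
Together with $s \ge \frac{2}{3}n$, this is exactly the hypothesis of Claim~\ref{claim:q_preserved}, which therefore yields $Q > \sqrt{n \log n}$ w.h.p. Since membership in $H_7$ requires $q \le \sqrt{n \log n}$, this single estimate already guarantees $\X' \notin H_7$, irrespective of the value attained by $S$.

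For the bias I would start from $\Expec{}{S \cond \X_t = \x} = s\left(1 + \frac{q}{n}\right) \ge s \ge \frac{2}{3}n$, which is far larger than the target threshold $\gamma\sqrt{n \log n}$. Invoking the concentration argument of Claim~\ref{claim:s_does_not_decrease} with the constants $2\gamma$ and $\gamma$ playing the roles of $\gamma$ and $\epsilon$ (legitimate because $s \ge \frac{2}{3}n \ge 2\gamma\sqrt{n \log n}$ for all large $n$) gives $S \ge \gamma\sqrt{n \log n}$ w.h.p.; in particular $S > 0$, i.e. $a(\X') > b(\X')$. A union bound over the two w.h.p.\ events then places $\X'$ in $H_4 \cup H_5 \cup H_6$ with $a > b$, which is the claim.

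I expect the only genuinely delicate point to be the bookkeeping tying the constant $5\sqrt{n \log n}$ in the definition of $H_7$ to the hypothesis $b \ge 2\sqrt{n \log n}$ of Claim~\ref{claim:q_preserved}; this is precisely why $H_7$ is cut off at $s = n - 5\sqrt{n \log n}$, the thinner slab $s > n - 5\sqrt{n \log n}$ (with small $q$) being handed to $H_6$, where the minority is already so small that the spreading-type argument of Claim~\ref{claim:a_wins}, rather than undecided-creation, is the appropriate tool. Everything else is a routine application of the two cited claims together with a union bound.
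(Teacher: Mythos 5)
Your proposal is correct and follows essentially the same route as the paper: it derives $b \geq 2\sqrt{n\log n}$ from the definition of $H_7$, applies Claim~\ref{claim:q_preserved} to force $Q > \sqrt{n\log n}$ (hence exit from $H_7$), and applies Claim~\ref{claim:s_does_not_decrease} to keep the bias above $\gamma\sqrt{n\log n}$ (hence avoidance of $H_1,H_2,H_3$ and preservation of $a>b$). The only difference is that you spell out the partition bookkeeping that the paper leaves implicit.
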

\begin{proof}
Note that Claim \ref{claim:s_does_not_decrease} implies that in the next round the process does not enter in $H_1$, $H_2$ or $H_3$ w.h.p. The Lemma assumption $\x \in H_7$, i.e. $s \leq n - 5 \sqrt{n \log n}$ and $q \leq \sqrt{n \log n}$, implies that  $b \geq 2 \sqrt{n \log n}$ and thus we can apply Claim \ref{claim:q_preserved} and get that the process leaves $H_7$ because  of the growth of the undecided nodes.
\end{proof}

%%%%%%%%%%%%%%%%%

\section{On the tightness of Theorem~\ref{thm:mainmajority}}\label{proof:tightness}
In this section we sketch a proof of the almost-tightness result stated in 
Theorem~\ref{thm:mainmajority}.

\begin{claim}\label{claim:apxminbias} An initial configuration exists with $\abs{s} = \Theta(\sqrt{n})$ such that the process converges to the minority color with constant probability.
\end{claim}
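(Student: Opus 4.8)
\noindent\textit{Proof plan.} The plan is to exhibit a concrete configuration $\x$ with $s(\x)=\sqrt{n}$ (so $\colA$ is the initial majority by a bias of magnitude $\Theta(\sqrt n)$) and $q(\x)=\frac14 n$, and to show that from $\x$ the \threestateprocess\ converges to the all-$\colB$ configuration ($b=n$) with probability $\Omega(1)$. The strategy splits into three steps: (i) a single ``flip'' round in which, with constant probability, the bias jumps to $S\le -C\sqrt n$ for a large constant $C$ while $q$ stays in the window $[\frac{n}{18},\frac{n}{2}]$; (ii) a ``lock-in'' phase in which $\colB$'s lead $-s$ grows geometrically, from $C\sqrt n$ up to $\gamma\sqrt{n\log n}$, \emph{without ever flipping back} to $\colA$; and (iii) an appeal to Theorem~\ref{thm:mainmajority} with the two colors exchanged, which then drives the process to $b=n$ w.h.p.

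For step (i), I would invoke the same Berry--Esseen approximation used in the proof of Claim~\ref{unstable+goodbias}: since $q(\x)=\Theta(n)$, the one-round bias $S$ is approximately Gaussian with mean $\Ex{}{S\mid\X_t=\x}=s(\x)(1+q/n)=\Theta(\sqrt n)$ (by~\eqref{exp:s}) and variance $\Theta(n)$. Hence, for any fixed constant $C$, the event $\{S\le -C\sqrt n\}$ has probability at least some constant $p(C)>0$, and by Claim~\ref{claim:q_bounded} the new number of undecided nodes lies in $[\frac{n}{18},\frac{n}{2}]$ except with probability $e^{-\Theta(n)}$. Thus after one round we reach, with probability $\Omega(1)$, a configuration in which $\colB$ leads by at least $C\sqrt n$ and $q$ is still a constant fraction of $n$. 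Step (iii) is immediate once step (ii) succeeds: a configuration with $\abs{s}\ge\gamma\sqrt{n\log n}$ satisfies the hypothesis of Theorem~\ref{thm:mainmajority}, which, applied with $\colB$ as the majority color, yields convergence to $b=n$ within $\bigO(\log n)$ rounds w.h.p.

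The delicate part, and the step I expect to be the main obstacle, is (ii): pinning the \emph{sign} of the bias while it grows. Writing $f_t=-s(\X_t)$ for $\colB$'s signed lead, the second item of Claim~\ref{unstable+goodbias} (with the two colors swapped) gives, as long as $q\in[\frac{n}{18},\frac{n}{2}]$ and $C\sqrt n\le f_t<\gamma\sqrt{n\log n}$, that $f_{t+1}\ge(1+\epsilon)f_t$ except with probability at most $e^{-c_2 f_t^2/n}$. Crucially, this single-round sign-preservation bound is only $e^{-\Theta(C^2)}$ at the start of the phase, \emph{not} high probability; what rescues the argument is that $f_t$ grows geometrically, so the per-round failure probabilities $e^{-c_2 f_t^2/n}\le e^{-c_2 C^2(1+\epsilon)^{2t}}$ form a rapidly (doubly-exponentially) decaying sequence whose total sum is dominated by its first term, and is therefore at most $2e^{-c_2C^2}$. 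Choosing the constant $C$ large enough makes this sum smaller than $\tfrac12$; on the complementary event the lead increases by a factor $1+\epsilon$ every round, so it never changes sign and reaches $\gamma\sqrt{n\log n}$ within $\bigO(\log n)$ rounds. A union bound over these $\bigO(\log n)$ rounds, together with Claim~\ref{claim:q_bounded}, keeps $q$ inside the required window throughout.

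Combining the three steps, the initial minority color $\colB$ wins with probability at least $p(C)\cdot\tfrac12\cdot(1-o(1))=\Omega(1)$. This argument also explains why only a \emph{constant} success probability can be claimed: the whole ``decision'' is taken while $\abs{s}=\Theta(\sqrt n)$ is comparable to its own standard deviation $\Theta(\sqrt n)$, so the exponentially small sign-preservation of a single step cannot be amplified to a w.h.p. statement --- which is exactly the phenomenon that disappears once $\abs{s}$ reaches $\Omega(\sqrt{n\log n})$ and Claim~\ref{claim:s_increase} makes the drift dominate the variance.
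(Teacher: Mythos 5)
Your argument is correct, but it follows a genuinely different route from the paper's for the second half of the claim. The paper picks the configuration $q=n/3$, $a=n/3+\sqrt n$, $b=n/3-\sqrt n$, decomposes $A=A^q+A^a$, $B=B^q+B^b$, and uses reverse Chernoff bounds to show that $\Prob{}{S\le 0}=\Omega(1)$ after a \emph{single} round; it then finishes with an (unspelled-out) ``simple symmetry argument'', i.e.\ that from a configuration with $b\ge a$ the color \colB\ wins with probability at least $1/2$ --- a monotonicity/domination statement about the whole future of the process that is intuitively clear but is not proved in the paper. Your step~(i) is the same anti-concentration idea (Berry--Esseen in place of reverse Chernoff, and targeting $S\le -C\sqrt n$ rather than $S\le 0$), but your steps~(ii)--(iii) replace the symmetry appeal by an explicit ``lock-in'': the sign-preserving form of the second item of Claim~\ref{unstable+goodbias} (which, as you correctly note, one must extract from its proof, since the literal statement only controls $\abs{S}$), the doubly-exponentially decaying sum $\sum_t e^{-c_2C^2(1+\epsilon)^{2t}}\le 2e^{-c_2C^2}<\tfrac12$ for $C$ large, a union bound with Claim~\ref{claim:q_bounded} to keep $q\in[\tfrac{n}{18},\tfrac n2]$, and finally Theorem~\ref{thm:mainmajority} with the colors exchanged. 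What you lose is brevity and the need for a large constant $C$ in the flip round; what you gain is a fully self-contained proof that does not rest on an unproved coupling/monotonicity claim about which color wins from a tied-or-worse configuration. Both approaches are sound; yours is arguably the more rigorous of the two as written.
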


\begin{proof}
Let us consider the configuration $\x$ such that $q(\x) = n/3, a(\x) = n/3 + \sqrt{n}$ and $b(\x) = n/3 - \sqrt{n}$. We prove that in one round there is constant probability that the bias becomes zero or negative. After that, by a simple symmetry argument, we get the claim.

Like in Claim~\ref{unstable+goodbias} we define $A^q,B^q,Q^q$ as the random variables counting the nodes that were undecided in the configuration $\x$ and that, in the next round, 
get    colored with  \colA, \colB, or undecided, respectively. Similarly $A^a$ ($B^b$) counts the nodes that support  color \colA\ (\colB) in the configuration $\x$ and that, in the next round, still support the same color.

Since it is impossible that a node supporting a color  gets the other color in the next round, it holds that $A = A^q + A^a$ and $B = B^q + B^b$. Moreover, observe  that, among these random variables, only $A^q$ and $B^q$ are mutually dependent. Thus, for any positive constant $\delta$, it holds:  
\begin{align*}
&\Prob{}{S \leq 0} = \Prob{}{B \geq A}\\
&= \Prob{}{B^q + B^b \geq A^q + A^a}\\
&\geq \Prob{}{B^q \geq A^q \, , \, B^b \geq n/3 + \delta\sqrt{n}  \, , \,  A^a \leq n/3 + \delta\sqrt{n}}\\
&= \Prob{}{B^q \geq A^q} \cdot \Prob{}{B^b \geq n/3 + \delta\sqrt{n}}\\ &\,\,\,\,\cdot \Prob{}{A^a \leq n/3 + \delta\sqrt{n}}.
\end{align*}
By applying the reverse form of Chernoff bound (see~(\ref{eq:rcbgeq})), we get that $\Prob{}{B^b \geq n/3 + \delta\sqrt{n}}$ is at least some constant, whereas the fact that $\Prob{}{A^a \leq n/3 + \delta\sqrt{n}}$ is at least some constant is an immediate consequence of the additive form of Chernoff Bound (see~\eqref{eq:cbafgeq}). 
Thus we need to show that   $\Prob{}{B^q \geq A^q}$ is at least some constant, as well.
 Note that the distribution $B^q$ conditioned to the event $Q^q = k$ is a binomial distribution with parameters $(\frac{n}{3}-k,\frac{b(\x)}{a(\x)+b(\x)})$ and with expectation \[ \Expec{}{B^q \, \vert \, Q^q = k, X_t = \x} = \frac 12 \, \left(\frac{n}{3}-k\right) - \left(\frac{n}{3}-k\right) /  \left( \frac 23 \,  \sqrt n \right)\, . \] 
 Thus, setting $\alpha =  (n/3 -k) / (\frac 23 \sqrt n)$ and $\beta = 1/(\frac13 \sqrt n -1)$ we get
\allowdisplaybreaks 
\begin{align}
&\Prob{}{B^q \geq A^q} = \sum_{k=1}^{n/3} \Prob{}{B^q \geq A^q \, \vert \, Q^q = k}\Prob{}{Q^q = k}\nonumber\\
&> \sum_{k=n/4}^{n/2} \Prob{}{B^q \geq A^q \, \vert \, Q^q = k}\Prob{}{Q^q = k}\nonumber\\
&= \sum_{k=n/4}^{n/2} \Prob{}{B^q \geq (\frac{n}{3} - k)/2) \, \vert \, Q^q = k}\Prob{}{Q^q = k}\nonumber\\
&= \sum_{k=n/4}^{n/2} \Prob{}{B^q \geq  \Expec{}{B^q\, \vert \, Q^q = k} + \alpha\, \vert \, Q^q = k}\Prob{}{Q^q = k}\nonumber\\
&= \sum_{k=n/4}^{n/2} \Prob{}{B^q \geq  \Expec{}{B^q\, \vert \, Q^q = k  }  (1+\beta) \, \vert \, Q^q = k}\Prob{}{Q^q = k}\nonumber\\
&\geq \sum_{k=n/4}^{n/2} \exp\left( -9 \beta^2 \Expec{}{B^q\, \vert \, Q^q = k  } \right)\Prob{}{Q^q = k}\label{appliedreverse}\\
&= \sum_{k=n/4}^{n/2} \exp\left(\frac{-9}{(\frac 13\sqrt{n}-1)^2} \cdot \Expec{}{B^q\, \vert \, Q^q = k}\right)\Prob{}{Q^q = k}\nonumber\\
&\geq \sum_{k=n/4}^{n/2} \exp\left(\frac{-9n}{(\frac 13\sqrt{n}-1)^2} \right)\Prob{}{Q^q = k}\nonumber\\
&= \exp\left(\frac{-9n}{(\frac 13\sqrt{n}-1)^2} \right) \sum_{k=n/4}^{n/2} \Prob{}{Q^q = k}\nonumber\\
&= \Theta(1) \sum_{k=n/4}^{n/2} \Prob{}{Q^q = k}\nonumber\\
&=\Theta(1) (1 - e^{-\Theta(n)}), \label{concentrateq}
\end{align}
where in (\ref{appliedreverse}) we used the reverse form of Chernoff bound (see~(\ref{eq:rcbgeq})) and in (\ref{concentrateq}) we used that $\Expec{}{Q^q} \approx \frac{n}{3}$ and the additive form of Chernoff bound. 
\end{proof}

%%%%%%%%%%%%%%%%%
%%%%%%%%%%%%%%%%%
%%%%%%%%%%%%%%%%%
%%%%%%%%%%%%%%%%%

\section{Conclusions}
We provided a full analysis of the \threestate\ in the parallel \pull\ 
model for the binary case showing that the resulting process converges quickly,
regardless of the initial configuration. Besides giving   tight
bounds on the convergence time, our set of results well-clarifies the main 
aspects of the process evolution and the crucial role of the undecided nodes 
in each phase of this evolution.

 An interesting open question   is  that of considering the same process in the 
multi-color case and to derive bounds on the time required to break symmetry 
from balanced configurations, as well. 

Finally, we believe that our analysis can be suitably adapted in order to show that
the \threestate\ efficiently stabilizes to a valid consensus ``regime'' \footnote{According to the notion of \emph{stabilizing almost-consensus protocol}
given in~\cite{AAE07,becchetti2016stabilizing}.} even in the presence
of a dynamic adversary that can change the state of a subset of nodes of size 
$o(\sqrt n)$ provided that the initial number of colored nodes is $\Omega(\sqrt n)$.

\bibliographystyle{plain}
\bibliography{ussb}

\onecolumn
\newpage
\appendix
\begin{center}
\Huge{\textbf{Appendix}}
\end{center}

\section{Useful inequalities}
\subsection{Chernoff Bound moltiplicative form} \label{apx:cbmf}

Let $X_1, \dots, X_n$ be independent 0-1 random variables. Let $X = \sum_{i=1}^n X_i$ and $\mu \leq \Expec{}{X} \leq \mu'$. Then, for any $0<\delta<1$ the following Chernoff bounds hold:
\begin{equation}\Prob{}{X \geq (1+ \delta)\mu} \leq e^{-\mu \delta^2 /3}.\label{eq:cbmfgeq}
\end{equation}
\begin{equation}\Prob{}{X \leq (1- \delta)\mu'} \leq e^{-\mu' \delta^2 /2}.\label{eq:cbmfleq}
\end{equation}
\subsection{Chernoff Bound additive form} \label{apx:cbaf}

Let $X_1, \dots, X_n$ be independent 0-1 random variables. Let $X = \sum_{i=1}^n X_i$ and $\mu = \Expec{}{X}$. Then the following Chernoff bounds hold:

\noindent for any $0<\lambda<n-\mu$,
\begin{equation}\Prob{}{X \leq \mu - \lambda} \leq e^{-2\lambda^2/n},\label{eq:cbafleq}
\end{equation}
for any $0<\lambda<\mu$,
\begin{equation}\Prob{}{X \geq \mu + \lambda} \leq e^{-2\lambda^2/n}.\label{eq:cbafgeq}
\end{equation}
\subsection{Reverse Chernoff Bound} \label{apx:rcb}

Let $X_1, \dots, X_n$ be independent 0-1 random variables, $X = \sum_{i=1}^n X_i$, $\mu = \Expec{}{X}$ and $\delta \in (0,1/2]$. If  $\mu \leq \frac{1}{2}n$ and $\delta^2 \mu \geq 3$ then the following bounds hold:
\begin{equation}\Prob{}{X \geq (1+ \delta)\mu} \geq e^{-9\delta^2\mu},\label{eq:rcbgeq}
\end{equation}
\begin{equation}\Prob{}{X \leq (1- \delta)\mu} \geq e^{-9\delta^2\mu}.\label{eq:rcbleq}
\end{equation}

\end{document}